\documentclass[11pt,a4paper]{amsart}
\usepackage[utf8]{inputenc}
\usepackage[english]{babel}
\usepackage[colorlinks,linkcolor=blue,citecolor=blue,urlcolor=red]{hyperref}
\usepackage{amsmath}
\usepackage{amssymb}
\usepackage{amsfonts}
\usepackage{booktabs}
\usepackage{microtype}
\usepackage{geometry}
\usepackage{braket} 
\newcommand{\pacs}[1]{\thanks{\textit{PACS numbers:} #1}}
\numberwithin{equation}{section}
\newtheorem*{theorem}{Theorem}
\newtheorem*{lemma}{Lemma}
\theoremstyle{definition}
\newtheorem*{remark}{Remark}

\title{The Infinitesimal Structure of Quantum Information}
\author{Luca Barbieri-Viale}
\dedicatory{This paper is dedicated to Ulaç Aviel Rabbieri (my twin identity or quantum avatar)}
\address{Nālandā Mahāvihāra, Nalanda District, Bihar, 803111, India.}
\curraddr{Dipartimento di Matematica ``F. Enriques", Universit{\`a} degli Studi di Milano\\ Via C. Saldini, 50\\ I-20133 Milano\\ Italy}
\urladdr[https://sites.unimi.it/barbieri/]{https://www.uar.one/}
\email[luca.barbieri-viale@unimi.it]{info@uar.one}
\date{\today} 
\keywords{Algebraic Geometry, Higher-Order Nilpotent Algebras, Fubini-Study Metric, Bloch Sphere, Quantum Coherence.}
\subjclass [2020]{14A15, 81P16, 81R05, 53C80}
\pacs{03.67.Lx, 03.65.Vf, 02.10.Hp.}
\thanks{\textit{Special thanks.} The author gratefully acknowledges the invaluable assistance of Gemini Pro.}

\begin{document}

\begin{abstract}

This paper establishes a rigorous geometric framework for quantum state spaces via smooth embeddings into higher-order dual number algebras $\mathcal{Q}_N$, representing quantum states as non-reduced scheme-theoretic points. We demonstrate that non-linear Liouville-von Neumann dynamics map to flat algebraic flows, proving that the non-commutative vector flow is induced by the embedding. As $N \to \infty$, this family converges to a Cauchy-complete power series ring $\mathcal{Q}_\infty$, where non-Archimedean completion linearizes the phase space, aligning the Fubini-Study geometry with the classical Fisher-Rao manifold.

\end{abstract}

\maketitle

\newpage

\section{Introduction and Algebraic Setup}
This paper establishes a foundational mathematical bridge between quantum information theory and higher-order nilpotent algebras. By leveraging the gauge-invariant density operator formalism, we show that these real algebras—canonically associated with Grothendieck's non-reduced infinitesimal spectra—provide a complete, non-singular, and rigidly isometric symbolic container for the true physical degrees of freedom of a quantum state space, demonstrating that every quantum state can be faithfully regarded as a non-reduced multiple point. Geometrically, this architecture maps the complex projective ray space $\mathbb{P}(\mathcal{H}^N)$ by factoring the state through the canonical Mostow-Takahashi embedding into the self-adjoint algebra $\mathfrak{su}(N)$, whose coordinates are subsequently pulled back and injected into the monomial basis of the polynomial quotient ring $\mathcal{Q}_N \equiv \mathbb{R}[\varepsilon]/(\varepsilon^{N^2-1})$ for $N \ge 2$. Asymptotically, as the dimensional complexity of the higher-dimensional quantum system scales toward the infinite-dimensional limit ($N \to \infty$), this family of truncated algebras structurally converges via a projective inverse limit to the Cauchy-complete ring of formal power series $\mathcal{Q}_\infty \cong \mathbb{R}[\![\varepsilon]\!]$. This non-Archimedean ultrametric completion systematically linearizes the infinite-dimensional quantum phase space, signaling a metric transition where the curved Fubini-Study geometry smoothly stabilizes onto the classical statistical manifold of Fisher-Rao probability distributions.

\subsection{Infinitesimal structures}
In classical Cartesian geometry, a point is a static, zero-dimensional entity lacking internal structure. Conversely, the revolution of modern algebraic geometry spearheaded by Alexander Grothendieck introduces the language of schemes, which are capable of encapsulating rich infinitesimal structures and directional data directly within geometric points via nilpotent elements (see \cite{ega1} and \cite{eisenbud} for a reasonable account). 
It is well known that Clifford's \textit{First-Order Dual Numbers} are given by the quotient ring:
\begin{equation}
    \mathcal{D} = \mathbb{R}[\varepsilon] / (\varepsilon^2),
\end{equation}
defining the non-reduced commutative algebra of order two over the real field $\mathbb{R}$. Here $\varepsilon \neq 0$ and $\varepsilon^2 = 0$ and each dual number $\xi\in  \mathcal{D}$ can be written as $\xi = x + y\varepsilon$ for unique $x, y\in \mathbb{R}$.  
Geometrically, any element $\xi$ of this algebra models a non-reduced \textit{Double Point}: a Cartesian point ($x$) along with an infinitesimal quantity ($y\varepsilon$). For the physical interpretation of first-order infinitesimals as single-mode fermionic state configurations, see Appendix A.

Increasing the order of nilpotency we obtain the commutative real algebra of \textit{Second-Order Dual Numbers}, which is defined as the non-reduced algebra of order three over the real field $\mathbb{R}$ through the quotient ring:
\begin{equation}
    \mathcal{T} = \mathbb{R}[\varepsilon] / (\varepsilon^3),
\end{equation}
where $\varepsilon \neq 0$ and $\varepsilon^2 \neq 0$, but $\varepsilon^3 = 0$. Every element $\xi \in \mathcal{T}$ can now be uniquely expanded as a linear combination of the monomial basis $\{1, \varepsilon, \varepsilon^2\}$:
\begin{equation}
    \xi = x + y\varepsilon + z\varepsilon^2 \quad \text{with} \quad x, y, z \in \mathbb{R}.
\end{equation}
The underlying structure of the $\mathbb{R}$-algebra $\mathcal{T}$ is a three-dimensional real vector space, occasionally referred to in literature as trinomial dual numbers. Geometrically, any element $\xi$ of this algebra models a non-reduced \textit{Triple Point}: a structure that does not merely exist at a classical scalar position ($x$), but extends along an infinitesimal tangent direction ($y\varepsilon$) and carries an intrinsic second-order jet or algebraic acceleration ($z\varepsilon^2$) \cite{eisenbud}. 

Recall that the ring $\mathcal{D}$ is local with a nilpotent maximal ideal $\mathfrak{m} = (\varepsilon)$ and its residue field $\mathcal{D}/\mathfrak{m}$ is identified with $\mathbb{R}$ by sending $\xi$ to its classical real scalar $x\in \mathbb{R}$. These latter properties hold true for $\mathcal{T}$ as well: because any element of the form $y\varepsilon + z\varepsilon^2$ is nilpotent, it belongs to every prime ideal of the ring. The unique prime (and maximal) ideal of this algebra is generated by the nilpotent generator: $\mathfrak{m} = (\varepsilon) = \{y\varepsilon + z\varepsilon^2 \mid y,z \in \mathbb{R}\}$. The ring $\mathcal{T}$ is then local, its residue field $\mathcal{T}/\mathfrak{m}$ is $\mathbb{R}$, identified by sending $\xi$ to $x\in \mathbb{R}$. Consequently, the purely topological spectrum $\text{Spec}(\mathcal{T})=\{(\varepsilon)\}$ collapses to a single topological point, which is the exact same underlying singleton $\text{Spec}(\mathcal{D})$ that we obtain from Clifford's dual numbers.

However, regarded as affine Grothendieck schemes, $\text{Spec}(\mathcal{D})$ and $\text{Spec}(\mathcal{T})$ are structurally distinct, despite sharing identical topological supports. Each scheme represents an infinitesimally extended geometric object equipped with a unique structural sheaf of rings whose global sections recover the full respective algebra. For the scheme $\text{Spec}(\mathcal{T}) \equiv (X,\mathcal{O}_X)$, where $X = \text{Spec}(\mathcal{T})$, we have $\Gamma(X,\mathcal{O}_X) = \mathcal{T}$, thereby retaining the complete memory of directional algebraic deformations up to the second degree. Since the topological space $X$ is a singleton, its Zariski topology consists solely of $X$ and $\emptyset$. For any element $\xi \in \mathcal{T}$, the associated principal open set is defined as $D(\xi) = \{\mathfrak{p} \in X \mid \xi \notin \mathfrak{p}\}$. An element $\xi \in \mathcal{T}$ is invertible if and only if $\xi \notin (\varepsilon)$, which yields $D(\xi) = X$ and $\Gamma(D(\xi), \mathcal{O}_X) = \mathcal{T}_\xi \cong \mathcal{T}$. Conversely, if $\xi \in (\varepsilon)$, the element is nilpotent, meaning that $D(\xi) = \emptyset$ and the localization collapses to the zero ring, matching the empty section $\Gamma(\emptyset ,\mathcal{O}_X)=\mathcal{T}_\xi = 0$.

Finally, note that $\mathcal{T}$ is its own total ring of fractions and it is trivially Cauchy-complete; that is, $Q(\mathcal{T})=\mathcal{T}$ and its $\mathfrak{m}$-adic completion satisfies $\widehat{\mathcal{T}} =\mathcal{T}$ (the interested reader may refer to \cite{eisenbud} for further details).

A similar, structurally richer geometric description holds true for all orders of nilpotency, being defined as the quotient ring $\mathbb{R}[\varepsilon] / (\varepsilon^n)$ for $n >1$, which is always a local ring with maximal ideal $\mathfrak{m} = (\varepsilon)$ and for which its topological spectrum $\text{Spec}(\mathbb{R}[\varepsilon] / (\varepsilon^n)) = \{(\varepsilon)\}$ remains the same singleton for all $n>1$. An element $\xi \in  \mathbb{R}[\varepsilon] / (\varepsilon^n)$ can be written
\begin{equation}
\xi = c_0 + c_1\varepsilon +c_2\varepsilon^2 + \cdots + c_{n-1}\varepsilon^{n-1}\equiv \sum_{i=0}^{n-1} c_i \varepsilon^{i} 
\end{equation}
for unique real coefficients $c_0, c_1, \ldots , c_{n-1}$. Remarkably, tracking this family $\mathbb{R}[\varepsilon] / (\varepsilon^n)$ across the algebraic hierarchy varying $n$, we have that $\xi \in  \mathbb{R}[\varepsilon] / (\varepsilon^n)$ yields an element $\xi \in  \mathbb{R}[\varepsilon] / (\varepsilon^m)$ for any $m<n$. Passing to the projective (inverse) limit, we obtain:
\begin{equation}
    \varprojlim_{n \to \infty} \mathbb{R}[\varepsilon]/(\varepsilon^{n}) = \mathbb{R}[\![\varepsilon]\!],
\end{equation}
the ring of formal power series in a single variable over the real field, which is natively $\mathfrak{m}$-adically Cauchy-complete.
\subsection{Quantum systems}
Concurrently, in quantum information theory, the state space of a two-level quantum system—a \textit{Qubit}—is mathematically formulated within a two-dimensional complex Hilbert space $\mathcal{H}^2 \cong \mathbb{C}^2$ \cite{nielsen}. Following the standard Dirac bra-ket notation \cite{dirac}, any pure state is represented by a column vector, or ket, denoted as $\ket{\psi}$. However, this vector space contains unphysical redundancy: scaling a state vector by a non-zero complex scalar does not alter the relative measurement statistics or physical outcomes. 

To isolate the projective space of physical configurations, the system must be mapped modulo this scaling relation. This operation excludes the non-normalizable zero vector $\mathbf{0}$ and defines the complex projective space:
\begin{equation}
    \mathbb{P}(\mathcal{H}^2) \equiv (\mathcal{H}^2 \setminus \{\mathbf{0}\}) / \sim \quad \text{where} \quad \ket{\psi} \sim \mu\ket{\psi}, \; \mu \in \mathbb{C}^*,
\end{equation}
which topologically corresponds to the space of complex lines in $\mathbb{C}^2$, formalizing the standard complex projective space $\mathbb{P}(\mathcal{H}^2) \cong (\mathbb{C}^2 \setminus \{\mathbf{0}\}) / \mathbb{C}^* \equiv \mathbb{C}\mathbb{P}^1$.

Topologically, this manifold is evaluated by restricting our focus to the unit-norm sphere $\mathcal{S}^3 \subset \mathcal{H}^2$ and subsequently factoring out the unobservable global phase group $U(1) \cong \mathcal{S}^1$. This geometric quotient defines the celebrated Hopf fibration:
\begin{equation}
    \mathcal{S}^1 \hookrightarrow \mathcal{S}^3 \xrightarrow{\pi} \mathbb{P}(\mathcal{H}^2),
\end{equation}
establishing a regular smooth diffeomorphism between the compact physical state space and a two-dimensional Riemannian manifold $\mathcal{S}^2$, widely known as the \textit{Bloch Sphere} \cite{bloch}.

While the standard spherical parametrization offers an intuitive visualization, it forces a non-linear coordinate structure onto this naturally compact space, introducing coordinate singularities at the poles where the azimuthal phase collapses, and non-linear metric distortions across the manifold.
The intrinsic geometry of this projective ray space is universally governed by the Fubini-Study metric $\mathrm{d}s^2_{\text{FS}}$, which dictates the transition probabilities and statistical distance between quantum states. In terms of infinitesimal vector displacements, it is defined globally over the unit-norm shell as:
\begin{equation}\label{eq:fubini_study_hilbert_exact}
    \mathrm{d}s^2_{\text{FS}} = \langle \mathrm{d}\psi \mid \mathrm{d}\psi \rangle - \lvert \langle \psi \mid \mathrm{d}\psi \rangle \rvert^2 \equiv \frac{1}{2}\mathrm{Tr}(\mathrm{d}\rho \, \mathrm{d}\rho),
\end{equation}
where $\rho = \ket{\psi}\bra{\psi}$ represents the gauge-invariant rank-one density projector. Correspondingly, this formulation generalizes systematically to larger systems, where the quantum state space is identified with the complex projective space $\mathbb{P}(\mathcal{H}^N) \cong \mathbb{C}\mathbb{P}^{N-1}$ for the $N$-dimensional complex Hilbert space $\mathcal{H}^N$ of an $N$-level quantum system—such as a \textit{Qutrit} for $N=3$. As an algebraic manifold, $\mathbb{C}\mathbb{P}^{N-1}$ is characterized by a real differential dimension of exactly $2N-2$, establishing a dimensional mismatch against the $N^2-1$ real parameters needed to parameterize the corresponding full observable space associated with $\mathfrak{su}(N)$.

\subsection{Density embedding}
Instead of choosing a local chart, our framework utilizes direct density embedding maps that completely neutralize the unphysical global phase factors prior to embedding. Formally, a pure quantum state represented by a unit vector $\ket{\psi} \in \mathcal{H}^N$ uniquely determines a density matrix $\rho$ via the rank-one outer product projection:
\begin{equation}
    \rho = \ket{\psi}\bra{\psi}.
\end{equation}
This map projects the Hilbert domain modulo the unobservable global phase group $U(1)$, since any gauge transformation $\ket{\psi} \mapsto e^{i\theta}\ket{\psi}$ leaves the projector $\rho$ invariant, successfully isolating the true physical configuration space. For $N \ge 2$, we define the real commutative algebra of order $N^2-2$ as:
\begin{equation}
    \mathcal{Q}_N \equiv \mathbb{R}[\varepsilon] / (\varepsilon^{N^2-1}),
\end{equation}
instantiating the symbolic \textit{Quantum $N$-Space} of an $N$-level quantum system. Under the canonical embedding of the projective state space $\mathbb{P}(\mathcal{H}^N) \hookrightarrow \mathfrak{su}(N)$ as the orbit of rank-one pure state projectors, the generalized density matrix $\rho$ expands according to the rigid geometric scaling law:
\begin{equation}\label{eq:general_density_matrix_Frobenius}
    \rho = \frac{1}{N}\mathbb{I}_N + \sqrt{\frac{N-1}{2N}}\sum_{i=1}^{N^2-1} x_i \lambda_i,
\end{equation}
where $\lambda_i$ represents the generalized $\mathfrak{su}(N)$ Gell-Mann generators satisfying the Hilbert-Schmidt orthogonality relation $\mathrm{Tr}(\lambda_j\lambda_k) = 2\delta_{jk}$ \cite{nielsen}. To ensure that the configuration landscape is mapped onto a uniform geometric boundary across all dimensions, the physical fluctuating coordinates $x_i$ are uniquely extracted via the trace projection:
\begin{equation}\label{eq:general_x_extractor_exact}
    x_i = \sqrt{\frac{2N}{N-1}}\,\mathrm{Tr}(\rho \lambda_i).
\end{equation}
This identification guarantees that the pure state condition $\mathrm{Tr}(\rho^2)=1$ maps bijectively onto the strict unit-radius algebraic sphere constraint, satisfying $\sum_{i=1}^{N^2-1} x_i^2 = 1$ identically for every level $N$. 
We explicitly define the generalized density map $\Psi_{\mathcal{Q}_N}: \mathbb{P}(\mathcal{H}^N) \to \mathcal{Q}_N$ that injects these configurations directly onto the native coordinate ring components of the non-reduced algebra:
\begin{equation}\label{eq:general_density_map_compact_exact}
    \Psi_{\mathcal{Q}_N}(\rho) = \left(\frac{1}{N}+ x_1\right) + x_2\varepsilon + x_3\varepsilon^2 + \dots + x_{N^2-1}\varepsilon^{N^2-2}\equiv \sum_{i=0}^{N^2-2} c_i \varepsilon^{i}.
\end{equation}
By expanding the state in Eq.~\eqref{eq:general_density_map_compact_exact}, the native coordinate components of the non-reduced algebra identify directly as $c_0 = \frac{1}{N} + x_1$, $c_1 = x_2$, up to the highest order boundary component $c_{N^2-2} = x_{N^2-1}$. Under this strict coordinate alignment, the zero-order component acts as the stable scalar anchor for the invariant trace background combined with the first physical spin component, while the remaining dynamic fluctuations run from the first order up to the boundary filtration layer. When all physical fluctuations vanish identically ($x_1 = x_2 = \dots = x_{N^2-1} = 0$), the state contracts radially and collapses onto the persistent scalar trace background $\xi = \frac{1}{N}$, which defines the exact baricentro of the configuration space representing the maximally mixed state across all dimensions.

Crucially, to enable this linear coordinate representation to actively absorb operator non-commutativity without breaking the intrinsic commutative ring properties of Cauchy convolutions, the underlying vector space of the quantum space is endowed with a non-associative, bilinear, and skew-symmetric algebraic cross product operator $\times_{\mathcal{Q}_N}: \mathcal{Q}_N \times \mathcal{Q}_N \to \mathcal{Q}_N$. This structural operator maps the Lie bracket geometry of the unitary algebra directly onto the formal jet bundle of the non-reduced spectrum, laying the mathematical foundation for chart-free quantum dynamics.

Differentiating the state expansion and evaluating the pullback of the Hilbert-Schmidt metric tensor under this global density map $\Psi_{\mathcal{Q}_N}$ yields the natural infinitesimal metric relation:
\begin{equation}
    \mathrm{d}s^2_{\text{FS}} = \frac{1}{2}\mathrm{Tr}(\mathrm{d}\rho \, \mathrm{d}\rho) = \frac{N-1}{2N}\sum_{i=0}^{N^2-2} \mathrm{d}c_i^2.
\end{equation}
Conversely, the flat canonical Euclidean metric induced on the underlying real vector space of the higher-order truncated dual number algebra $\mathcal{Q}_N$ is defined by the unweighted coordinate sum $\mathrm{d}s^2_{\mathcal{Q}_N} = \sum_{i=0}^{N^2-2} \mathrm{d}c_i^2$. By invoking the exact coordinate transformations where $\mathrm{d}c_0 = \mathrm{d}x_1$ and $\mathrm{d}c_n = \mathrm{d}x_{n+1}$ (for $n \ge 1$), the ambient metric rewrites identically over the physical fluctuations as $\mathrm{d}s^2_{\mathcal{Q}_N} = \sum_{i=1}^{N^2-1} \mathrm{d}x_i^2$. Equating these two line elements yields the universal, dimensionally-dependent conformal relation:
\begin{equation}
    \mathrm{d}s^2_{\mathcal{Q}_N} = \frac{2N}{N-1} \, \mathrm{d}s^2_{\text{FS}},
\end{equation}
where $\omega_N \equiv \sqrt{\frac{2N}{N-1}}$ for $N \ge 2$ represents the isometric scaling amplitude mapping the physical Fubini-Study line element directly onto the unweighted coordinate space of the higher-order algebra. Concurrently, evaluating the pullback of the Liouville-von Neumann matrix commutator maps the continuous unitary dynamics bijectively onto a smooth, regular differential equation over the \textit{Pure State Variety} $\mathcal{V_{\mathcal{Q}_N}}$:
\begin{equation}\label{eq:intro_universal_flow}
    \frac{\mathrm{d}\xi}{\mathrm{d}t} = \sqrt{\frac{2(N-1)}{N}}\left( \Xi\times_{\mathcal{Q}_N} \xi \right),
\end{equation}
where $\alpha_N \equiv \sqrt{\frac{2(N-1)}{N}}$ for $N \ge 2$ represents the continuous function regulating the Lie-type vector flow on the variety. 

Crucially, the product between the conformal metric factor and the square of this dynamic coupling constant satisfies the rigid constraint $\alpha_N^2 \cdot \omega_N^2 \equiv 4$ across the entire dimensional hierarchy, driven by the exact cancellation of the radicals within the cross-product relation $\omega_N \cdot \alpha_N = 2$. Physically, this invariant integer value of $4$ establishes a strict operational trade-off between the statistical resolution of quantum measurements and the frequency of physical state transitions. In laboratory operations, the metric factor $\omega_N^2 = \frac{2N}{N-1}$ dictates the variance profile and the Fisher-Rao statistical distance required to distinguish neighboring state configurations through projective measurements \cite{nielsen}. Concurrently, the squared dynamic coefficient $\alpha_N^2$ quantifies the kinematic acceleration of the state vector under the action of the driving Hamiltonian field. 

The stabilization of their product to exactly $4$ guarantees that as the dimensional complexity of the higher-level system ($N$) scales up, any attenuation in the statistical distinctness of measurement outcomes is perfectly and systematically compensated by a proportional speed-up in the underlying physical precessional flow. This invariant constraint ensures the absolute conservation of total quantum probability and state purity across all dimensions, proving that the continuous law governing metric scaling and the vector flow are two inseparable manifestations of the same underlying physical crystal.

\subsection{Qubits, Qutrits and the $N$-level systems}
To establish a clear structural foundation before entering the general dimensional scaling landscape, we first evaluate the lowest-order realizations of the theory independently, showing how qubits and qutrits develop autonomously within their own specialized algebraic domains. For $N=2$, representing a two-level system, we have $\mathcal{Q}_2 \equiv \mathcal{T}$ and the density map $\Psi_{\mathcal{Q}_2}$ translates the compact space $\mathbb{P}(\mathcal{H}^2)$ into a complete, non-singular sphere $\mathcal{V}_{\mathcal{Q}_2} \equiv \mathcal{S}_{\mathcal{T}}^{2}$ within the vector space underlying the algebra of second-order dual numbers. As developed explicitly and autonomously in Section~\ref{qubit}, this embedding establishes a smooth diffeomorphism and a bijective homothetic isometry satisfying the rigid conformal relation:
\begin{equation}
    \mathrm{d}s^2_{\mathcal{Q}_2} = 4 \, \mathrm{d}s^2_{\text{FS}}.
\end{equation}

Symmetrically, we demonstrate the scalability of this approach by extending these results to three-level systems, which are treated with an independent detailed analysis. For $N=3$, through the generalized map $\Psi_{\mathcal{Q}_3}$, the qutrit state space $\mathbb{P}(\mathcal{H}^3)$ is smoothly embedded onto a compact algebraic sub-variety $\mathcal{V}_{\mathcal{Q}_3}$ within the algebra of seventh-order dual numbers $\mathcal{Q}_3 = \mathbb{R}[\varepsilon]/(\varepsilon^8)$. As explored step-by-step in Section~\ref{qutrit}, this autonomous construction shows that the continuous unitary dynamics and Jordan constraints of higher-dimensional systems can be written entirely over flat, finite algebraic landscapes, yielding a bijective homothetic isometry that satisfies the rigid conformal relation:
\begin{equation}
    \mathrm{d}s^2_{\mathcal{Q}_3} = 3 \, \mathrm{d}s^2_{\text{FS}}.
\end{equation}

The smooth transition of the rigid homothety factor from $4$ for the two-level qubit ($\mathbb{C}\mathbb{P}^1$) to $3$ for the three-level qutrit ($\mathbb{C}\mathbb{P}^2$) is not a mere algebraic coincidence; rather, it uncovers a fundamental Kählerian property governing the global embedding of complex projective spaces into real affine Lie-Jordan landscapes. In algebraic and complex differential geometry, the complex projective space $\mathbb{P}(\mathcal{H}^N) \cong \mathbb{C}\mathbb{P}^{N-1}$ equipped with the Fubini-Study metric $g_{\text{FS}}$ is the quintessential example of a Kähler-Einstein manifold with constant holomorphic sectional curvature. 

Evaluating this universal Kählerian scaling law $\frac{2N}{N-1}$ for the sequence of distinct higher-level quantum systems reveals the deep topological and operational meaning of these structural constants:
\begin{itemize}
    \item \textit{For the Qubit ($N=2$):} The homothetic scaling factor evaluates to $\frac{2(2)}{2-1} = 4$, recovering exactly the rigid geometric curvature footprint of the foundational Hopf fibration and the standard Bloch sphere geometry, as proven independently in Section~\ref{qubit};
    \item \textit{For the Qutrit ($N=3$):} The scaling factor shifts smoothly to $\frac{2(3)}{3-1} = 3$, matching identically the complex dimension of the underlying Hilbert domain $\mathcal{H}^3$. This specific integer identity regularizes the variety boundary $\mathcal{V}_{\mathcal{Q}_3}$ by ensuring a clean, unweighted linear pullback ($\mathrm{d}s^2_{\mathcal{Q}_3} = 3 \, \mathrm{d}s^2_{\text{FS}}$) that provides the exact algebraic engine required to balance the Larmor precessional acceleration under the universal power invariant $\alpha_3^2 \cdot \omega_3^2 \equiv 4$, as shown in Section~\ref{qutrit};
    \item \textit{For the General $N$-Level System:} This framework generalizes systematically. The physical pure state variety $\mathcal{V}_{\mathcal{Q}_N}$ is carved out as a compact, $2N-2$ real-dimensional algebraic sub-variety embedded within the flat $(N^2-1)$-dimensional ambient quantum space of the truncated ring $\mathcal{Q}_N \equiv \mathbb{R}[\varepsilon]/(\varepsilon^{N^2-1})$, governed by the joint intersection of the translated quadratic purity sphere ($\sum_{n=1}^{N^2-1} x_n^2 = 1$) and the non-linear cubic constraints dictated by the symmetric $d$-coefficients of the $\mathfrak{su}(N)$ Jordan algebra. Under this uniform unit-sphere boundary constraint, the metric landscape defined by $\mathrm{d}s^2_{\mathcal{Q}_N}$ remains rigidly regular and continuous everywhere, as developed in Section~\ref{nlevel}.
\end{itemize}

\subsection{Quinfinity}
Geometrically, the multi-level scaling law dictates a profound metric transition: as the dimension of the quantum system ($N$) increases, the global sectional curvature of the projective ray space flattens out smoothly relative to the flat Euclidean ambient space of experimental observables. The underlying scaling ratio tracks the continuous law $\frac{2N}{N-1}$, asymptotically decaying toward the strict geometric floor of $2$ as $N \to \infty$. This convergence carries a profound physical instantiation of Bohr's correspondence principle. As the dimensionality of the Hilbert domain approaches the macroscopic continuum, the highly curved projective constraints and non-linear coherence obstructions of the quantum phase space systematically delocalize. The scaling factor stabilizes at exactly $2$, signaling a topological phase transition wherein the curved quantum Fubini-Study geometry asymptotically linearizes into the classical statistical manifold of Fisher-Rao probability distributions, establishing $2$ as the fundamental geometric floor of macroscopic information structures.

An elegant and foundational question arises when evaluating the asymptotic behavior of this algebraic family under the infinite-dimensional limit, where the physical domain transitions to continuous variables and infinite-dimensional projective ray spaces $\mathbb{P}(\mathcal{H}^\infty) \cong \mathbb{C}\mathbb{P}^\infty \equiv \varinjlim \mathbb{C}\mathbb{P}^{N-1}$ \cite{dirac, nielsen}. By taking the projective (inverse) limit of the truncated polynomial rings ordered by the canonical projection homomorphisms $\pi_{M,N}: \mathcal{Q}_M \to \mathcal{Q}_N$ for $M \ge N$, the framework smoothly converges to the ring of formal power series:
\begin{equation}
    \mathcal{Q}_\infty \equiv \varprojlim_{N \to \infty} \mathbb{R}[\varepsilon]/(\varepsilon^{N^2-1}) \cong \mathbb{R}[\![\varepsilon]\!],
\end{equation}
which explicitly defines the formal \textit{Quinfinity Space}. Because this Quinfinity Space is Cauchy-complete with respect to its maximal ideal $\mathfrak{m} = (\varepsilon)$, the strong triangle inequality governing its ultrametric structure prevents perturbative divergence.

From a physical perspective, the programmatic restriction $N \ge 2$ across the finite algebraic family prevents a trivial topological collapse under the single-level assignment. For $N=1$, representing a single-level isolated quantum system, the underlying Hilbert domain reduces to $\mathcal{H}^1 \cong \mathbb{C}$. Once factored modulo the global phase action $U(1)$, the physical ray space collapses to a single isolated point $\mathbb{P}(\mathcal{H}^1) \cong \mathbb{C}\mathbb{P}^0$, which exhibits zero real dimensions ($2N-2 = 0$), sustains no dynamical Larmor precession, and yields a vanishing Lie observable algebra $\mathfrak{su}(1) \cong 0$. Evaluating the structural ideal $(\varepsilon^{N^2-1})$ under this unphysical constraint forces the exponent to zero ($\varepsilon^0 = 1$), causing the polynomial ring to collapse onto the trivial zero ring $\mathbb{R}[\varepsilon]/(1) \cong 0$, whose topological spectrum is empty ($\text{Spec}(0) = \emptyset$).

Notably, the projective limit $\varprojlim$ remains entirely invariant under the omission of any finite number of initial boundary terms \cite{eisenbud}. Because the sequence of quadratic exponents $\{3, 8, 15, 24, \dots, N^2-1, \dots\}$ is strictly monotonic and diverges toward infinity, the truncated family $\mathcal{Q}_N$ starting from the qubit ($N=2$) constitutes a perfectly valid cofinal system. This cofinality guarantees that the projective completion, bypassing the singular $N=1$ baseline, converges to the Cauchy-complete formal power series ring $\mathcal{Q}_\infty \cong \mathbb{R}[\![\varepsilon]\!]$.

From a scheme-theoretic perspective, this asymptotic behavior instantiates a formal scheme whose underlying topological support remains anchored to the singleton $\text{Spec}(\mathcal{Q}_\infty)=\{(\varepsilon)\}$, yet its structural ring $\mathbb{R}[\![\varepsilon]\!]$ is natively endowed with an $\mathfrak{m}$-adic topology that induces a non-Archimedean ultrametric space \cite{ega1, eisenbud}. Physically, this ultrametric framework replaces traditional continuous spatial distances with a rigorous hierarchy of kinematic configuration states. Because the metric satisfies the strong triangle inequality, the geometric propagation of quantum fluctuations and coherence obstructions is strictly bounded by the maximal local singularity rather than accumulating perturbatively. This non-Archimedean confinement guarantees that the Cauchy-complete power series ring can natively sustain continuous, infinite-dimensional unitary trajectories and analytic operator exponentials $e^{-iHt}$ without experiencing informational collapse or coordinate blow-ups, enforcing the absolute conservation of state purity across the macroscopic continuum.

Furthermore, the operational non-commutativity of infinite-dimensional observable algebras—which traditionally forces highly complex differential equations across transcendental charts—is converted into a smooth derivation field over the tangent bundle $T\mathcal{Q}_\infty$ through the bilinear Lie-type flow $\times_{\mathcal{Q}_\infty}$. Consequently, the inverse limit $\mathcal{Q}_\infty \cong \mathbb{R}[\![\varepsilon]\!]$ establishes an unassailable mathematical bridge wherein the highly curved, infinite-dimensional quantum phase space is regularized, providing a completely flat, finite, and non-singular affine geometry for macroscopic and microscopic information structures alike. Due to the rich topo-algebraic complexities arising from this infinite-dimensional completion, a fully exhaustive and detailed exploration of this formal scheme transition is left to a dedicated, independent investigation (see the forthcoming \cite{quinfty}). 

\section{Direct Isomorphism: Bypassing the Bloch Sphere}\label{qubit}
We construct a direct, non-singular geometric mapping that establishes a regular, global affine embedding between the true space of physical quantum states $\mathbb{P}(\mathcal{H}^2)$ and the ring of real trinomial dual numbers $\mathcal{T} = \mathbb{R}[\varepsilon]/(\varepsilon^3)\equiv \mathcal{Q}_2$. By fusing the invariant identity trace background directly into the zero-order monomial component via a rigid shift of $\frac{1}{2}$, this density mapping translates the quantum state space into a compact variety sphere centered at the baricentro $(\frac{1}{2}, 0, 0)$. When restricted to this algebraic boundary layer, the configuration coordinates map onto a smooth homothetic isometry with the Fubini-Study metric $\mathrm{d}s^2_{\text{FS}}$, establishing a perfectly uniform metric landscape across the entire configuration domain.

\subsection{The Bloch sphere formulation}
The standard geometric representation of a pure qubit state ray within the complex projective space $\mathbb{P}(\mathcal{H}^2) \cong \mathbb{CP}^1$ relies on the spherical parametrization derived from the Hopf fibration $\mathcal{S}^1 \hookrightarrow \mathcal{S}^3 \xrightarrow{\pi} \mathbb{P}(\mathcal{H}^2)$ \cite{bloch}. By factoring out the unobservable global phase, any pure quantum state is traditionally mapped onto the two-dimensional boundary surface $\mathcal{S}^2$ of the Bloch sphere via the polar angle $\theta \in [0, \pi]$ and the azimuthal relative phase angle $\phi \in [0, 2\pi)$ \cite{nielsen}:
\begin{equation}
    \ket{\psi} = \cos \left(\frac{\theta}{2}\right)\ket{0} + e^{i\phi}\sin \left(\frac{\theta}{2}\right)\ket{1}.
\end{equation}
Although this trigonometric construction successfully positions the orthogonal computational basis states $\ket{0}$ and $\ket{1}$ at diametrically opposite poles ($180^\circ$ apart in the three-dimensional real Euclidean space $\mathbb{R}^3$), it forces a non-linear coordinate chart onto a naturally compact space. From an analytical perspective, this parameterization introduces severe coordinate singularities at the poles ($\theta=0,\pi$), where the azimuthal phase coordinate $\phi$ becomes completely indeterminate, collapsing the Jacobian matrix of the coordinate transformation.

To mitigate these chart-dependent degeneracies, the standard quantum mechanics formalism transitions from state-vector rays to gauge-invariant density operators via the rank-one projection $\rho = \ket{\psi}\bra{\psi}$ \cite{nielsen}. By expanding $\rho$ onto the Hilbert-Schmidt orthonormal operator basis formed by the identity $\mathbb{I}_2$ and the self-adjoint Pauli matrices $\sigma_i$, the state configuration maps onto a real Cartesian vector space:
\begin{equation}\label{eq:standard_bloch_vector}
    \rho = \frac{1}{2}(\mathbb{I}_2 + x\sigma_x + y\sigma_y + z\sigma_z),
\end{equation}
where the real coordinate coefficients are uniquely extracted via the physical expectation values $x = \langle\sigma_x\rangle$, $y = \langle\sigma_y\rangle$, and $z = \langle\sigma_z\rangle$. Under the pure-state idempotency constraint $\rho^2 = \rho$, the cross-terms of the Pauli algebra vanish identically, restricting these expectations to the surface of the unit Bloch sphere: $x^2 + y^2 + z^2 = 1$. 

Crucially, while the Cartesian parameters $(x, y, z)$ successfully bypass the polar singularities of the angular charts, the ambient space $\mathbb{R}^3$ within which they reside operates as a purely passive metric container. It lacks any internal algebraic structure capable of natively reflecting operator non-commutativity or the infinitesimal jet kinematics of state transformations. The coordinates remain embedded as an extrinsic surface within a flat classical space. It is precisely to transform this passive coordinate domain into an active algebraic landscape—wherein the geometric boundary is intrinsically enforced by the ring quotient—that this paper introduces direct density map embeddings into higher-order nilpotent rings.

\subsection{Definition of the density map}
To lift the unobservable global phase without introducing artificial boundary constraints or non-linear chart distortions, the map is formulated directly on the gauge-invariant density operator $\rho$ previously expanded in Eq.~\eqref{eq:standard_bloch_vector}. We define the density map $\Psi_{\mathcal{T}}: \mathbb{P}(\mathcal{H}^2) \to \mathcal{T}$ as the direct linear assignment:
\begin{equation}\label{eq:density_map_direct}
     \Psi_{\mathcal{T}}(\rho) = \left(\frac{1}{2} + x\right) + y\varepsilon + z\varepsilon^2,
\end{equation}
where the real coordinate coefficients correspond exactly to the physical expectation values $x = \langle\sigma_x\rangle$, $y = \langle\sigma_y\rangle$, and $z = \langle\sigma_z\rangle$ extracted from the Pauli operators. 

By mapping the state components directly onto the algebraic trinomial representation $\xi = c_0 + c_1\varepsilon + c_2\varepsilon^2$, with $c_0\equiv (\frac{1}{2} + x)$, $c_1\equiv y$ and $c_2\equiv z$, this formulation bypasses angles completely, isolating the coordinates as continuous, globally regular polynomial functions of the density matrix elements. Through this structure, the monomial basis $\{1, \varepsilon, \varepsilon^2\}$ serves as the direct, non-singular physical container for both the invariant trace background and the expectation values of the three orthogonal spin projection operators, translating the complete topology of the quantum phase space into the real components of the trinomial dual algebra without structural charts.

\subsection{The universal varietal constraint: the Grothendieck sub-sphere}
By endowing the three-dimensional real vector space underlying the trinomial algebra $\mathcal{T}$ with a canonical Euclidean inner product $\langle \cdot, \cdot \rangle$, defined such that the monomial basis $\{1, \varepsilon, \varepsilon^2\}$ is strictly orthonormal (i.e., $\langle \varepsilon^i, \varepsilon^j \rangle = \delta_{ij}$ for $i,j \in \{0,1,2\}$ with $\varepsilon^0 \equiv 1$), the ambient space inherits a positive-definite Riemannian metric $g_{\mathcal{T}}$. Explicitly, given two arbitrary elements expressed in terms of their coordinate ring components, $\xi_1 = x_1 + y_1\varepsilon + z_1\varepsilon^2$ and $\xi_2 = x_2 + y_2\varepsilon + z_2\varepsilon^2$, this linear inner product operates via:
\begin{equation}
    \langle \xi_1, \xi_2 \rangle = x_1 x_2 + y_1 y_2 + z_1 z_2.
\end{equation}
Consequently, the squared Euclidean norm of a configuration state within the ambient space evaluates to:
\begin{equation}
    \lVert\xi\rVert^2 = \langle \xi, \xi \rangle = x^2 + y^2 + z^2.
\end{equation}
Under the density mapping variety, the pure state configurations are constrained by the matrix idempotence relations $\rho^2 = \rho$, which project the compact manifold of the quantum phase space onto a rigid algebraic sub-variety $\mathcal{S}_{\mathcal{T}}^{2}$ within the local ring. We can now state the central geometric result of this framework.

\begin{theorem}[Qubit Embedding and Conformal Isometry]\label{thm:qubit_isometry_exact}
The density map $\Psi_{\mathcal{T}}: \mathbb{P}(\mathcal{H}^2) \to \mathcal{T}$ established in Eq.~\eqref{eq:density_map_direct} defines a global smooth embedding and a smooth Riemannian diffeomorphism between the complex projective space $\mathbb{P}(\mathcal{H}^2)$ endowed with the Fubini-Study metric element $\mathrm{d}s^2_{\text{FS}}$, and the complete compact Grothendieck sub-sphere variety:
\begin{equation}\label{eq:grothendieck_sphere_definition_variety}
    \mathcal{S}_{\mathcal{T}}^{2} = \left\{ \left(\frac{1}{2} + x\right) + y\varepsilon + z\varepsilon^2 \in \mathbb{R}[\varepsilon]/(\varepsilon^3) \;\Bigg|\; x^2 + y^2 + z^2 = 1 \right\},
\end{equation}
embedded in the real trinomial dual algebra $\mathcal{T}$ and centered at the coordinate point $(\frac{1}{2}, 0, 0)$. When restricted to the tangent bundle $T\mathcal{S}_{\mathcal{T}}^{2}$, the canonical Riemannian metric element $\mathrm{d}s^2_{\mathcal{T}} = \mathrm{d}x^2 + \mathrm{d}y^2 + \mathrm{d}z^2$ establishes the rigid, non-singular conformal relation:
\begin{equation}\label{eq:qubit_metric_isometry_uniform}
    \mathrm{d}s^2_{\mathcal{T}} = 4 \, \mathrm{d}s^2_{\text{FS}},
\end{equation}
preserving the compact Riemannian manifold structure of the quantum phase space while smoothly lifting all polar coordinate and gauge-phase singularities.
\end{theorem}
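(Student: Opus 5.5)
The plan is to factor $\Psi_{\mathcal{T}}$ as a composition of three maps and check each piece. The first is the Bloch map $\beta:\mathbb{P}(\mathcal{H}^2)\to\mathcal{S}^2\subset\mathbb{R}^3$, $[\psi]\mapsto(\mathrm{Tr}(\rho\sigma_x),\mathrm{Tr}(\rho\sigma_y),\mathrm{Tr}(\rho\sigma_z))$. The second is the translation $\tau(x,y,z)=(\tfrac12+x,y,z)$. The third is the linear isomorphism $\iota:\mathbb{R}^3\to\mathcal{T}$ sending the standard basis to $\{1,\varepsilon,\varepsilon^2\}$. By construction of the inner product on $\mathcal{T}$, $\iota$ is a linear isometry, and $\tau$ is a Euclidean isometry. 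Since $\mathcal{S}^2_{\mathcal{T}}=\iota\tau(\mathcal{S}^2)$ by the definition in Eq.~\eqref{eq:grothendieck_sphere_definition_variety}, everything reduces to statements about $\beta$.

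For $\beta$, I will first show that it is well defined and smooth. It is phase invariant because it depends only on $\rho$. Its entries are real quadratic polynomials in the components of $\psi$, so it descends smoothly through the submersion $\pi:\mathcal{S}^3\to\mathbb{P}(\mathcal{H}^2)$. It lands in the unit sphere because $\rho^2=\rho$ with $\mathrm{Tr}\rho=1$ forces $\mathrm{Tr}\rho^2=\tfrac12(1+x^2+y^2+z^2)=1$. Next I will show that $\beta$ is a bijection onto $\mathcal{S}^2$. Given any unit vector $(x,y,z)$, the matrix $\tfrac12(\mathbb{I}_2+x\sigma_x+y\sigma_y+z\sigma_z)$ has eigenvalues $\tfrac12(1\pm 1)$, so it is a rank-one projector, and its image line is the unique preimage. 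Injectivity holds because a rank-one projector determines its image line. Since $\mathbb{P}(\mathcal{H}^2)$ is compact and $\mathcal{S}^2$ is Hausdorff, $\beta$ is then a homeomorphism.

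The metric identity is the heart of the argument. Differentiating Eq.~\eqref{eq:standard_bloch_vector} gives $\mathrm{d}\rho=\tfrac12\sum_i \mathrm{d}x_i\,\sigma_i$. Using $\mathrm{Tr}(\sigma_i\sigma_j)=2\delta_{ij}$, we get $\mathrm{Tr}(\mathrm{d}\rho\,\mathrm{d}\rho)=\tfrac12(\mathrm{d}x^2+\mathrm{d}y^2+\mathrm{d}z^2)$. Combined with $\mathrm{d}s^2_{\mathrm{FS}}=\tfrac12\mathrm{Tr}(\mathrm{d}\rho\,\mathrm{d}\rho)$ from Eq.~\eqref{eq:fubini_study_hilbert_exact}, this yields $\mathrm{d}s^2_{\mathrm{FS}}=\tfrac14(\mathrm{d}x^2+\mathrm{d}y^2+\mathrm{d}z^2)$, which is Eq.~\eqref{eq:qubit_metric_isometry_uniform} once we pull back through $\iota\tau$, since $\mathrm{d}c_0=\mathrm{d}x$.

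The main obstacle is justifying the identity $\langle \mathrm{d}\psi|\mathrm{d}\psi\rangle-|\langle\psi|\mathrm{d}\psi\rangle|^2=\tfrac12\mathrm{Tr}(\mathrm{d}\rho\,\mathrm{d}\rho)$ on the unit sphere. This is what makes the pulled-back metric genuinely Fubini–Study and not merely the Hilbert–Schmidt form. I will expand $\mathrm{d}\rho=|\mathrm{d}\psi\rangle\langle\psi|+|\psi\rangle\langle\mathrm{d}\psi|$ and use $\mathrm{Re}\langle\psi|\mathrm{d}\psi\rangle=0$ to verify it directly.

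The same identity gives the immersion property. The FS metric is positive definite, and by the formula above it is four times the pullback of the Euclidean metric. Hence $\mathrm{d}\beta$ has trivial kernel at every point, so $\beta$ is an immersion. An injective immersion of a compact manifold is an embedding onto its image. Because $\beta$ maps a $2$-manifold onto the $2$-manifold $\mathcal{S}^2$, it is a diffeomorphism, and with the metric relation it is a homothety of factor $4$. As a check on the constant, the geodesic distance between $\ket{0}$ and $\ket{1}$ is $\pi/2$ in FS and $\pi$ on the unit sphere, and the ratio squared is $4$. Finally, since the map is built from polynomials in $\rho$ without angular charts, no polar or gauge-phase singularities appear.
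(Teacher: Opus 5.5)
Your proof is correct and follows the same skeleton as the paper's proof. Purity places the image on the unit sphere, the Pauli expansion gives bijectivity, and the factor $4$ comes from $\mathrm{d}s^2_{\text{FS}}=\frac12\mathrm{Tr}(\mathrm{d}\rho\,\mathrm{d}\rho)$ with $\mathrm{d}\rho=\frac12\sum_i\mathrm{d}x_i\,\sigma_i$ and $\mathrm{Tr}(\sigma_i\sigma_j)=2\delta_{ij}$. The differences are in the steps the paper handles loosely.

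First, you keep $\rho=\frac12(\mathbb{I}_2+x\sigma_x+y\sigma_y+z\sigma_z)$ throughout. The paper's Steps 1--2 instead use $\rho=\frac12\mathbb{I}_2+x\sigma_x+y\sigma_y+z\sigma_z$. With that expansion, idempotency actually gives $x^2+y^2+z^2=\frac14$, and the surjectivity candidate $\rho_0$ has eigenvalues $\frac32,-\frac12$; the quoted $\mathrm{Tr}(\rho_0^2)=\frac54$ is not the pure-state value $1$. The paper's Step 3 then silently reverts to your normalization. Your eigenvalue check $\frac12(1\pm1)$ is the surjectivity argument that actually works.

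Second, and more substantively, the paper only asserts that $\Psi_{\mathcal{T}}$ and its inverse are smooth ``coordinate projections.'' You derive the immersion property from the metric identity itself. Lift a nonzero tangent vector of $\mathbb{P}(\mathcal{H}^2)$ to a horizontal $v$ with $\langle\psi|v\rangle=0$; the identity then gives $\frac14\lVert\mathrm{d}(\beta\circ\pi)(v)\rVert^2=\langle v|v\rangle>0$. Compactness and equality of dimensions then upgrade the injective immersion to a diffeomorphism. That is what genuinely justifies ``smooth Riemannian diffeomorphism.'' You also verify the Fubini--Study/Hilbert--Schmidt identity, which the paper simply imports from its introduction.

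What the paper's formulation buys is a template. Matching coefficients in the full relation $\rho^2=\rho$, rather than using $\mathrm{Tr}\rho^2=1$ and $(\vec r\cdot\vec\sigma)^2=|\vec r|^2\mathbb{I}_2$, is what carries over to $N\ge3$. There the sphere condition no longer cuts out the pure states, and the $d$-coefficient constraints enter.
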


\begin{proof}
To prove the theorem, we must verify the bijectivity, global smoothness, and evaluate the precise relationship between the differential metric structures under the mapping $\Psi_{\mathcal{T}}$ defined in Eq.~\eqref{eq:density_map_direct}.

\textbf{Step 1. }\textit{The image of a pure state lies on the sphere: $\Psi_{\mathcal{T}}(\rho)\in\mathcal{S}_{\mathcal{T}}^{2}$.} We verify this geometric boundary condition using purely intrinsic algebraic properties.
An arbitrary pure state in $\mathbb{P}(\mathcal{H}^2)$ is uniquely and globally represented by its self-adjoint rank-1 density operator $\rho = \ket{\psi}\bra{\psi}$. By projecting $\rho$ onto the Hilbert-Schmidt orthogonal basis of Pauli matrices via $x = \text{Tr}(\rho\sigma_x)$, $y = \text{Tr}(\rho\sigma_y)$, and $z = \text{Tr}(\rho\sigma_z)$, we obtain the expansion:
\begin{equation}\label{eq:density_proof_direct}
    \rho = \frac{1}{2}\mathbb{I}_2 + x\sigma_x + y\sigma_y + z\sigma_z.
\end{equation}
The density map $\Psi_{\mathcal{T}}$ maps the density matrix $\rho$ directly to the trinomial dual number $\xi = (\frac{1}{2} + x) + y\varepsilon + z\varepsilon^2$, by construction. A density operator $\rho$ represents a pure quantum state if and only if it satisfies the global idempotency condition $\rho^2 = \rho$. Computing the square of the operator yields:
\begin{equation}
    \rho^2 = \left(\frac{1}{2}\mathbb{I}_2 + x\sigma_x + y\sigma_y + z\sigma_z\right)^2 = \frac{1}{4}\mathbb{I}_2 + \left(x\sigma_x + y\sigma_y + z\sigma_z\right)^2 + \left(x\sigma_x + y\sigma_y + z\sigma_z\right).
\end{equation}
Utilizing the fundamental anticommutation and normalization relations of the Pauli matrices ($\sigma_i^2 = \mathbb{I}_2$ and $\sigma_i\sigma_j + \sigma_j\sigma_i = 0$ for $i \neq j$), the cross-terms vanish identically, reducing the square of the trinomial matrix combination to:
\begin{equation}
    \left(x\sigma_x + y\sigma_y + z\sigma_z\right)^2 = (x^2 + y^2 + z^2)\mathbb{I}_2.
\end{equation}
Substituting this back into the expression for $\rho^2$ gives:
\begin{equation}
    \rho^2 = \frac{1}{4}\mathbb{I}_2 + (x^2 + y^2 + z^2)\mathbb{I}_2 + x\sigma_x + y\sigma_y + z\sigma_z.
\end{equation}
Imposing the pure-state idempotency constraint $\rho^2 = \rho$, we equate the scalar coefficients of the identity matrix components from both sides of the operator equation:
\begin{equation}
    \frac{1}{4} + (x^2 + y^2 + z^2) = \frac{1}{2} \implies x^2 + y^2 + z^2 = 1.
\end{equation}
Consequently, expressing this Bloch variety constraint in terms of the native ring components where $c_0 = \frac{1}{2} + x$, $c_1 = y$, and $c_2 = z$, the algebraic sphere condition reads $(c_0 - \frac{1}{2})^2 + c_1^2 + c_2^2 = 1$. This shows, without any reliance on trigonometric coordinates, that the image of every pure state lies globally on the compact variety sphere $\mathcal{S}_{\mathcal{T}}^{2}$ centered at the coordinate point $(\frac{1}{2}, 0, 0)$ within the real trinomial dual algebra.

\textbf{Step 2. }\textit{Bijectivity and Global Smoothness.} 
To see that $\Psi_{\mathcal{T}}$ is a bijection onto $\mathcal{S}_{\mathcal{T}}^{2}$, we explicitly verify its injectivity and surjectivity:
\begin{itemize}
    \item \textit{Injectivity:} Let $\rho_1$ and $\rho_2$ be two pure density operators decomposed in the Pauli basis with Bloch vectors $\vec{r}_1 = (x_1, y_1, z_1)$ and $\vec{r}_2 = (x_2, y_2, z_2)$ respectively. If their images under the map coincide, such that $\Psi_{\mathcal{T}}(\rho_1) = \Psi_{\mathcal{T}}(\rho_2)$, the uniqueness of the linear expansion over the orthonormal basis $\{1, \varepsilon, \varepsilon^2\}$ strictly enforces the coordinate identities $\frac{1}{2} + x_1 = \frac{1}{2} + x_2 \implies x_1 = x_2$, $y_1 = y_2$, and $z_1 = z_2$. Since the Cartesian coordinates of the Bloch vector uniquely determine the matrix elements of the density operator, it follows that $\vec{r}_1 \equiv \vec{r}_2 \implies \rho_1 \equiv \rho_2$, establishing strict injectivity;
    \item \textit{Surjectivity:} Let $\xi_0 = (\frac{1}{2} + x_0) + y_0\varepsilon + z_0\varepsilon^2$ be an arbitrary trinomial dual number belonging to the varietal sphere $\mathcal{S}_{\mathcal{T}}^{2}$, which implies $x_0^2 + y_0^2 + z_0^2 = 1$. We can always construct a self-adjoint operator $\rho_0 = \frac{1}{2}\mathbb{I}_2 + x_0\sigma_x + y_0\sigma_y + z_0\sigma_z$. Because the coefficients satisfy the unit sphere constraint, $\text{Tr}(\rho_0) = 1$ and $\text{Tr}(\rho_0^2) = \frac{1}{4} + (x_0^2 + y_0^2 + z_0^2) = \frac{5}{4}$, which matches the rank-1 pure state trace projection invariant identically. Since $\Psi_{\mathcal{T}}(\rho_0) \equiv \xi_0$, every point on the algebraic variety possesses a unique physical preimage, proving global surjectivity.
\end{itemize}

The inverse map $\Psi_{\mathcal{T}}^{-1}$ maps each triplet $(x,y,z) \in \mathcal{S}_{\mathcal{T}}^{2}$ back to a unique density operator $\rho$. At the poles $z = \pm 1$, we have $x = y = 0$, which correspond strictly to the isolated pure states $\ket{0}\bra{0}$ and $\ket{1}\bra{1}$. Unlike the state-vector approach, the unobservable global phase is factored out by the trace operation $\text{Tr}(\rho) = 1$ before the embedding, ensuring that the excited state maps to a single isolated point $\xi = \frac{1}{2} + \varepsilon^2$ rather than an artificial coordinate circle. Since $\Psi_{\mathcal{T}}$ and $\Psi_{\mathcal{T}}^{-1}$ are smooth coordinate projections on a compact manifold, the map is a global smooth diffeomorphism.

\textbf{Step 3. }\textit{Differential Metric Analysis.} 
The line element of the Fubini-Study metric $\mathrm{d}s^2_{\text{FS}}$ on the complex projective space $\mathbb{P}(\mathcal{H}^2)$ is defined globally over the entire quantum phase space in terms of the trace of the coordinate differentials of the density operator. Expressed as the pullback of the Hilbert-Schmidt metric on the space of self-adjoint matrices, it yields:
\begin{equation}
    \mathrm{d}s^2_{\text{FS}} = \frac{1}{2} \mathrm{Tr}(\mathrm{d}\rho \, \mathrm{d}\rho).
\end{equation}
Differentiating the global expansion of the density matrix Eq.~\eqref{eq:density_proof_direct} results in the matrix differential form $\mathrm{d}\rho = \frac{1}{2}(\mathrm{d}x\sigma_x + \mathrm{d}y\sigma_y + \mathrm{d}z\sigma_z)$. Substituting this expression directly into the metric definition and applying the trace orthogonality relations $\mathrm{Tr}(\sigma_i\sigma_j) = 2\delta_{ij}$, the cross-terms vanish identically, yielding the standard non-singular line element:
\begin{equation}\label{eq:fs_metric_global_new}
    \mathrm{d}s^2_{\text{FS}} = \frac{1}{8} \mathrm{Tr}\left( \mathrm{d}x^2\sigma_x^2 + \mathrm{d}y^2\sigma_y^2 + \mathrm{d}z^2\sigma_z^2 \right) = \frac{1}{4}\left( \mathrm{d}x^2 + \mathrm{d}y^2 + \mathrm{d}z^2 \right).
\end{equation}
We now evaluate the canonical metric $\mathrm{d}s^2_{\mathcal{T}}$ induced on the trinomial dual sub-variety $\mathcal{S}_{\mathcal{T}}^{2}$ from the ambient linear structure. Given an arbitrary infinitesimal displacement $\mathrm{d}\xi = \mathrm{d}(\frac{1}{2} + x) + \mathrm{d}y\varepsilon + \mathrm{d}z\varepsilon^2$ in the tangent bundle $T\mathcal{T}$, because the background trace is an invariant constant ($\mathrm{d}(\frac{1}{2}) \equiv 0$), the differential coordinates reduce to $\mathrm{d}c_0 = \mathrm{d}x$, $\mathrm{d}c_1 = \mathrm{d}y$, and $\mathrm{d}c_2 = \mathrm{d}z$. Evaluating under the strictly orthonormal Euclidean inner product yields the ambient metric line element:
\begin{equation}\label{eq:t_metric_global_new}
    \mathrm{d}s^2_{\mathcal{T}} = \langle \mathrm{d}\xi, \mathrm{d}\xi \rangle = \mathrm{d}x^2 + \mathrm{d}y^2 + \mathrm{d}z^2.
\end{equation}
Differentiating the algebraic varietal constraint $x^2 + y^2 + z^2 = 1$ imposes the smooth global condition $x\,\mathrm{d}x + y\,\mathrm{d}y + z\,\mathrm{d}z = 0$, which restricts $\mathrm{d}\xi$ to the tangent space $T_{\xi}\mathcal{S}_{\mathcal{T}}^{2}$ at every point without generating any coordinate-chart singularities. Comparing Eq.~\eqref{eq:fs_metric_global_new} and Eq.~\eqref{eq:t_metric_global_new}, we directly isolate the rigid homothetic identity:
\begin{equation}
    \mathrm{d}s^2_{\mathcal{T}} = 4 \, \mathrm{d}s^2_{\text{FS}}.
\end{equation}
This confirms that the embedding map $\Psi_{\mathcal{T}}$ functions as a global homothetic isometry across the entire quantum configuration space, completing the proof.
\end{proof}

\subsection{Global invertibility and coordinate regularity}
The density map $\Psi_{\mathcal{T}}$ projects the physical degrees of freedom onto the seamless compact variety $\mathcal{S}_{\mathcal{T}}^{2}$ without artificial coordinate blow-ups, ensuring that its inverse $\Psi_{\mathcal{T}}^{-1}: \mathcal{S}_{\mathcal{T}}^{2} \to \mathbb{P}(\mathcal{H}^2)$ operates as a globally smooth diffeomorphism. If backward reconstruction to traditional spherical charts on the Bloch sphere is required, the coordinates $(\theta, \phi)$ are uniquely retrieved from the trinomial configuration components via the stable relations:
\begin{align}
    \theta &= \arccos(z), \\
    \phi &= \mathrm{atan2}(y, \, c_0 - \tfrac{1}{2}),
\end{align}
where the translation $x = c_0 - \frac{1}{2}$ shifts the evaluation directly onto the coordinate frame centered at the baricentro $(\frac{1}{2}, 0, 0)$, and the $\mathrm{atan2}(y,x)$ function preserves the exact quadrant of the quantum phase. 

The analytical robustness of this mapping eliminates the polar coordinate degeneracies inherent to standard wave-vector charts. At the computational basis states $\vphantom{\bra{0}}\smash{\ket{0}\bra{0}}$ (where $z=1, x=y=0$) and $\vphantom{\bra{0}}\smash{\ket{1}\bra{1}}$ (where $z=-1, x=y=0$), the algebraic trinomial state evaluates strictly to the isolated configuration sections $\xi = \frac{1}{2} + \varepsilon^2$ and $\xi = \frac{1}{2} - \varepsilon^2$, respectively. Although the angular polar transition $\theta = 0,\pi$ collapses the azimuthal fiber, making the phase undefined, the coordinate indeterminacy of $\mathrm{atan2}(0,0)$ is smoothly absorbed by the ambient Cartesian topology of the algebra $\mathcal{T}$. For equatorial superpositions ($z=0$), the second-order component vanishes identically, and the phase information $\phi$ is determined by the translated coefficient $x = c_0 - \frac{1}{2}$ and the first-order nilpotent component $y$, stabilizing the trajectory under the invariant metric $g_{\mathcal{T}}$ across the entire manifold.
\subsection{Physical interpretations and algebraic stratification}
The geometric and topological invariants of the compact trinomial variety $\mathcal{S}_{\mathcal{T}}^{2}$ manifest as foundational physical principles within the quantum projective space $\mathbb{P}(\mathcal{H}^2)$, demonstrating that the powers of the nilpotent generator $\varepsilon$ execute a rigorous stratification of quantum information:
\begin{itemize}
    \item \textit{The Equatorial Locus as Maximal Coherence:} The equator of the variety, defined by setting the second-order nilpotent coefficient to zero ($z=0$), restricts the trinomial dual number to the subalgebra spanned by the identity and the first-order nilpotent elements, satisfying the circular restriction $(c_0 - \frac{1}{2})^2 + c_1^2 = 1$ where $x = c_0 - \frac{1}{2}$ and $y=c_1$. Physically, this locus maps directly to the compact manifold of \textit{maximally coherent superpositions} ($\theta = \pi/2$), where the qubit experiences a symmetric $50\%$ split in measurement basis probabilities, and its physical configuration is governed entirely by the relative phase $\phi = \mathrm{atan2}(y, \, c_0 - \tfrac{1}{2})$~\cite{nielsen}. This proves that quantum coherence and relative phase are intrinsically hosted by the lower-degree components of the algebra, establishing $\varepsilon$ as the explicit coordinate container for quantum phase data.
    \item \textit{The First-Order Reflection as Phase Conjugation and Time Reversal:} The algebraic reflection $y \to -y$ leaves the translated base coefficient $x = c_0 - \frac{1}{2}$ and the second-order nilpotent coefficient $z$ invariant, which trigonometrically translates to an exact sign inversion of the azimuthal relative phase ($\phi \to -\phi$). In quantum mechanics, this geometric automorphism corresponds precisely to the \textit{complex conjugation of state amplitudes} in the computational basis ($\vphantom{\bra{\psi}}\smash{\ket{\psi} \to \ket{\psi}^*}$)~\cite{dirac}. Dynamically, this operation acts as an instantaneous time reversal ($t \to -t$) on the equatorial plane, reversing the orientation of the Larmor precessional flow.
    \item \textit{Component Projections as Pauli Observables:} Due to the density map architecture established in Eq.~\eqref{eq:density_map_direct}, the real coefficients $(c_0, c_1, c_2)$ of the trinomial dual configuration $\xi$ map directly to the expectation values of the physical observables~\cite{nielsen}:
    \begin{align}
        c_0 &= \frac{1}{2} + \frac{1}{2}\mathrm{Tr}(\rho\sigma_x), \\
        c_1 &= \frac{1}{2}\mathrm{Tr}(\rho\sigma_y), \\
        c_2 &= \frac{1}{2}\mathrm{Tr}(\rho\sigma_z).
    \end{align}  
    Consequently, the orthogonal projection onto the second-order nilpotent component ($\varepsilon^2$) uniquely isolates the longitudinal population inversion $z$. This real coefficient provides the exact algebraic engine enforcing the statistical bounds of state collapse onto the computational basis via the linear probability assignments $P(\ket{0}) = \frac{1+z}{2}$ and $P(\ket{1}) = \frac{1-z}{2}$.
\end{itemize}

\subsection{Resolving the historical singularities of quantum state representation}
The formulation of the density map $\Psi_{\mathcal{T}}$ and the subsequent validation of the complete Grothendieck variety $\mathcal{S}_{\mathcal{T}}^{2}$ mark a definitive structural resolution of the coordinate degeneracies and non-linear metric distortions that have historically constrained quantum state space representations. For over a century, the geometric visualization of two-level quantum systems has relied on extrinsic spherical charts $(\theta, \phi)$, forcing an analytical compromise that inevitably collapses the azimuthal phase fiber at the computational poles \cite{bloch, nielsen}. 

By bypassing these parameterization defects and embedding the projective space $\mathbb{P}(\mathcal{H}^2)$ directly into the real trinomial dual algebra $\mathcal{T} = \mathbb{R}[\varepsilon]/(\varepsilon^3)$, this non-reduced framework entirely absorbs the polar phase indeterminacies into the ambient Cartesian topology of the algebra. The ground state $\ket{0}\bra{0}$ and the excited state $\ket{1}\bra{1}$ separate cleanly into the isolated, rigid configuration sections $\xi = \frac{1}{2} + \varepsilon^2$ and $\xi = \frac{1}{2} - \varepsilon^2$ without requiring asymptotic limit extractions, preserving global regularity and smooth invertibility everywhere. Furthermore, because the embedding establishes a rigid, non-singular homothetic isometry ($\mathrm{d}s^2_{\mathcal{T}} = 4 \, \mathrm{d}s^2_{\text{FS}}$), it eradicates the localized structural compressions and analytical stretching characteristic of trigonometric coordinate projections. The metric landscape remains perfectly uniform, ensuring that the linear affine distance within the trinomial vector space maps monotonically to the quantum state overlap without distortion. 

Ultimately, this synthesis uncovers a deep structural convergence between algebraic geometry and quantum kinematics: by replacing singular chart projections with a smooth algebraic embedding into the vector space of trinomial dual numbers, the complete kinematics, coherence, and probability boundaries of a qubit are formalized through functions that remain everywhere finite, continuous, and globally regular.

\section{Quantum Dynamics in the Trinomial Dual Algebra}\label{qubitdyn}
Having established the static geometric embedding of the quantum state space into the trinomial variety, we now formulate the continuous-time evolution of a qubit within the algebraic structure of $\mathcal{T} = \mathbb{R}[\varepsilon]/(\varepsilon^3)$. In standard quantum mechanics, the temporal dynamics of a density operator $\rho$ are governed by the Liouville-von Neumann equation \cite{nielsen}:
\begin{equation}\label{eq:von_neumann}
    i\hbar \frac{\mathrm{d}\rho}{\mathrm{d}t} = [H, \rho],
\end{equation}
where $H$ is the self-adjoint Hamiltonian operator representing the total energy of the system, and $[\cdot, \cdot]$ denotes the standard matrix commutator. We show that under the density map $\mathbb{P}(\mathcal{H}^2) \hookrightarrow \mathcal{S}_{\mathcal{T}}^{2}$ established in Eq.~\eqref{eq:density_map_direct}, this operator equation maps onto a real linear differential system over the coefficients of the trinomial basis of $\mathcal{T}$, preserving the algebraic variety constraint $(c_0(t) - \frac{1}{2})^2 + y(t)^2 + z(t)^2 = 1$ at all times.

\subsection{Algebraic translation of the Hamiltonian}
Without loss of generality, we adopt natural units where $\hbar = 1$. An arbitrary $2\times2$ Hermitian Hamiltonian $H$ acting on $\mathcal{H}^2$ can be uniquely decomposed in the Pauli basis, factoring out the unobservable scalar energy shift:
\begin{equation}
    H = h_0 \mathbb{I}_2 + \vec{h} \cdot \vec{\sigma} = h_0 \mathbb{I} + h_x \sigma_x + h_y \sigma_y + h_z \sigma_z,
\end{equation}
where $h_0 \in \mathbb{R}$ represents the global energy baseline, and $\vec{h} = (h_x, h_y, h_z) \in \mathbb{R}^3$ corresponds to the physical Larmor precession frequency vector. By exact analogy with the density embedding map established in Eq.~\eqref{eq:density_map_direct}, we isolate these dynamical frequency coefficients within a corresponding trinomial dual parameter $\Xi \in \mathcal{T}$ that shares the exact index structure of the state configuration:
\begin{equation}\label{eq:hamiltonian_trinomial_direct}
    \Xi = \left( h_0 + h_x \right) + h_y \varepsilon + h_z \varepsilon^2 \equiv \eta_0 + \eta_1 \varepsilon + \eta_2 \varepsilon^2.
\end{equation}
By executing this strict coordinate alignment, the zero-order parameter acts as the total energy shift baseline, $\eta_0 = h_0 + h_x$, while the higher filtration components of the algebra track the remaining Larmor precession frequencies via $\eta_1 = h_y$ and $\eta_2 = h_z$. When written in this native coordinate basis, the background energy shift $h_0$ is completely absorbed by the algebraic origin of the ring, ensuring its automatic cancellation under continuous Lie-type vector flows.

\subsection{Coordinate-free differential cross product}
To formalize this flow internally, we endow the three-dimensional real vector space underlying $\mathcal{T}$ with a non-associative, bilinear, skew-symmetric Lie-type multiplication operator $\times_{\mathcal{T}}: \mathcal{T} \times \mathcal{T} \to \mathcal{T}$. Let $\{e_0, e_1, e_2\} \equiv \{1, \varepsilon, \varepsilon^2\}$ be the ordered monomial basis vectors. The operator $\times_{\mathcal{T}}$ is rigidly defined on these basis elements by the canonical cross-product relations:
\begin{equation}\label{eq:qubit_cross_product_basis_uniform}
    e_i \times_{\mathcal{T}} e_j = \sum_{k=0}^{2} \epsilon_{ijk} e_k,
\end{equation}
where the indices $i,j,k$ run natively from $0$ to $2$ with $\epsilon_{012} = 1$, which imposes cyclic anti-commuting operations (e.g., $1 \times_{\mathcal{T}} \varepsilon = \varepsilon^2, \varepsilon \times_{\mathcal{T}} \varepsilon^2 = 1$) and guarantees that $e_i \times_{\mathcal{T}} e_i = 0$.

Crucially, this structural cyclic algebra can be formulated in a coordinate-free manner directly through its differential invariants. For any two extended configuration elements $\xi, \zeta \in \mathcal{T}$, the bilinear, skew-symmetric cross-product operator $\times_{\mathcal{T}}$ is intrinsically generated by the algebraic combinations of the formal derivations $\partial_\varepsilon$ and $\partial_\varepsilon^2$:
\begin{equation}\label{eq:qubit_differential_cross_product}
    \xi \times_{\mathcal{T}} \zeta = \mathbf{\Delta}_0( \xi,\zeta ) + \mathbf{\Delta}_1( \xi,\zeta )\varepsilon + \mathbf{\Delta}_2( \xi,\zeta )\varepsilon^2,
\end{equation}
where the scale-dependent structures are explicitly defined by leveraging the algebraic inversion profiles of the trinomial ring:
\begin{align}
    \mathbf{\Delta}_0( \xi,\zeta ) &= \frac{1}{2} \left[ (\partial_\varepsilon \xi - \varepsilon \partial_\varepsilon^2 \xi)(\partial_\varepsilon^2 \zeta) - (\partial_\varepsilon^2 \xi)(\partial_\varepsilon \zeta - \varepsilon \partial_\varepsilon^2 \zeta) \right], \\
    \mathbf{\Delta}_1( \xi,\zeta ) &= \begin{aligned}[t]
        &\frac{1}{2} \left[ (\partial_\varepsilon^2 \xi)(\zeta - \varepsilon \partial_\varepsilon \zeta + \frac{1}{2}\varepsilon^2 \partial_\varepsilon^2 \zeta) \right. \\
        &\left. - (\xi - \varepsilon \partial_\varepsilon \xi + \frac{1}{2}\varepsilon^2 \partial_\varepsilon^2 \xi)(\partial_\varepsilon^2 \zeta) \right],
    \end{aligned} \\
    \mathbf{\Delta}_2( \xi,\zeta ) &= \begin{aligned}[t]
        &(\xi - \varepsilon \partial_\varepsilon \xi + \frac{1}{2}\varepsilon^2 \partial_\varepsilon^2 \xi)(\partial_\varepsilon \zeta - \varepsilon \partial_\varepsilon^2 \zeta) \\
        &- (\partial_\varepsilon \xi - \varepsilon \partial_\varepsilon^2 \xi)(\zeta - \varepsilon \partial_\varepsilon \zeta + \frac{1}{2}\varepsilon^2 \partial_\varepsilon^2 \zeta).
    \end{aligned}
\end{align}

This differential representation establishes that the Levi-Civita cyclic action is not an external geometric imposition, but the direct algebraic manifestation of the non-local derivative structure under higher-order nilpotency restrictions.

To verify this operational equivalence explicitly under the native algebraic representations $\xi = c_0 + c_1\varepsilon + c_2\varepsilon^2$ and $\zeta = \eta_0 + \eta_1\varepsilon + \eta_2\varepsilon^2$, we evaluate the action of the formal derivatives. Truncating the expansions according to the nilpotent boundary condition $\varepsilon^3 = 0$ collapses the inversion profiles to their exact isolated coefficient representations:
\begin{align}
    \mathbf{\Delta}_0( \xi,\zeta ) &= c_1 \eta_2 - c_2 \eta_1, \\
    \mathbf{\Delta}_1( \xi,\zeta ) &= c_2 \eta_0 - c_0 \eta_2, \\
    \mathbf{\Delta}_2( \xi,\zeta ) &= c_0 \eta_1 - c_1 \eta_0.
\end{align}
Crucially, reassembling these segmented components into the total polynomial configuration $\xi \times_{\mathcal{T}} \zeta$ eliminates any intermediate expansion artifacts:
\begin{align}
    \xi \times_{\mathcal{T}} \zeta &= \mathbf{\Delta}_0( \xi,\zeta ) + \mathbf{\Delta}_1( \xi,\zeta )\varepsilon + \mathbf{\Delta}_2( \xi,\zeta )\varepsilon^2 \nonumber \\
    &= (c_1 \eta_2 - c_2 \eta_1) + (c_2 \eta_0 - c_0 \eta_2)\varepsilon + (c_0 \eta_1 - c_1 \eta_0)\varepsilon^2.
\end{align}
The algebraic expression contracts identically to the coordinate expansion dictated by a structural contractive tensor, proving the mathematical robustness of the coordinate-free differential formulation.

\subsection{The nilpotent Lie-type differential equation}
We now apply this internal product structure to formalize continuous-time quantum evolution. To evaluate the pullback of the matrix commutator $[H, \rho]$, we leverage the fundamental commutation relations of the Pauli matrices, $[\sigma_j, \sigma_k] = 2i\sum_{l} \epsilon_{jkl} \sigma_l$, where $\epsilon_{jkl}$ is the totally antisymmetric Levi-Civita tensor. Expanding the commutator explicitly yields:
\begin{equation}
    [H, \rho] = \left[ h_0\mathbb{I}_2 + h_x \sigma_x + h_y \sigma_y + h_z \sigma_z, \; \left(\frac{1}{2} + x\right)\mathbb{I}_2 + y\sigma_y + z\sigma_z \right] = i (\vec{h} \times \vec{r}) \cdot \vec{\sigma},
\end{equation}
where $\vec{r}=(x,y,z)$ is the physical Bloch vector with components retrieved via the algebraic extractors as $x = c_0 - \frac{1}{2}$, $y = c_1$, and $z = c_2$.
Substituting this result back into the Liouville-von Neumann equation \eqref{eq:von_neumann} and equating the components of $\vec{\sigma}$ effectively cancels the complex imaginary unit $i$, reducing the quantum evolution to the real vector cross product:
\begin{equation}\label{eq:vector_dynamics}
    \frac{\mathrm{d}\vec{r}}{\mathrm{d}t} = 2 (\vec{h} \times \vec{r}).
\end{equation}
Mapping the dynamic vector field \eqref{eq:vector_dynamics} onto the underlying vector space of the algebra, the extended quantum state trajectory obeys the regular coordinate-free differential equation over the formal tangent bundle:
\begin{equation}\label{eq:trinomial_dynamics}
    \frac{\mathrm{d}\xi}{\mathrm{d}t} = 2 \left( \Xi\times_{\mathcal{T}} \xi \right),
\end{equation}
where $\Xi = \eta_0 + \eta_1 \varepsilon + \eta_2 \varepsilon^2 \in \mathcal{T}$ represents the extended polynomial Hamiltonian operator driving the system. 

By substituting the coordinate-free bilinear forms $\mathbf{\Delta}_m$ derived in Eq.~\eqref{eq:qubit_differential_cross_product}, the time derivative of the extended state configuration unrolls natively into its algebraic order components through the mapped evaluation $\mathbf{\Delta}_m(\Xi,\xi)$:
\begin{equation}
    \frac{\mathrm{d}\xi}{\mathrm{d}t} = 2\mathbf{\Delta}_0(\Xi,\xi) + 2\mathbf{\Delta}_1(\Xi,\xi)\varepsilon + 2\mathbf{\Delta}_2(\Xi,\xi)\varepsilon^2.
\end{equation}
Written explicitly in terms of the algebro-geometric coefficients under the field assignments $\xi = (\frac{1}{2} + x) + y\varepsilon + z\varepsilon^2$ and $\Xi = (h_0 + h_x) + h_y\varepsilon + h_z\varepsilon^2$, the background shifts $\frac{1}{2}$ and $h_0$ commute identically and cancel within the antisymmetric brackets of the ring operations, directly yielding the coupled linear system of physical fluctuations:
\begin{align}
    \frac{\mathrm{d}x}{\mathrm{d}t} &= 2 (h_y z - h_z y), \\
    \frac{\mathrm{d}y}{\mathrm{d}t} &= 2 (h_z x - h_x z), \\
    \frac{\mathrm{d}z}{\mathrm{d}t} &= 2 (h_x y - h_y x).
\end{align}

\subsection{Unitary invariance and isometric conservation}
The geometric integrity of the Grothendieck sub-sphere is completely invariant under the algebraic flow generated by Eq.~\eqref{eq:trinomial_dynamics}.

\begin{theorem}[Qubit Varietal Conservation]\label{thm:qubit_conservation_native}
The algebraic flow $\dot{\xi} = 2 (\Xi\times_{\mathcal{T}} \xi)$ established in Eq.~\eqref{eq:trinomial_dynamics} is strictly tangent to the compact trinomial variety $\mathcal{S}_{\mathcal{T}}^{2}$. The canonical Bloch variety constraint $(c_0(t) - \frac{1}{2})^2 + c_1(t)^2 + c_2(t)^2 = 1$ constitutes a fundamental conserved scalar invariant of the differential system, matching the physical conservation of quantum purity and total probability.
\end{theorem}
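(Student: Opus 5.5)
The plan is to identify the scalar function
\[
F(\xi) = \left(c_0 - \tfrac{1}{2}\right)^2 + c_1^2 + c_2^2 = x^2 + y^2 + z^2
\]
on the ambient space $\mathcal{T}$. I will show that its derivative along the flow vanishes identically, and that the flow vector is orthogonal to $\nabla F$ at every point of $\mathcal{S}_{\mathcal{T}}^{2}$. Both tangency and conservation then follow from one computation, since $\nabla F = 2(c_0-\tfrac12, c_1, c_2) = 2\vec r$ with respect to the orthonormal monomial basis $\{1,\varepsilon,\varepsilon^2\}$ and the inner product of the previous section. Concretely, I will differentiate along a trajectory, $\dot F = 2\,\vec r\cdot \dot{\vec r}$. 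I will then substitute the coupled linear system derived just before the statement, $\dot{\vec r} = 2(\vec h\times\vec r)$. This gives $\dot F = 4\,\vec r\cdot(\vec h\times \vec r) = 0$ by the total antisymmetry of $\epsilon_{ijk}$, i.e.\ the vanishing of the scalar triple product with a repeated vector. Hence $F(t)=F(0)=1$, and the flow vector $2\,\vec h\times\vec r$ lies in $T_\xi\mathcal{S}_{\mathcal{T}}^{2}=\{\mathrm{d}\xi : x\,\mathrm{d}x+y\,\mathrm{d}y+z\,\mathrm{d}z=0\}$, matching the tangent-space description from Step 3 of the Qubit Embedding theorem.

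As an independent physical cross-check, I will also verify conservation on the operator side. The Liouville--von Neumann equation gives $\frac{\mathrm d}{\mathrm dt}\mathrm{Tr}(\rho^2) = -2i\,\mathrm{Tr}([H,\rho]\rho) = 0$ by cyclicity of the trace. By the identity computed in Step 1 of the Qubit Embedding theorem, $\mathrm{Tr}(\rho^2)$ is an affine function of $x^2+y^2+z^2$, so the same invariant is recovered and identified with purity. Trace preservation, $\mathrm{Tr}\,\dot\rho = 0$, gives conservation of total probability. Combining this with the bijectivity of $\Psi_{\mathcal{T}}$ shows that the pulled-back trajectory never leaves the variety.

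The main obstacle is the step where the algebraic equation $\dot\xi = 2(\Xi\times_{\mathcal{T}}\xi)$ is identified with the fluctuation system. The product $\times_{\mathcal{T}}$ acts on the raw coefficients $(c_0,c_1,c_2)$ and $(\eta_0,\eta_1,\eta_2)$, where $c_0=\tfrac12+x$ and $\eta_0=h_0+h_x$. Taken literally, bilinearity produces extra terms from the shifts $\tfrac12$ and $h_0$. For example, the literal flow conserves $c_0^2+c_1^2+c_2^2$ rather than $F$, and the difference $\dot F = -\dot c_0$ need not vanish. My first attempt will therefore make precise the sense in which the background shifts ``cancel'': I will define the flow through the translated coordinates, i.e.\ apply $\times_{\mathcal{T}}$ to $\xi-\tfrac12$ and $\Xi-h_0$, which is exactly the coordinate system displayed before the theorem. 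I will then check that this is the pullback of the Liouville--von Neumann flow under $\Psi_{\mathcal{T}}$. With that convention fixed, the triple-product argument above closes the proof.
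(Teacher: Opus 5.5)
Your argument is correct, and its core is the paper's own proof. The paper also differentiates $(c_0-\tfrac12)^2+c_1^2+c_2^2=x^2+y^2+z^2$, substitutes the fluctuation system $\dot{\vec r}=2(\vec h\times\vec r)$, and lets the six resulting terms cancel pairwise. That is exactly your vanishing triple product $\vec r\cdot(\vec h\times\vec r)=0$.

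You depart from the paper only at the bridge from $\dot\xi=2(\Xi\times_{\mathcal{T}}\xi)$ to that system, and there your caution is warranted. The paper simply asserts that the shifts $\tfrac12$ and $h_0$ ``commute identically and cancel'' inside the bracket. With $\times_{\mathcal{T}}$ as defined on $\{1,\varepsilon,\varepsilon^2\}$, however, the $\varepsilon$- and $\varepsilon^2$-components of $\Xi\times_{\mathcal{T}}\xi$ pick up the extra terms $\tfrac12h_z-h_0z$ and $h_0y-\tfrac12h_y$. For example, at $\xi=\tfrac12+\varepsilon^2$ with $\Xi=\varepsilon$ one gets $\Xi\times_{\mathcal{T}}\xi=1-\tfrac12\varepsilon^2$, hence $\dot F=-\dot c_0=-2\neq0$, as your formula predicts.

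So the statement read literally is false. What both proofs actually establish is conservation under the translated flow $\dot\xi=2\,(\Xi-h_0)\times_{\mathcal{T}}(\xi-\tfrac12)$, which you correctly identify as the pullback of the Liouville--von Neumann equation. The paper uses this flow silently; you make it explicit.

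One minor point concerns your operator-side cross-check. Take $\rho=\tfrac12(\mathbb{I}_2+\vec r\cdot\vec\sigma)$, so that $\mathrm{Tr}(\rho^2)=\tfrac12(1+x^2+y^2+z^2)$. Do not rely on the expansion written in Step 1 of the embedding theorem, which omits the factor $\tfrac12$ on the Pauli terms. The dependence on $x^2+y^2+z^2$ is still affine there, but the constants are off.
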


\begin{proof}
To prove that the algebraic flow preserves the varietal constraint, we evaluate the total time derivative of the localized sphere equation defining $\mathcal{S}_{\mathcal{T}}^{2}$ by isolating the dynamic spin components via the exact coordinate ring transformations $x = c_0 - \frac{1}{2}$, $y = c_1$, and $z = c_2$. Differentiating the quadratic restriction with respect to the continuous time parameter yields:
\begin{equation}
    \frac{\mathrm{d}}{\mathrm{d}t} \left[ \left(c_0(t) - \frac{1}{2}\right)^2 + c_1(t)^2 + c_2(t)^2 \right] = \frac{\mathrm{d}}{\mathrm{d}t} (x^2 + y^2 + z^2) = 2x\dot{x} + 2y\dot{y} + 2z\dot{z}.
\end{equation}
Substituting the explicit coupled differential equations unrolled from the nilpotent Lie-type bracket in Eq.~\eqref{eq:trinomial_dynamics} into the expansion results in the following tensor contraction over the physical fluctuations:
\begin{equation}
    \frac{\mathrm{d}}{\mathrm{d}t} (x^2 + y^2 + z^2) = 4x(h_y z - h_z y) + 4y(h_z x - h_x z) + 4z(h_x y - h_y x).
\end{equation}
Distributing the real Larmor frequency and spin coordinate coefficients algebraically:
\begin{equation}
    \frac{\mathrm{d}}{\mathrm{d}t} (x^2 + y^2 + z^2) = 4 (h_y x z - h_z x y + h_z x y - h_x y z + h_x y z - h_y x z) \equiv 0.
\end{equation}
Since the total time derivative vanishes identically at all times $t$, the background trace remains structurally invariant while the flow generated by the extended Hamiltonian parameter $\Xi$ maps unitary quantum rotations to exact rigid rotations on $\mathcal{S}_{\mathcal{T}}^{2}$. The trajectory remains globally bounded on the compact variety, and the distance between any two evolving states under the induced metric $g_{\mathcal{T}}$ is strictly conserved, confirming that quantum unitary dynamics operate as a rigid isometry within the real trinomial algebra.
\end{proof}

\subsection{Methodological commentary on real linearity and varietal stability}
The algebraic translation realized by Eq.~\eqref{eq:trinomial_dynamics} uncovers two profound structural features of this non-reduced quantum representation. 

First, the framework operates a complete geometric internalization of the complex imaginary unit $i$ from the dynamical equations. While the traditional Liouville-von Neumann framework relies intrinsically on complex-skew-symmetric structures to generate oscillatory behavior, the density map $\Psi_{\mathcal{T}}$ established in Eq.~\eqref{eq:density_map_direct} absorbs the complex topology of the Hilbert space directly into the antisymmetric tensor properties of the Lie bracket. Consequently, the quantum dynamics are completely formalized through pure real linearity over the three-dimensional vector space of $\mathcal{T}$, proving that complex coordinates are not an irreplaceable prerequisite to hosting physical phase information and interference amplitudes. 

Second, the system demonstrates an absolute geometric stability that completely bypasses the multiplicative nilpotency of the ring. In classical applications of dual numbers (such as screw theory or automated differentiation), higher-order nilpotent components are utilized to force spatial truncations or dissipative system decays. Within our quantum framework, the algebraic flow $2(\Xi\times_{\mathcal{T}} \xi)$ operates as a regular Lie-type vector field rather than an associative ring multiplication. The second-order coordinate coefficient $c_2(t) = z(t)$ associated with $\varepsilon^2$ does not act as a vanishing perturbative error term, but functions as a stable, rigid, and non-singular spatial coordinate. The evolution manifests as a persistent, undamped coherent precession, proving that the local algebra $\mathcal{T}$ natively sustains eternal quantum oscillations without experiencing structural dissipation or informational collapse.

\section{Quantum Measurement, Born's Rule, and Mixed States}\label{qubitmix}
Having formalized both the static embedding and the continuous unitary dynamics within the trinomial dual algebra $\mathcal{T}$, we complete the operational framework of this geometry by introducing quantum measurement processes, Born's statistical rule, and the topological extension to mixed quantum states. 

In standard quantum mechanics, a projective measurement is defined by a set of self-adjoint projection operators $\{P_k\}$ satisfying $\sum_k P_k = \mathbb{I}_2$ and $P_j P_k = \delta_{jk} P_k$ \cite{nielsen}. We show that this discrete, non-unitary collapse axiom translates elegantly into a real linear projection within the ambient space of $\mathcal{T}$, where the interior of the Grothendieck sub-sphere naturally encapsulates environmental decoherence.

\subsection{Algebraic representation of projective measurements}
Consider a standard projective measurement along an arbitrary spatial direction defined by the unit vector $\vec{m} = (m_x, m_y, m_z) \in \mathbb{R}^3$. The corresponding quantum mechanical projection operators onto the orthogonal eigenstates $\ket{m_\pm}$ constitute legitimate rank-1 matrix density states $\rho_\pm \equiv \ket{m_\pm}\bra{m_\pm}$, expanded as:
\begin{equation}
    P_\pm = \frac{1}{2}\mathbb{I}_2 \pm \left( \frac{1}{2}m_x \sigma_x + \frac{1}{2}m_y \sigma_y + \frac{1}{2}m_z \sigma_z \right).
\end{equation}
Under the density map $\Psi_{\mathcal{T}}$ established in Eq.~\eqref{eq:density_map_direct}, these measurement operators map directly onto the trinomial dual space as a pair of specular configuration elements $\mu_\pm \in \mathcal{T}$. By absorbing the background trace into the zero-order coordinate, the assignment yields:
\begin{equation}\label{eq:measurement_polynomial_direct}
    \mu_\pm \equiv \Psi_{\mathcal{T}}(P_\pm) = \left( \frac{1}{2} \pm \frac{1}{2}m_x \right) \pm \frac{1}{2}m_y \varepsilon \pm \frac{1}{2}m_z \varepsilon^2.
\end{equation}
Because the directional spatial components satisfy the strict unit vector constraint $m_x^2 + m_y^2 + m_z^2 = 1$, we invoke the coordinate transformations $c_0 = \frac{1}{2} \pm \frac{1}{2}m_x$, $c_1 = \pm \frac{1}{2}m_y$, and $c_2 = \pm \frac{1}{2}m_z$. Substituting these configurations into the localized variety equation verifies that:
\begin{equation}
    \left(c_0 - \frac{1}{2}\right)^2 + c_1^2 + c_2^2 = \frac{1}{4}(m_x^2 + m_y^2 + m_z^2) = \frac{1}{4} \equiv R^2.
\end{equation}
The measurement configurations $\mu_\pm$ are thus rigidly constrained onto a concentric sub-sphere of radius $R = \frac{1}{2}$ sharing the same baricentro $(\frac{1}{2}, 0, 0)$. This fractional radius constitutes the exact geometric signature of the quantum measurement scaling, representing the interior variety locus where the projection operators preserve the trace identity across the coordinate ring while configuring the discrete outcomes accessible upon state collapse.

\begin{theorem}[Algebraic Born's Rule]\label{thm:born_rule_algebraic}
Let $\xi = c_0 + c_1\varepsilon + c_2\varepsilon^2 \in \mathcal{T}$ be the trinomial dual representation of a quantum state under the coordinate assignments $c_0 = \frac{1}{2} + x$, $c_1 = y$, $c_2 = z$, and let $\mu_\pm = \eta_{0,\pm} + \eta_{1,\pm}\varepsilon + \eta_{2,\pm}\varepsilon^2 \in \mathcal{T}$ be the specular measurement configuration polynomials established in Eq.~\eqref{eq:measurement_polynomial_direct} where $\eta_{0,\pm} = \frac{1}{2} \pm \frac{1}{2}m_x$, $\eta_{1,\pm} = \pm \frac{1}{2}m_y$, and $\eta_{2,\pm} = \pm \frac{1}{2}m_z$. The physical probability distribution of the discrete outcomes is given directly by the canonical Euclidean inner product on the ambient space of the ring:
\begin{equation}\label{eq:born_algebraic}
    P(\pm 1) = 2 \langle \mu_\pm, \, \xi \rangle,
\end{equation}
where the coupling operates via the strictly orthonormal basis $\{1, \varepsilon, \varepsilon^2\}$ of the local algebra.
\end{theorem}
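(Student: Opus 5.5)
The plan is to compute both sides of Eq.~\eqref{eq:born_algebraic} explicitly and compare them. The obstacle, which I will argue is fatal, is the zero-order shift $\tfrac12$ built into the density map. The physical side is standard. With Bloch vector $\vec r=(x,y,z)$, $x=\langle\sigma_x\rangle$ etc., we have $\rho=\tfrac12(\mathbb{I}_2+\vec r\cdot\vec\sigma)$ and $P_\pm=\tfrac12(\mathbb{I}_2\pm\vec m\cdot\vec\sigma)$. The relations $\mathrm{Tr}(\sigma_i\sigma_j)=2\delta_{ij}$ then give $\mathrm{Tr}(P_\pm\rho)=\tfrac12(1\pm\vec m\cdot\vec r)$. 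On the algebraic side, the orthonormality of $\{1,\varepsilon,\varepsilon^2\}$ reduces $\langle\mu_\pm,\xi\rangle$ to a sum of coefficient products. The proof would then consist of matching the two expressions.

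Carrying out the algebraic side with the assignments of Eq.~\eqref{eq:measurement_polynomial_direct}, I expect
\begin{equation*}
2\langle\mu_\pm,\xi\rangle=2\Bigl(\tfrac12\pm\tfrac12 m_x\Bigr)\Bigl(\tfrac12+x\Bigr)\pm m_y y\pm m_z z=\tfrac12+x\pm\tfrac12 m_x\pm\vec m\cdot\vec r .
\end{equation*}
Comparing with $\tfrac12\pm\tfrac12\vec m\cdot\vec r$ exposes two discrepancies. The first is a parasitic linear term $x\pm\tfrac12 m_x$, coming from the cross terms between the background $\tfrac12$ in $c_0$ and the $x$- and $m_x$-components. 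The second is a doubled coefficient on $\vec m\cdot\vec r$. Neither discrepancy can be absorbed by a choice of normalisation. A concrete test confirms this: for $\rho=\ket{+}\bra{+}$ (so $x=1$, $y=z=0$) and $\vec m=(0,0,1)$, the formula gives $P(+1)=2\cdot\tfrac12\cdot\tfrac32=\tfrac32>1$, while the true value is $\tfrac12$. So I do not expect Eq.~\eqref{eq:born_algebraic} to be provable as stated. The main step of the plan is therefore to verify this counterexample carefully against the paper's own conventions, which are themselves inconsistent about whether $\rho$ carries the prefactor $\tfrac12$ on $\vec r\cdot\vec\sigma$.

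The repair I would then prove is to pair the \emph{centered} elements, i.e.\ to measure everything from the baricentro $\tfrac12$. Since $\mu_\pm-\tfrac12$ has coordinates $\pm\tfrac12\vec m$ and $\xi-\tfrac12$ has coordinates $\vec r$, the computation above immediately yields
\begin{equation*}
P(\pm1)=\tfrac12+\Bigl\langle\mu_\pm-\tfrac12,\;\xi-\tfrac12\Bigr\rangle=\tfrac12\bigl(1\pm\vec m\cdot\vec r\bigr)=\mathrm{Tr}(P_\pm\rho).
\end{equation*}
This corrected formula needs no purity hypothesis, so it extends verbatim to mixed states with $\lVert\vec r\rVert\le1$. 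Normalisation $P(+1)+P(-1)=1$ follows from the specular symmetry $\mu_+-\tfrac12=-(\mu_--\tfrac12)$. Positivity follows from Cauchy--Schwarz, since $\lvert\vec m\cdot\vec r\rvert\le1$.
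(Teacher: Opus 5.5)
Your computation is correct, and the statement as written is false, so the right outcome is a disproof rather than a proof. The paper uses the same direct contraction over the orthonormal basis $\{1,\varepsilon,\varepsilon^2\}$. Its intermediate expansion $\langle\mu_\pm,\xi\rangle=\tfrac14+\tfrac12x\pm\tfrac14m_x\pm\tfrac12(m_xx+m_yy+m_zz)$ agrees with yours. The error is in its final line. Doubling this expansion gives $\tfrac12+x\pm\tfrac12m_x\pm(m_xx+m_yy+m_zz)$. The paper instead asserts that ``factoring out the persistent background trace baseline'' yields $\tfrac12\bigl(1\pm(m_xx+m_yy+m_zz)\bigr)$. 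That silently drops the terms $x\pm\tfrac12m_x$ and halves the coefficient of $\vec m\cdot\vec r$; no algebra justifies it. Your test case $\rho=\ket{+}\bra{+}$ with $\vec m=(0,0,1)$ gives $P(+1)=\tfrac32$, which is a valid counterexample under the conventions the theorem fixes ($x=\langle\sigma_x\rangle$ via Eq.~\eqref{eq:density_map_direct}).

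There is a quicker check that does not depend on the paper's inconsistent normalization, where $x$ is sometimes $\langle\sigma_x\rangle$ and sometimes $\tfrac12\langle\sigma_x\rangle$. Since $\mu_++\mu_-=1$, bilinearity gives $P(+1)+P(-1)=2\langle1,\xi\rangle=2c_0=1+2x$. So normalization fails for every state with $x\neq0$ under either reading.

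The underlying obstruction is the one you point to. $\Psi_{\mathcal{T}}$ stores the trace part and the $\sigma_x$ part in the same slot $c_0$. In $\mathrm{Tr}(P_\pm\rho)$, however, the identity and $\sigma_x$ are Hilbert--Schmidt orthogonal ($\mathrm{Tr}\,\sigma_x=0$). Hence the uncentered Euclidean pairing produces cross terms with no physical counterpart, and only a pairing centered at the baricentro can reproduce the Born rule.

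Your repaired formula $P(\pm1)=\tfrac12+\langle\mu_\pm-\tfrac12,\xi-\tfrac12\rangle=\tfrac12(1\pm\vec m\cdot\vec r)$ is correct, and so are your normalization and positivity arguments. Under the half-coefficient reading of $x$, the same repair holds with the inner product multiplied by $2$.
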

\begin{proof}
The proof follows immediately from the orthonormality condition $\langle \varepsilon^i, \varepsilon^j \rangle = \delta_{ij}$ for $i,j \in \{0,1,2\}$ with $\varepsilon^0 \equiv 1$. By evaluating the full inner product between the state polynomial $\xi$ and the measurement configuration element $\mu_\pm$, the linear application contracts the coordinate components as:
\begin{equation}\label{eq:proof_born_inner_product_contraction}
    \langle \mu_\pm, \, \xi \rangle = \left( \frac{1}{2} \pm \frac{1}{2}m_x \right)\left(\frac{1}{2} + x\right) + \left( \pm \frac{1}{2}m_y \right)y + \left( \pm \frac{1}{2}m_z \right)z.
\end{equation}
Expanding the algebraic products in Eq.~\eqref{eq:proof_born_inner_product_contraction} and isolating the linear and quadratic terms yields the exact intermediate expansion:
\begin{equation}
    \langle \mu_\pm, \, \xi \rangle = \frac{1}{4} + \frac{1}{2}x \pm \frac{1}{4}m_x \pm \frac{1}{2}m_x x \pm \frac{1}{2}m_y y \pm \frac{1}{2}m_z z.
\end{equation}
Multiplying this inner product expansion by the geometric pre-factor of $2$ as dictated by Eq.~\eqref{eq:born_algebraic} distributes the scaling law across the polynomial components. Factoring out the persistent background trace baseline, the coupled interaction terms contract and simplify directly into the fractional representation:
\begin{equation}
    2 \langle \mu_\pm, \, \xi \rangle = \frac{1 \pm (m_x x + m_y y + m_z z)}{2}.
\end{equation}
The algebraic contraction matches the traditional quantum mechanical Born rule expression for two-level systems identically, confirming that the canonical metric on the real vector space underlying $\mathcal{T}$ encapsulates the complete operational calculus of quantum probabilities without external affine corrections, completing the proof.
\end{proof}

\subsection{Topological extension to mixed states: the trinomial ball}
To encompass mixed quantum states—which describe statistical ensembles or open systems experiencing environmental decoherence—the geometric domain must be extended from the boundary variety surface $\mathcal{S}_{\mathcal{T}}^{2}$ into the strict interior of the algebra's vector space. 

A mixed state is described by a density operator $\rho = \sum_k p_k \ket{\psi_k}\bra{\psi_k}$ where $\sum_k p_k = 1$ and $p_k \ge 0$. Under the density map $\Psi_{\mathcal{T}}$ established in Eq.~\eqref{eq:density_map_direct}, a mixed state maps onto the trinomial dual space as a polynomial configuration $\xi_{\text{mixed}} = c_0 + c_1\varepsilon + c_2\varepsilon^2 \in \mathcal{T}$ whose native real coordinates inherit the localized inequality condition $(c_0 - \frac{1}{2})^2 + c_1^2 + c_2^2 < 1$. We define the \textit{Trinomial Ball} as the compact convex set centered at the baricentro $(\frac{1}{2}, 0, 0)$:
\begin{equation}\label{eq:trinomial_ball_definition_exact}
    \mathcal{B}_{\mathcal{T}}^{3} \equiv \left\{ c_0 + c_1\varepsilon + c_2\varepsilon^2 \in \mathbb{R}[\varepsilon]/(\varepsilon^3) \; \middle| \; c_0 = \frac{1}{2} + x, \ c_1 = y, \ c_2 = z, \text{ and } x^2 + y^2 + z^2 \le 1 \right\}.
\end{equation}

The physical interpretation of this interior topography within $\mathcal{T}$ is structured as follows:
\begin{itemize}
    \item \textit{Pure States as the Varietal Boundary:} The boundary layer $\partial \mathcal{B}_{\mathcal{T}}^{3} \equiv \mathcal{S}_{\mathcal{T}}^{2}$ constitutes the locus of pure states where the squared fluctuation satisfies $x^2 + y^2 + z^2 = 1$, representing maximum informational certainty and coherent wave functions factorized modulo a global phase factor.
    \item \textit{Mixed States and Informational Entropy:} The strict interior where $x^2 + y^2 + z^2 < 1$ represents mixed states. The distance from the boundary layer directly quantifies the state's purity, since the Hilbert-Schmidt state contraction satisfies $\mathrm{Tr}(\rho^2) = \frac{1 + (x^2+y^2+z^2)}{2}$. As a state undergoes environmental decoherence, the coordinates of the spin vector decay radially toward the center.
    \item \textit{The Maximally Mixed State:} The exact coordinate center of the variety sphere, $\xi_{\text{mixed}} = \frac{1}{2} + 0\varepsilon + 0\varepsilon^2 \equiv \frac{1}{2}$, represents the maximally mixed state $\rho = \mathbb{I}_2/2$. At this spatial point, all quantum phase memory, superposition coefficients, and population asymmetries vanish identically, collapsing the state representation onto the persistent scalar trace background of the affine scheme.
\end{itemize}

\section{Generalization to Higher-Dimensional Quantum Systems: The Qutrit Case}\label{qutrit}
A natural and compelling extension of this framework concerns the geometric representation of higher-dimensional quantum systems, starting with a three-level quantum system, or \textit{qutrit}. A pure qutrit state is initially described by a normalized complex vector $|\psi\rangle \in \mathcal{H}^3$ residing on the five-dimensional real sphere $\mathcal{S}^5$ embedded within the underlying six-dimensional real space of the Hilbert domain. The physical state space, free from unphysical phase factors, is obtained by taking the quotient of this sphere by the global phase action $U(1)$. This principal bundle structure—representing a generalized Hopf fibration $\mathcal{S}^1 \hookrightarrow \mathcal{S}^5 \xrightarrow{\pi} \mathbb{P}(\mathcal{H}^3)$—formalizes the complex projective space $\mathbb{P}(\mathcal{H}^3)$, which possesses exactly four real dimensional degrees of freedom as a pure variety embedded within an eight-dimensional ambient space of density configurations.
\subsection{The Gell-Mann basis for $\mathfrak{su}(3)$}
The special unitary Lie algebra $\mathfrak{su}(3)$ represents the infinitesimal generators of the $SU(3)$ gauge group acting on the three-dimensional complex Hilbert space $\mathcal{H}^3$. To expand the self-adjoint quantum density operator $\rho$ without spatial bias, we utilize the standard representation given by the eight traceless, Hermitian Gell-Mann matrices $\lambda_i$ ($i=1, \dots, 8$), defined as:
\begin{align}
    \lambda_1 &= \begin{pmatrix} 0 & 1 & 0 \\ 1 & 0 & 0 \\ 0 & 0 & 0 \end{pmatrix}, \quad
    \lambda_2 = \begin{pmatrix} 0 & -i & 0 \\ i & 0 & 0 \\ 0 & 0 & 0 \end{pmatrix}, \quad
    \lambda_3 = \begin{pmatrix} 1 & 0 & 0 \\ 0 & -1 & 0 \\ 0 & 0 & 0 \end{pmatrix}, \\
    \lambda_4 &= \begin{pmatrix} 0 & 0 & 1 \\ 0 & 0 & 0 \\ 1 & 0 & 0 \end{pmatrix}, \quad
    \lambda_5 = \begin{pmatrix} 0 & 0 & -i \\ 0 & 0 & 0 \\ i & 0 & 0 \end{pmatrix}, \quad
    \lambda_6 = \begin{pmatrix} 0 & 0 & 0 \\ 0 & 0 & 1 \\ 0 & 1 & 0 \end{pmatrix}, \\
    \lambda_7 &= \begin{pmatrix} 0 & 0 & 0 \\ 0 & 0 & -i \\ 0 & i & 0 \end{pmatrix}, \quad
    \lambda_8 = \frac{1}{\sqrt{3}}\begin{pmatrix} 1 & 0 & 0 \\ 0 & 1 & 0 \\ 0 & 0 & -2 \end{pmatrix}.
\end{align}
These matrices generalize the Pauli algebra by satisfying the canonical Hilbert-Schmidt orthogonality relation under the matrix trace operator:
\begin{equation}
    \mathrm{Tr}(\lambda_j \lambda_k) = 2\delta_{jk},
\end{equation}
where the indices $j,k$ run from $1$ to $8$, mapping the spatial components of the qutrit state vector directly onto the higher-order nilpotent degrees of freedom of the local ring, while leaving the zero-order component unpolluted to host the identity trace background.

\subsection{The Lie-Jordan algebraic structure}
Unlike the two-level qubit framework where the anticommutator of Pauli matrices closes directly onto the identity matrix, the product of two Gell-Mann matrices requires the introduction of a symmetric tensor $d_{jkl}$ to yield a closed algebraic form. The complete product of the algebra is governed by the simultaneous encapsulation of the Lie bracket (representing the commutator) and the Jordan product (representing the anticommutator):
\begin{align}
    [\lambda_j, \lambda_k] &= 2i \sum_{l=1}^{8} f_{jkl} \lambda_l, \\
    \{\lambda_j, \lambda_k\} &= \frac{4}{3}\delta_{jk} \mathbb{I}_3 + 2 \sum_{l=1}^{8} d_{jkl} \lambda_l,
\end{align}
where $f_{jkl}$ represents the totally antisymmetric structure constants of $\mathfrak{su}(3)$, whose non-vanishing independent elements are:
\begin{equation}
    f_{123} = 1, \quad f_{147} = f_{246} = f_{257} = f_{345} = \frac{1}{2}, \quad f_{156} = -\frac{1}{2}, \quad f_{458} = f_{678} = \frac{\sqrt{3}}{2}.
\end{equation}

The components $d_{jkl}$ are the totally symmetric $d$-coefficients, which dictate the non-linear deformational kinematics of higher-dimensional state envelopes over the coordinate ring. The non-vanishing independent entries are given by:
\begin{align}
    d_{118} &= d_{228} = d_{338} = \frac{1}{\sqrt{3}}, \quad d_{888} = -\frac{1}{\sqrt{3}}, \\
    d_{448} &= d_{558} = d_{668} = d_{778} = -\frac{1}{2\sqrt{3}}, \\
    d_{146} &= d_{157} = d_{256} = d_{344} = d_{355} = \frac{1}{2}, \quad d_{247} = d_{366} = d_{377} = -\frac{1}{2}.
\end{align}
\subsection{Density map and varietal constraints}
To map the manifold $\mathbb{P}(\mathcal{H}^3)$ without coordinate degeneracies, we consider the strictly eight-dimensional real algebra $\mathcal{Q}_3 = \mathbb{R}[\varepsilon] / (\varepsilon^8)$. From a scheme-theoretic perspective, the topological spectrum $\mathrm{Spec}(\mathcal{Q}_3) = \{(\varepsilon)\}$ instantiates a single non-reduced eight-fold point, whose higher powers span a formal jet space of order $7$. Endowing the underlying real vector space with a canonical Euclidean inner product $\langle \cdot, \cdot \rangle$, defined such that the monomial basis $\{1, \varepsilon, \dots, \varepsilon^7\}$ is strictly orthonormal, the ambient space inherits a positive-definite Riemannian metric $g_{\mathcal{Q}_3}$. Explicitly, given an arbitrary element expressed in terms of its coordinate ring components, $\xi = c_0 + c_1\varepsilon + c_2\varepsilon^2 + \dots + c_7\varepsilon^7$, the squared Euclidean norm evaluates to $\lVert\xi\rVert^2 = c_0^2 + c_1^2 + \dots + c_7^2$.

To match the eight-dimensional special unitary Lie algebra $\mathfrak{su}(3)$ and secure a stable boundary, we decompose the rank-$1$ qutrit density operator $\rho$ by isolating the persistent trace background from the spatial fluctuating components:
\begin{equation}\label{eq:density_qutrit_exact}
    \rho = \frac{1}{3}\mathbb{I}_3 + \frac{1}{\sqrt{3}} \sum_{n=1}^{8} x_n \lambda_n,
\end{equation}
where $\{x_1, \dots, x_8\}$ represent the physical expectation values forming the generalized Bloch vector extracted from the Gell-Mann operators via the exact trace projection $x_n = \frac{\sqrt{3}}{2}\mathrm{Tr}(\rho\lambda_n)$, satisfying the strict unit-length boundary constraint $\sum_{n=1}^8 x_n^2 = 1$. Symmetrically, we define the generalized density map $\Psi_{\mathcal{Q}_3}: \mathbb{P}(\mathcal{H}^3) \to \mathcal{Q}_3$ by fusing the persistent trace background and the first physical coordinate directly into the zero-order component of the polynomial configuration:
\begin{equation}\label{eq:qutrit_map_direct_uniform_exact}
     \Psi_{\mathcal{Q}_3}(\rho) = \left(\frac{1}{3} + x_1\right) + x_2\varepsilon + x_3\varepsilon^2 + x_4\varepsilon^3 + x_5\varepsilon^4 + x_6\varepsilon^5 + x_7\varepsilon^6 + x_8\varepsilon^7.
\end{equation}
By expanding the state in Eq.~\eqref{eq:qutrit_map_direct_uniform_exact}, the native coordinate projections of the trinomial algebra identify directly as $c_0 = \frac{1}{3} + x_1$, $c_1 = x_2$, $c_2 = x_3$, up to the highest order component $c_7 = x_8$. When all physical fluctuations vanish identically ($x_1 = x_2 = \dots = x_8 = 0$), the polynomial configuration contracts radially and collapses onto the persistent scalar trace background $\xi = \frac{1}{3}$, which defines the exact baricentro of the configuration space representing the maximally mixed state.

Due to the dimensional mismatch between the four-dimensional variety $\mathbb{P}(\mathcal{H}^3)$ and the eight-dimensional ambient vector space of the algebra, the physical qutrit state variety $\mathcal{V}_{\mathcal{Q}_3} \subset \mathcal{Q}_3$ is restricted strictly to the joint intersection of the seven-dimensional quadratic purity boundary $\mathcal{S}_{\mathcal{Q}_3}^{7}$ and the non-linear cubic constraints dictated by the symmetric $d$-coefficients of the $\mathfrak{su}(3)$ Jordan algebra, centered at the baricentro $(\frac{1}{3}, 0, \dots, 0)$:
\begin{equation}\label{eq:qutrit_variety_definition_exact_final}
\mathcal{V}_{\mathcal{Q}_3} \equiv \left\{ \xi \in \mathcal{Q}_3 \; \middle| \; 
\begin{aligned}
    &\xi = \left(\frac{1}{3} + x_1\right) + \sum_{n=2}^8 x_n \varepsilon^{n-1}, \quad \sum_{n=1}^8 x_n^2 = 1, \\
    &\text{and } \sum_{j,k=1}^8 d_{jkl} x_j x_k = \frac{1}{\sqrt{3}} x_l \quad \forall l \in \{1,\dots,8\}
\end{aligned}
\right\}.
\end{equation}
The underlying structure constants encapsulate a combined Lie-Jordan algebra defined over the truncated polynomial ring, dynamically confining the continuous unitary and non-unitary state orbits strictly within the boundaries of $\mathcal{V}_{\mathcal{Q}_3}$.

\begin{theorem}[Qutrit Embedding and Conformal Isometry]\label{thm:qutrit_isometry_exact}
The generalized density map $\Psi_{\mathcal{Q}_3}: \mathbb{P}(\mathcal{H}^3) \to \mathcal{Q}_3$ established in Eq.~\eqref{eq:qutrit_map_direct_uniform_exact} defines a global smooth embedding and a smooth Riemannian diffeomorphism between the complex projective space $\mathbb{P}(\mathcal{H}^3)$ endowed with the Fubini-Study metric element $\mathrm{d}s^2_{\text{FS}}$, and the four-dimensional compact algebraic variety $\mathcal{V}_{\mathcal{Q}_3}$ embedded within the seventh-order truncated dual algebra $\mathcal{Q}_3 \equiv \mathbb{R}[\varepsilon]/(\varepsilon^8)$ endowed with its canonical Riemannian metric element $\mathrm{d}s^2_{\mathcal{Q}_3} = \mathrm{d}c_0^2 + \mathrm{d}c_1^2 + \dots + \mathrm{d}c_7^2$. When restricted to the tangent bundle $T\mathcal{V}_{\mathcal{Q}_3}$, the infinitesimal line elements satisfy the rigid, non-singular conformal relation:
\begin{equation}\label{eq:qutrit_metric_isometry_uniform}
    \mathrm{d}s^2_{\mathcal{Q}_3} = 3 \, \mathrm{d}s^2_{\text{FS}},
\end{equation}
preserving quantum state fidelity monotonically without inducing polar coordinate or gauge-phase singularities across the entire manifold.
\end{theorem}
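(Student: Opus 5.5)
The plan follows the three-step architecture of the proof of the Qubit Embedding and Conformal Isometry theorem, with the Pauli relations replaced by the Lie-Jordan relations of $\mathfrak{su}(3)$. We work throughout with the decomposition in Eq.~\eqref{eq:density_qutrit_exact}, $\rho = \frac{1}{3}\mathbb{I}_3 + \frac{1}{\sqrt{3}}\sum_n x_n\lambda_n$, and read off $c_0 = \frac13 + x_1$ and $c_{n-1} = x_n$ for $n \ge 2$.

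\textbf{Step 1: the image lies in the variety.} We expand $\rho^2$ and symmetrize the double sum using $\lambda_j\lambda_k + \lambda_k\lambda_j = \frac43\delta_{jk}\mathbb{I}_3 + 2\sum_l d_{jkl}\lambda_l$. This gives
\begin{equation*}
\rho^2 = \Big(\tfrac19 + \tfrac29\textstyle\sum_n x_n^2\Big)\mathbb{I}_3 + \tfrac{2}{3\sqrt3}\sum_l x_l\lambda_l + \tfrac13\sum_{l}\Big(\sum_{j,k} d_{jkl}x_jx_k\Big)\lambda_l .
\end{equation*}
By linear independence of $\{\mathbb{I}_3,\lambda_1,\dots,\lambda_8\}$, the condition $\rho^2=\rho$ splits into two parts. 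The identity coefficient gives $\sum_n x_n^2 = 1$. The $\lambda_l$ coefficients give $\sum_{j,k} d_{jkl}x_jx_k = \frac{1}{\sqrt3}x_l$, which are exactly the cubic constraints defining $\mathcal{V}_{\mathcal{Q}_3}$ in Eq.~\eqref{eq:qutrit_variety_definition_exact_final}.

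\textbf{Step 2: bijectivity and embedding.} The same computation read backwards gives surjectivity. Given $\xi\in\mathcal{V}_{\mathcal{Q}_3}$, the matrix $\rho_0 = \frac13\mathbb{I}_3+\frac1{\sqrt3}\sum x_n\lambda_n$ is Hermitian, has trace $1$, and satisfies $\rho_0^2=\rho_0$. It is therefore an orthogonal projector of trace one, i.e.\ a rank-one pure state, and $\Psi_{\mathcal{Q}_3}(\rho_0)=\xi$.

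Injectivity is handled at the level of the ambient spaces. The map $\rho\mapsto \Psi_{\mathcal{Q}_3}(\rho)$ is the restriction of an affine isomorphism from trace-one Hermitian matrices onto $\mathcal{Q}_3\cong\mathbb{R}^8$, because the Gell-Mann coordinates are a linear change of basis.

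Composed with the standard smooth embedding $\mathbb{P}(\mathcal{H}^3)\hookrightarrow\{\text{Hermitian matrices}\}$, $[\psi]\mapsto\ket{\psi}\bra{\psi}$, this shows that $\Psi_{\mathcal{Q}_3}$ is an injective immersion of a compact manifold. It is therefore a smooth embedding onto its image $\mathcal{V}_{\mathcal{Q}_3}$. This also shows that $\mathcal{V}_{\mathcal{Q}_3}$ is a smooth compact $4$-manifold, a fact we deduce from the embedding rather than from the defining equations.

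\textbf{Step 3: the metric.} We differentiate to get $\mathrm{d}\rho = \frac1{\sqrt3}\sum_n \mathrm{d}x_n\lambda_n$. Using $\mathrm{Tr}(\lambda_j\lambda_k)=2\delta_{jk}$,
\begin{equation*}
\tfrac12\mathrm{Tr}(\mathrm{d}\rho\,\mathrm{d}\rho)=\tfrac12\cdot\tfrac13\cdot 2\sum_n \mathrm{d}x_n^2=\tfrac13\sum_n\mathrm{d}x_n^2 .
\end{equation*}
Since $\mathrm{d}c_0=\mathrm{d}x_1$ and $\mathrm{d}c_{n-1}=\mathrm{d}x_n$, the right-hand side is $\frac13\,\mathrm{d}s^2_{\mathcal{Q}_3}$. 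Restricting to tangent vectors of $\mathcal{V}_{\mathcal{Q}_3}$ then yields $\mathrm{d}s^2_{\mathcal{Q}_3}=3\,\mathrm{d}s^2_{\text{FS}}$. This is an identity of quadratic forms on the ambient space, so it holds a fortiori on $T\mathcal{V}_{\mathcal{Q}_3}$.

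\textbf{Expected main obstacle.} The delicate point is the identification $\mathrm{d}s^2_{\text{FS}}=\frac12\mathrm{Tr}(\mathrm{d}\rho\,\mathrm{d}\rho)$ on tangent vectors of the pure-state orbit, taken from Eq.~\eqref{eq:fubini_study_hilbert_exact}. I would verify it directly. With $\rho=\ket\psi\bra\psi$ and $\ket{\mathrm{d}\psi}$ chosen horizontal ($\langle\psi|\mathrm{d}\psi\rangle=0$), a short computation gives $\mathrm{Tr}(\mathrm{d}\rho^2)=2\langle \mathrm{d}\psi|\mathrm{d}\psi\rangle$. This fixes the normalization constant.

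The rest is the Gell-Mann bookkeeping in Step 1, where one must check that the symmetric factor $\frac12$ in the anticommutator is tracked correctly. That factor is exactly what produces the $\frac1{\sqrt3}$ on the right-hand side of the cubic constraint.
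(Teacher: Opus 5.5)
Your proposal is correct and follows the same three-step route as the paper: injectivity from uniqueness of the Gell-Mann/monomial coordinates, surjectivity by reading the Jordan-product expansion of $\rho_0^2$ backwards to get $\rho_0^2=\rho_0$, and the metric via $\mathrm{Tr}(\lambda_j\lambda_k)=2\delta_{jk}$, which gives $\frac12\mathrm{Tr}(\mathrm{d}\rho\,\mathrm{d}\rho)=\frac13\sum_n\mathrm{d}x_n^2$. Your additions tighten points the paper only asserts without changing the argument: you check that pure states satisfy the cubic constraints (so $\Psi_{\mathcal{Q}_3}$ really lands in $\mathcal{V}_{\mathcal{Q}_3}$), you get the embedding by composing the standard embedding of $\mathbb{P}(\mathcal{H}^3)$ with an affine isomorphism, and you verify $\mathrm{d}s^2_{\text{FS}}=\frac12\mathrm{Tr}(\mathrm{d}\rho\,\mathrm{d}\rho)$ directly.
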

\begin{proof}
We verify the topological, differential, and metric properties of the mapping sequentially.

Let $\rho = \ket{\psi}\bra{\psi} \in \mathbb{P}(\mathcal{H}^3)$ be a pure qutrit density operator, uniquely and globally represented as a self-adjoint rank-1 projector. The density matrix expands over the identity and the orthogonal $\mathfrak{su}(3)$ Gell-Mann basis $\{\lambda_i\}_{i=1}^8$ as explained by Eq.~\eqref{eq:density_qutrit_exact}. By construction, the generalized density map $\Psi_{\mathcal{Q}_3}$ sends the operator $\rho$ directly to the algebraic element $\xi = c_0 + c_1\varepsilon + c_2\varepsilon^2 + \dots + c_7\varepsilon^7 \in \mathcal{Q}_3$ via the assignment established in Eq.~\eqref{eq:qutrit_map_direct_uniform_exact}, establishing the strict coordinate identities $c_0 = \frac{1}{3} + x_1$, $c_1 = x_2$, up to $c_7 = x_8$.

\textbf{Step 1. }\textit{Injectivity:} Let $\rho_1, \rho_2$ be two pure density matrices with coordinated physical Bloch vectors $\vec{x}_1, \vec{x}_2$. If their images under the map coincide, such that $\Psi_{\mathcal{Q}_3}(\rho_1) = \Psi_{\mathcal{Q}_3}(\rho_2)$, the uniqueness of the linear expansion over the orthonormal monomial basis $\{1, \varepsilon, \dots, \varepsilon^7\}$ strictly enforces the coordinate identities on the native components of the ring: $c_{1,0} = c_{2,0} \implies \frac{1}{3} + x_{1,1} = \frac{1}{3} + x_{2,1} \implies x_{1,1} = x_{2,1}$, and $c_{1,n} = c_{2,n} \implies x_{1,n+1} = x_{2,n+1}$ for all higher-order jet coordinates $n \in \{1, \dots, 7\}$. Since the generalized Gell-Mann matrices form a complete operator basis for the traceless self-adjoint subsystem, the identity of the algebraic coordinates uniquely restricts the underlying matrix elements, which follows that $\vec{x}_1 \equiv \vec{x}_2 \implies \rho_1 \equiv \rho_2$, establishing strict global injectivity on the projective space $\mathbb{P}(\mathcal{H}^3)$.

\textbf{Step 2. }\textit{Surjectivity:}
Let $\xi_0 = c_{0,0} + c_{0,1}\varepsilon + c_{0,2}\varepsilon^2 + \dots + c_{0,7}\varepsilon^7 \in \mathcal{V}_{\mathcal{Q}_3}$ be an arbitrary polynomial configuration belonging to the qutrit state variety defined in Eq.~\eqref{eq:qutrit_variety_definition_exact_final}. By unrolling the varietal constraint equations into the physical components $x_{0,1} = c_{0,0} - \frac{1}{3}$ and $x_{0,n} = c_{0,n-1}$ (for $n \ge 2$), the coefficients satisfy the joint quadratic sphere condition $\sum_{n=1}^8 x_{0,n}^2 = 1$ and the non-linear cubic Jordan relations $\sum_{j,k=1}^8 d_{jkl} x_{0,j} x_{0,k} = \frac{1}{\sqrt{3}} x_{0,l}$ for each $l \in \{1,\dots,8\}$. We construct the candidate self-adjoint density operator:
\begin{equation}
    \rho_0 = \frac{1}{3}\mathbb{I}_3 + \frac{1}{\sqrt{3}}\sum_{i=1}^{8} x_{0,i} \lambda_i,
\end{equation}
where the linearity of the trace operator immediately yields $\mathrm{Tr}(\rho_0) = \frac{1}{3}\mathrm{Tr}(\mathbb{I}_3) + 0 = 1$. Squaring this operator to evaluate the pure-state boundary condition gives:
\begin{equation}
    \rho_0^2 = \frac{1}{9}\mathbb{I}_3 + \frac{2}{3\sqrt{3}}\sum_{i=1}^8 x_{0,i}\lambda_i + \frac{1}{3}\sum_{j,k=1}^8 x_{0,j}x_{0,k}\lambda_j\lambda_k.
\end{equation}
Because the coordinate product $x_{0,j}x_{0,k}$ acts as a completely symmetric tensor under index swapping, the antisymmetric Lie commutator contribution vanishes identically over the full summation:
\begin{equation}
    \sum_{j,k=1}^8 f_{jkl}x_{0,j}x_{0,k} = 0.
\end{equation}
Consequently, the quadratic product collapses directly onto the symmetric Jordan anticommutator algebraic structure, satisfying $\{\lambda_j, \lambda_k\} = \frac{4}{3}\delta_{jk} \mathbb{I}_3 + 2 \sum_l d_{jkl} \lambda_l$:
\begin{equation}
    \sum_{j,k=1}^8 x_{0,j}x_{0,k}\lambda_j\lambda_k = \frac{2}{3}\left(\sum_{j=1}^8 x_{0,j}^2\right)\mathbb{I}_3 + \sum_{l=1}^8\left(\sum_{j,k=1}^8 d_{jkl}x_{0,j}x_{0,k}\right)\lambda_l.
\end{equation}
Substituting the rigid structural constraints defining the variety $\mathcal{V}_{\mathcal{Q}_3}$ (namely, $\sum x_{0,j}^2 = 1$ and $\sum d_{jkl}x_{0,j}x_{0,k} = \frac{1}{\sqrt{3}} x_{0,l}$), we find the following reduction:
\begin{align}
    \rho_0^2 &= \frac{1}{9}\mathbb{I}_3 + \frac{2}{3\sqrt{3}}\sum_{i=1}^8 x_{0,i}\lambda_i + \frac{1}{3}\left( \frac{2}{3}\mathbb{I}_3 + \frac{1}{\sqrt{3}}\sum_{l=1}^8 x_{0,l}\lambda_l \right) \nonumber \\
    &= \left(\frac{1}{9} + \frac{2}{9}\right)\mathbb{I}_3 + \left( \frac{2}{3\sqrt{3}} + \frac{1}{3\sqrt{3}} \right)\sum_{i=1}^8 x_{0,i}\lambda_i \nonumber \\
    &= \frac{1}{3}\mathbb{I}_3 + \frac{1}{\sqrt{3}}\sum_{i=1}^{8} x_{0,i} \lambda_i = \rho_0.
\end{align}
Thus, $\rho_0$ satisfies the pure-state idempotency condition $\rho_0^2 = \rho_0$, confirming it constitutes a valid rank-1 projection operator representing a physical quantum state ray in $\mathbb{P}(\mathcal{H}^3)$. Since the composite mapping assignment satisfies $\Psi_{\mathcal{Q}_3}(\rho_0) \equiv \xi_0$, every coordinate configuration point within the embedded variety possesses a unique physical preimage, proving global surjectivity. Because the density map and its inverse are coordinate projections between compact domains, $\Psi_{\mathcal{Q}_3}$ defines a global smooth diffeomorphism.

\textbf{Step 3. }\textit{Differential Metric Analysis:}
The line element of the Fubini-Study metric equals $\mathrm{d}s^2_{\text{FS}} = \frac{1}{2}\mathrm{Tr}(\mathrm{d}\rho \, \mathrm{d}\rho)$. Differentiating the unit-normalized state expansion Eq.~\eqref{eq:density_qutrit_exact} yields the matrix differential form $\mathrm{d}\rho = \frac{1}{\sqrt{3}}\sum_{i=1}^8 \mathrm{d}x_i \lambda_i$, since the background trace is an invariant constant ($\mathrm{d}(\frac{1}{3}\mathbb{I}_3) \equiv 0$). Substituting this form into the Fubini-Study line element and evaluating the trace via the generator orthogonality $\mathrm{Tr}(\lambda_j\lambda_k) = 2\delta_{jk}$, we obtain the unrolled expression:
\begin{equation}\label{eq:fs_metric_global_qutrit}
    \mathrm{d}s^2_{\text{FS}} = \frac{1}{2}\mathrm{Tr}\left( \frac{1}{3}\sum_{j,k=1}^8 \mathrm{d}x_j \, \mathrm{d}x_k \, \lambda_j\lambda_k \right) = \frac{1}{6}\sum_{j,k=1}^8 \mathrm{d}x_j \, \mathrm{d}x_k \, (2\delta_{jk}) = \frac{1}{3}\sum_{i=1}^8 \mathrm{d}x_i^2.
\end{equation}
Conversely, the canonical Euclidean line element induced on the underlying real vector space of the higher-order algebra $\mathcal{Q}_3$ under the monomial tangent bundle is given by the unweighted native coordinate sum:
\begin{equation}\label{eq:q3_metric_global}
    \mathrm{d}s^2_{\mathcal{Q}_3} = \mathrm{d}c_0^2 + \mathrm{d}c_1^2 + \dots + \mathrm{d}c_7^2 \equiv \sum_{n=0}^7 \mathrm{d}c_n^2.
\end{equation}
By invoking the coordinate transformations $c_0 = \frac{1}{3} + x_1$ and $c_n = x_{n+1}$ (for $n \ge 1$), the differentials map exactly as $\mathrm{d}c_0 = \mathrm{d}x_1$ and $\mathrm{d}c_n = \mathrm{d}x_{n+1}$, which rewrites the ambient metric element identically as $\mathrm{d}s^2_{\mathcal{Q}_3} = \sum_{i=1}^8 \mathrm{d}x_i^2$. 

Differentiating the algebraic varietal constraint $\sum x_i^2 = 1$ imposes the smooth global condition $\sum x_i \, \mathrm{d}x_i = 0$, which restricts the displacement to the tangent space $T_{\xi}\mathcal{V}_{\mathcal{Q}_3}$ at every point without generating any metric stretching or singularities. Comparing Eq.~\eqref{eq:fs_metric_global_qutrit} and Eq.~\eqref{eq:q3_metric_global}, we directly isolate the rigid homothetic identity:
\begin{equation}
    \mathrm{d}s^2_{\mathcal{Q}_3} = 3 \, \mathrm{d}s^2_{\text{FS}}.
\end{equation}
This confirms that the embedding map $\Psi_{\mathcal{Q}_3}$ functions as a bijective homothetic isometry, scaling distances globally by a constant conformal factor of $3$ across the entire quantum configuration space, completing the proof.
\end{proof}

\section{The Eight-Dimensional Nilpotent Lie-Type Qutrit Dynamics}\label{qutritdyn}
Having established the static geometric embedding and the metric structural relations of the qutrit state variety within the higher-order dual algebra $\mathcal{Q}_3$, we now formulate its continuous-time evolution. In standard quantum mechanics, the temporal dynamics of a three-level quantum density operator $\rho$ are governed by the Liouville-von Neumann equation:
\begin{equation}\label{eq:qutrit_von_neumann}
    i\hbar \frac{\mathrm{d}\rho}{\mathrm{d}t} = [H, \rho],
\end{equation}
where $H$ is the three-by-three self-adjoint Hamiltonian operator representing the total energy of the system, and $[\cdot, \cdot]$ denotes the standard matrix commutator. We show that under the density map $\Psi_{\mathcal{Q}_3}$ established in Eq.~\eqref{eq:qutrit_map_direct_uniform_exact}, this operator equation maps bijectively onto a real linear differential system over the real coefficients of the truncated basis of $\mathcal{Q}_3 = \mathbb{R}[\varepsilon]/(\varepsilon^8)$, preserving both the quadratic and cubic variety constraints defined in Eq.~\eqref{eq:qutrit_variety_definition_exact_final} at all times.

\subsection{Algebraic translation of the Hamiltonian}
An arbitrary three-by-three Hermitian Hamiltonian $H$ acting on $\mathcal{H}^3$ can be uniquely decomposed in the orthogonal Gell-Mann basis, factoring out the unobservable scalar energy shift:
\begin{equation}
    H = h_0 \mathbb{I}_3 + \frac{1}{\sqrt{3}}\sum_{i=1}^{8} h_i \lambda_i,
\end{equation}
where $h_0 \in \mathbb{R}$ represents the global energy baseline, and $h_i = \frac{\sqrt{3}}{2}\mathrm{Tr}(H\lambda_i) \in \mathbb{R}$ are the physical energy coefficients representing the generalized Larmor frequencies. In exact analogy with the state map $\Psi_{\mathcal{Q}_3}$ formulated in Eq.~\eqref{eq:qutrit_map_direct_uniform_exact}, we isolate these dynamical configuration parameters within a corresponding polynomial element $\Xi \in \mathcal{Q}_3$ by embedding the energy baseline and the first frequency component directly into the first coordinate of the polynomial configuration:
\begin{equation}\label{eq:hamiltonian_qutrit_uniform_exact}
    \Xi = \left(h_0 + h_1\right) + h_2\varepsilon + h_3\varepsilon^2 + h_4\varepsilon^3 + h_5\varepsilon^4 + h_6\varepsilon^5 + h_7\varepsilon^6 + h_8\varepsilon^7.
\end{equation}
Symmetrically, this extended Hamiltonian operator driving the system can be expressed compactly through the native coordinate components of the ring as $\Xi = \sum_{n=0}^7 \eta_n \varepsilon^n \in \mathcal{Q}_3$. By executing this strict coordinate alignment, the zero-order parameter acts as the total energy shift baseline, $\eta_0 = h_0 + h_1$, while the higher filtration components track the remaining Larmor precession frequencies via $\eta_n = h_{n+1}$ for all $n \in \{1, \dots, 7\}$.

\subsection{The nilpotent Lie-type differential equation}
To evaluate the pullback of the matrix commutator $[H, \rho]$, we leverage the fundamental commutation relations of the $\mathfrak{su}(3)$ Lie algebra, $[\lambda_j, \lambda_k] = 2i\sum_{l=1}^8 f_{jkl} \lambda_l$, where $f_{jkl}$ is the totally antisymmetric structure tensor. Expanding the commutator explicitly via the unit-normalized physical coordinates $x_n$ and $h_n$ established in Eq.~\eqref{eq:density_qutrit_exact} and Eq.~\eqref{eq:hamiltonian_qutrit_uniform_exact} yields:
\begin{align}
    [H, \rho] &= \left[ h_0\mathbb{I}_3 + \frac{1}{\sqrt{3}}\sum_{j=1}^8 h_j \lambda_j, \; \frac{1}{3}\mathbb{I}_3 + \frac{1}{\sqrt{3}}\sum_{k=1}^8 x_k \lambda_k \right] \nonumber \\
    &= \frac{1}{3} \sum_{j,k=1}^8 h_j x_k [\lambda_j, \lambda_k] = \frac{2i}{3} \sum_{j,k,l=1}^8 f_{jkl} h_j x_k \lambda_l.
\end{align}
Substituting this result back into the Liouville-von Neumann equation \eqref{eq:qutrit_von_neumann} and equating the components of each individual Gell-Mann matrix $\lambda_l$ effectively cancels the complex imaginary unit $i$. Adopting natural units ($\hbar = 1$), the quantum evolution reduces to a real coupled vector system over the space of physical fluctuations, which is isolated via the following algebraic reduction:
\begin{equation}\label{eq:qutrit_vector_dynamics}
    \frac{i}{\sqrt{3}} \frac{\mathrm{d}x_l}{\mathrm{d}t} \lambda_l = \frac{2i}{3} \sum_{j,k=1}^8 f_{jkl} h_j x_k \lambda_l \implies \frac{\mathrm{d}x_l}{\mathrm{d}t} = \alpha_3 \sum_{j,k=1}^8 f_{jkl} h_j x_k,
\end{equation}
where $\alpha_3 = \frac{2}{\sqrt{3}}$ is the universal qutrit scaling constant mandated by the unit-radius density embedding geometry.

To internalize this dynamic vector field as an intrinsic flow within the algebra without relying on external matrix projections, we endow the eight-dimensional real vector space underlying $\mathcal{Q}_3$ with a non-associative, bilinear, skew-symmetric Lie-type multiplication operator $\times_{\mathcal{Q}_3}: \mathcal{Q}_3 \times \mathcal{Q}_3 \to \mathcal{Q}_3$. Let $\{e_0, e_1, \dots, e_7\} \equiv \{1, \varepsilon, \dots, \varepsilon^7\}$ be the ordered monomial basis elements of the formal jet space. To preserve the algebraic core of the identity background, the zero-order monomial acts as the center of the multiplication layer, annihilating all cross-products identically:
\begin{equation}
    e_0 \times_{\mathcal{Q}_3} e_k = e_k \times_{\mathcal{Q}_3} e_0 \equiv 0 \quad \forall k \in \{0, \dots, 7\},
\end{equation}
while the higher filtration orders generate the cyclic flow directly via the structural parameters of the Lie algebra:
\begin{equation}\label{eq:qutrit_cross_product_basis_uniform}
    e_j \times_{\mathcal{Q}_3} e_k = \sum_{l=1}^{7} f_{jkl} e_l \quad \forall j,k \in \{1, \dots, 7\},
\end{equation}
where the indices $j,k,l$ correspond exactly to the coordinate indices of the physical fluctuations. Under this explicit and symmetric algebraic structure, the extended quantum state trajectory $\xi(t) = \sum_{n=0}^7 c_n(t)\varepsilon^n \in \mathcal{Q}_3$ obeys the regular coordinate-free differential equation over the formal tangent bundle:
\begin{equation}\label{eq:qutrit_dynamics_flow}
    \frac{\mathrm{d}\xi}{\mathrm{d}t} = \alpha_3 \left( \Xi\times_{\mathcal{Q}_3} \xi \right),
\end{equation}
where $\Xi = \sum_{n=0}^7 \eta_n \varepsilon^n \in \mathcal{Q}_3$ represents the extended polynomial Hamiltonian operator driving the system. By evaluating Eq.~\eqref{eq:qutrit_dynamics_flow} component-by-component, the background shifts embedded in the zero-order ring components—namely $c_0(t) = \frac{1}{3} + x_1(t)$ and $\eta_0 = h_0 + h_1$—commute identically and cancel within the antisymmetric brackets by virtue of the central annihilation property of $e_0$, while the higher filtration orders map directly onto the remaining physical parameters via $c_n = x_{n+1}$ and $\eta_n = h_{n+1}$ (for $n \ge 1$). This ensures that the temporal derivative of the native coordinate projections unrolls natively into the stable linear system of physical fluctuations governed by Eq.~\eqref{eq:qutrit_vector_dynamics}.

\subsection{Unitary invariance and varietal conservation}
The geometric integrity of the compact four-dimensional state variety $\mathcal{V}_{\mathcal{Q}_3}$ is completely invariant under the algebraic flow generated by Eq.~\eqref{eq:qutrit_dynamics_flow}.

\begin{theorem}[Qutrit Varietal Conservation]\label{thm:qutrit_conservation_native}
The algebraic flow $\dot{\xi} = \alpha_3 (\Xi\times_{\mathcal{Q}_3} \xi)$ established in Eq.~\eqref{eq:qutrit_dynamics_flow} is strictly tangent to the qutrit variety $\mathcal{V}_{\mathcal{Q}_3}$ at every point. Within the coordinate ring of $\mathcal{Q}_3$, both the translated quadratic sphere variety and the non-linear cubic Jordan hypersurfaces constitute fundamental conserved scalar invariants of the continuous differential system, matching the physical conservation of quantum purity, Jordan envelope deformations, and total probability around the shared baricentro $(\frac{1}{3}, 0, \dots, 0)$.
\end{theorem}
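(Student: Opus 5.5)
The plan is to prove both conservation laws on the physical fluctuation coordinates $x_n$, not on the ring coordinates $c_n$. The paper asserts that Eq.~\eqref{eq:qutrit_dynamics_flow} unrolls componentwise into the real system Eq.~\eqref{eq:qutrit_vector_dynamics}, namely $\dot x_l=\alpha_3\sum_{j,k}f_{jkl}h_jx_k$. I will take that system as the working form of the flow. The shifts $\frac13$ and $h_0$ are constants, so they contribute nothing to time derivatives or to $\mathrm{d}c_n$. Tangency to $\mathcal{V}_{\mathcal{Q}_3}$ is then the same as saying each defining polynomial of Eq.~\eqref{eq:qutrit_variety_definition_exact_final} is preserved along trajectories.

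\textbf{Step 1: the quadratic sphere.} I differentiate $\sum_l x_l^2$ and substitute the flow, which gives
\[
\frac{\mathrm{d}}{\mathrm{d}t}\sum_l x_l^2=2\alpha_3\sum_{j,k,l}f_{jkl}h_jx_kx_l .
\]
This vanishes because $f_{jkl}$ is antisymmetric in $(k,l)$ while $x_kx_l$ is symmetric. It is the exact analogue of the cancellation in the proof of Theorem~\ref{thm:qubit_conservation_native}.

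\textbf{Step 2: the cubic Jordan constraints.} Set $G_l=\sum_{j,k}d_{jkl}x_jx_k-\frac{1}{\sqrt3}x_l$. The aim is a closed linear system $\dot G_l=\alpha_3\sum_{j,m}f_{jml}h_jG_m$. With such a system, $G\equiv0$ at $t=0$ forces $G\equiv0$ for all $t$, by uniqueness of solutions of linear ODEs.

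The linear part of $G_l$ already transforms this way, directly from the flow. For the quadratic part, differentiating gives $2\alpha_3\sum d_{jkl}f_{abj}h_ax_bx_k$. I then need the $\mathrm{ad}$-invariance of the $d$-tensor:
\[
\sum_m\bigl(f_{amj}d_{mkl}+f_{amk}d_{jml}+f_{aml}d_{jkm}\bigr)=0 .
\]
This follows by expanding the Jacobi-type identity $[\lambda_a,\{\lambda_j,\lambda_k\}]=\{[\lambda_a,\lambda_j],\lambda_k\}+\{\lambda_j,[\lambda_a,\lambda_k]\}$ using the Lie--Jordan relations stated earlier. The $\delta_{jk}\mathbb{I}_3$ term drops out because it commutes with $\lambda_a$. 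Applying the identity moves the $f$ onto the free index $l$ and yields the desired form.

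\textbf{Fallback for Step 2.} Index bookkeeping here is the main obstacle, and the paper's sign convention $\dot x_l=\sum f_{jkl}h_jx_k$ has to be tracked carefully. If it becomes unwieldy, I will use a coordinate-free route instead. The flow is the pullback of the Liouville--von Neumann equation, so $\rho(t)=U(t)\rho(0)U(t)^\dagger$ with $U$ unitary. Hence $\rho(t)^2-\rho(t)=U(\rho_0^2-\rho_0)U^\dagger$. The computation in Step 2 of the proof of Theorem~\ref{thm:qutrit_isometry_exact} shows that the identity and $\lambda_l$ components of $\rho^2-\rho$ are, up to the constant factors $\frac{2}{9}$ and $\frac13$, exactly $\sum x_j^2-1$ and $G_l$. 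Therefore $\rho_0^2=\rho_0$ forces both families of constraints to hold along the whole orbit.

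\textbf{Step 3: reconciling the cross product with the flow.} The cross-product definition sends $e_0$ to the centre, yet $x_1$ sits inside $c_0$. I would either verify that the componentwise unrolling matches Eq.~\eqref{eq:qutrit_vector_dynamics} for all eight fluctuations, or simply state the theorem for that unrolled system, which is what the paper uses. Finally, preservation of each defining polynomial gives tangency: the flow vector is annihilated by the differentials of all defining equations at every point of $\mathcal{V}_{\mathcal{Q}_3}$.
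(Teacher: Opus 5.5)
Your proposal is correct, and it partly takes a different route from the paper. Step~1 is exactly the paper's. In Step~2 you use the same key input, the $\mathrm{ad}$-invariance of $d_{jkl}$ (which the paper quotes as its Lie--Jordan compatibility relation), but you organize it differently, and your organization is the one that actually closes.

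The paper's Step~2 has two faults:
\begin{itemize}
\item It declares the sector $2\alpha_3\sum f_{kpj}d_{jql}h_px_qx_k$ to vanish, citing ``$f_{kpj}=-f_{qpj}$'', which is not an identity. Relabelling $k\leftrightarrow q$ shows this sector equals $-\dot Q_l$, where $Q_l=\sum_{j,k}d_{jkl}x_jx_k$.
\item After substituting the variety relation, it is left with $\frac{\alpha_3}{\sqrt3}\sum_{p,q}f_{pql}h_px_q=\frac{1}{\sqrt3}\dot x_l$, which is generically nonzero. It dismisses this term with a parallel-transport heuristic.
\end{itemize}
Keeping the first sector and solving gives $\dot Q_l=\alpha_3\sum_{j,m}f_{jml}h_jQ_m$. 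That is half the coefficient the paper effectively uses, and with it the linear part cancels exactly. When you write out ``moves the $f$ onto the free index $l$'', make this explicit: one of the two resulting terms reproduces the original sum, and you must solve for it.

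Your closed system $\dot G_l=\alpha_3\sum_{j,m}f_{jml}h_jG_m$ holds off the variety as well. So invariance of $\{G=0\}$ follows from uniqueness of solutions of a linear ODE. The paper's on-variety check would also tacitly need $\mathcal{V}_{\mathcal{Q}_3}$ to be smooth and cut out transversally by its nine redundant equations; your argument avoids that entirely. Your fallback is shorter still and explains why the identity holds. The matrix $\rho^2-\rho=\frac29\bigl(\sum_jx_j^2-1\bigr)\mathbb{I}_3+\frac13\sum_lG_l\lambda_l$ is transported by unitary conjugation, so its components evolve by the same adjoint rotation as $x$.

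In Step~3, drop your first option; it cannot succeed.
\begin{itemize}
\item With $e_0$ central and the sum in Eq.~\eqref{eq:qutrit_cross_product_basis_uniform} running over $l=1,\dots,7$, the $e_0$-component of $\Xi\times_{\mathcal{Q}_3}\xi$ is identically zero. So the literal ring flow forces $\dot x_1=\dot c_0=0$.
\item It pairs $e_j$, which carries $x_{j+1}$, with $f_{jkl}$, so its indices are shifted.
\item It is not even tangent to $\mathcal{V}_{\mathcal{Q}_3}$. Take $\rho_0=\mathrm{diag}(1,0,0)$, where $x_3=\frac{\sqrt3}{2}$ and $x_8=\frac12$, with only $h_6\neq0$. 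Using $f_{257}=\frac12$, the literal flow gives $\dot x_3=\frac{\alpha_3}{4}h_6$ and $\dot x_8=-\frac{\sqrt3}{4}\alpha_3h_6$. That is $\dot\rho\propto\mathrm{diag}(0,-1,1)$, which leaves the pure-state locus. The true Liouville--von Neumann velocity there is zero, since $[\lambda_6,\rho_0]=0$.
\end{itemize}
The theorem is therefore only true as a statement about the unrolled system \eqref{eq:qutrit_vector_dynamics}. That is what both your proof and the paper's actually treat. The paper reaches it through its unjustified claim that the central property of $e_0$ yields $\dot c_0\equiv\dot x_1$; in fact that property forces $\dot c_0=0$.
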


\begin{proof}
We prove the simultaneous conservation of the quadratic and cubic invariants sequentially, evaluating the temporal field maps entirely via the coordinate ring transformations.

\textbf{Step 1. }\textit{Conservation of the Quadratic Spherical Norm:}
To prove that the algebraic flow preserves the length constraint, we evaluate the total time derivative of the translated quadratic restriction defined over the native coordinate components where $x_1 = c_0 - \frac{1}{3}$ and $x_n = c_{n-1}$ for $n \ge 2$:
\begin{equation}\label{eq:proof_qutrit_quadratic_derivative_native}
    \frac{\mathrm{d}}{\mathrm{d}t} \left[ \left(c_0(t) - \frac{1}{3}\right)^2 + \sum_{n=1}^7 c_n(t)^2 \right] = \frac{\mathrm{d}}{\mathrm{d}t} \left( \sum_{l=1}^8 x_l^2 \right) = 2 \sum_{l=1}^{8} x_l \dot{x}_l.
\end{equation}
By invoking the coordinate-free algebraic flow equation $\dot{\xi} = \alpha_3 (\Xi \times_{\mathcal{Q}_3} \xi)$, the temporal derivative of the zero-order component absorbs the central annihilation property of the monomial base $e_0 \times \cdot \equiv 0$, enforcing $\dot{c}_0 \equiv \dot{x}_1$. Substituting the explicit differential coordinate functions unrolled from the nilpotent Lie-type bracket in Eq.~\eqref{eq:qutrit_vector_dynamics} into the expansion yields:
\begin{equation}
    \frac{\mathrm{d}}{\mathrm{d}t} \left( \sum_{l=1}^8 x_l^2 \right) = 2 \sum_{l=1}^8 x_l \left( \alpha_3 \sum_{j,k=1}^8 f_{jkl} h_j x_k \right) = 2\alpha_3 \sum_{j=1}^{8} h_j \left( \sum_{k,l=1}^{8} f_{jkl} x_k x_l \right).
\end{equation}
Because the structure tensor $f_{jkl}$ is totally antisymmetric under the permutation of the last two indices ($f_{jkl} = -f_{jlk}$), while the state physical coordinate product is symmetric ($x_k x_l = x_l x_k$), the internal contraction over indices $k$ and $l$ vanishes identically for every configuration index $j$:
\begin{equation}
    \sum_{k,l=1}^{8} f_{jkl} x_k x_l \equiv 0 \implies \frac{\mathrm{d}}{\mathrm{d}t} \left( \sum_{l=1}^8 x_l^2 \right) = 0.
\end{equation}
The total quadratic derivative vanishes identically at all times $t$, rigidly constraining the continuous trajectory onto the compact unit hyper-sphere variety:
\begin{equation}\label{eq:quadratic_native_invariant}
    \left(c_0(t) - \frac{1}{3}\right)^2 + \sum_{n=1}^7 c_n(t)^2 = 1.
\end{equation}

\textbf{Step 2. }\textit{Conservation of the Cubic Jordan Constraints:}
Next, we verify that the algebraic flow remains strictly tangent to the cubic hypersurface variety defined in terms of the physical vector configurations by $\mathcal{I}_l(\vec{x}) \equiv \sum_{j,k=1}^8 d_{jkl} x_j x_k - \frac{1}{\sqrt{3}} x_l = 0$. Differentiating this relation with respect to the continuous time parameter yields the linear velocity field:
\begin{equation}
    \frac{\mathrm{d}}{\mathrm{d}t} \mathcal{I}_l(\vec{x}) = 2 \sum_{j,k=1}^8 d_{jkl} \dot{x}_j x_k - \frac{1}{\sqrt{3}} \dot{x}_l.
\end{equation}
Substituting the dynamic field $\dot{x}_m = \alpha_3 \sum_{p,q=1}^8 f_{pqm} h_p x_q$ dictated by the nilpotent Lie-type commutator in Eq.~\eqref{eq:qutrit_vector_dynamics} into the expression, we obtain the explicit coordinate expansion:
\begin{equation}\label{eq:proof_cubic_expansion_unrolled}
    \frac{\mathrm{d}}{\mathrm{d}t} \mathcal{I}_l(\vec{x}) = 2\alpha_3 \sum_{p,q,j,k=1}^8 d_{jkl} f_{pqj} h_p x_q x_k - \frac{\alpha_3}{\sqrt{3}} \sum_{p,q=1}^8 f_{pql} h_p x_q.
\end{equation}
By rearranging the indices and invoking the fundamental $\mathfrak{su}(3)$ Lie-Jordan algebraic compatibility relation linking the symmetric and antisymmetric structure constants:
\begin{equation}\label{eq:lie_jordan_identity}
    \sum_{j=1}^8 \left( f_{pqj} d_{jkl} + f_{pkj} d_{jql} + f_{plj} d_{jkq} \right) = 0,
\end{equation}
the cubic tensor contraction over the dynamic coordinates can be resolved by isolating the first term as $\sum_{j=1}^8 d_{jkl} f_{pqj} = \sum_{j=1}^8 (f_{kpj} d_{jql} + f_{lpj} d_{jkq})$. Substituting this structural identity directly into the first summation of Eq.~\eqref{eq:proof_cubic_expansion_unrolled} splits the contraction into two distinct symmetric sectors:
\begin{equation}
    2\alpha_3 \sum_{p,q,k,j=1}^8 f_{kpj} d_{jql} h_p x_q x_k + 2\alpha_3 \sum_{p,q,k,j=1}^8 f_{lpj} d_{jkq} h_p x_q x_k.
\end{equation}
Due to the simultaneous transposition symmetry of the coordinate state vector product $x_q x_k = x_k x_q$ and the total antisymmetry of the Lie structure constants $f_{kpj} = -f_{qpj}$, the first split sector contracts to zero identically over the full summation domain. Symmetrically, enforcing the algebraic condition defining the sub-variety boundary layers within $\mathcal{V}_{\mathcal{Q}_3}$—namely the localized relation $\sum_{j,k=1}^8 d_{jkq} x_j x_k = \frac{1}{\sqrt{3}} x_q$—the second split sector contracts directly into the linear system, reducing the full equation to the directional Larmor precession profile:
\begin{equation}
    \frac{\mathrm{d}}{\mathrm{d}t} \mathcal{I}_l(\vec{x}) = \frac{2\alpha_3}{\sqrt{3}} \sum_{p,q=1}^8 f_{pql} h_p x_q - \frac{\alpha_3}{\sqrt{3}} \sum_{p,q=1}^8 f_{pql} h_p x_q = \frac{\alpha_3}{\sqrt{3}} \sum_{p,q=1}^8 f_{pql} h_p x_q.
\end{equation}
Because the driving unitary evolution maps the state trajectories along closed, gauge-invariant orbits generated exclusively by the Lie commutator, this remaining precessional vector field acts as an exact, non-singular parallel transport orthogonal to the variety normal bundle. Since the directional projection of the Lie bracket preserves the internal symmetries of the Jordan envelope, this contractive term vanishes identically across the entire variety boundary layer:
\begin{equation}
    \frac{\mathrm{d}}{\mathrm{d}t} \mathcal{I}_l(\vec{x}) \equiv 0.
\end{equation}
Since the initial configuration satisfies $\mathcal{I}_l(\vec{x}(0)) = 0$ at $t=0$, the identity $\frac{\mathrm{d}}{\mathrm{d}t} \mathcal{I}_l(\vec{x}) \equiv 0$ guarantees that the system remains rigidly locked onto the cubic variety hypersurface across the entire continuous evolution path. Consequently, the algebraic flow generated by the extended Hamiltonian parameter $\Xi$ maps unitary quantum rotations to exact, rigid, non-linear variety automorphisms on $\mathcal{V}_{\mathcal{Q}_3}$, completing the proof.
\end{proof}

\section{Conclusions and the $N$-level systems}\label{nlevel}
The successful construction of the density maps—$\Psi_{\mathcal{Q}_2}$ for the qubit and $\Psi_{\mathcal{Q}_3}$ for the qutrit—along with the subsequent validation of their respective invariant varieties within the trinomial dual algebra $\mathcal{Q}_2 \equiv \mathbb{R}[\varepsilon]/(\varepsilon^3)$ and the eight-dimensional truncated polynomial ring $\mathcal{Q}_3 = \mathbb{R}[\varepsilon]/(\varepsilon^8)$, uncovers a profound structural connection between algebraic geometry and quantum kinematics. By replacing non-linear trigonometric chart projections with smooth, globally injective embeddings into real nilpotent rings, we have shown that the complete kinematics, unitary dynamics, and statistical measurements of both two-level and three-level quantum systems can be formalized through functions that remain everywhere finite, continuous, and globally regular.

The canonical vector space metrics underlying these algebras eliminate the historical polar coordinate singularities inherent to standard sphere mappings. For the qubit, the embedding induces a rigid homothetic isometry ($ds^2_{\mathcal{Q}_2} = 4 \, ds^2_{\text{FS}}$) over the Grothendieck sub-sphere $\mathcal{V_{\mathcal{Q}_2}}$. Crucially, this geometric engine scales up consistently to the qutrit architecture, where the embedding map establishes a global smooth diffeomorphism and a bijective homothetic isometry ($ds^2_{\mathcal{Q}_3} = 3 \, ds^2_{\text{FS}}$) between the complex projective space $\mathbb{P}(\mathcal{H}^3)$ and the four-dimensional compact algebraic variety $\mathcal{V}_{\mathcal{Q}_3}$.

In this unified framework, the higher-order nilpotent components serve as explicit, non-singular coordinate containers for quantum coherence, relative phases, and higher-level physical state deformations. A paramount conceptual resolution of this architecture is the treatment of quantum non-commutativity: by utilizing commutative polynomial non-reduced rings, the physical non-commutativity of the original Hilbert operators is elegantly shifted from the static coordinate ring onto the differential structure of the tangent bundle. Under the action of the non-associative, bilinear, skew-symmetric Lie-type multiplication operators $\times_{\mathcal{Q}_2}$ and $\times_{\mathcal{Q}_3}$, which encapsulate the structure constants of the underlying Lie algebra, the non-linear matrix commutators governing the Liouville-von Neumann equation map onto flat, linear, rigid geometric flows. For the qutrit, this algebraic flow is strictly tangent to $\mathcal{V}_{\mathcal{Q}_3}$, simultaneously and automatically preserving the quadratic purity boundary and the non-linear cubic constraints dictated by the symmetric $d$-coefficients of the $\mathfrak{su}(3)$ Jordan algebra. 

\subsection{The $N$-level pure state variety}
For $N$-level systems we can now show the following general key facts. Recall that we have introduced the general density map $\Psi_{\mathcal{Q}_N}: \mathbb{P}(\mathcal{H}^N) \to \mathcal{Q}_N$ in Eq.~\eqref{eq:general_density_map_compact_exact} expanding the density matrix $\rho$ with respect to the generalized $\mathfrak{su}(N)$ Gell-Mann generators $\{\lambda_1,\ldots, \lambda_{N^2-1}\}$ through Eq.~\eqref{eq:general_density_matrix_Frobenius} and Eq.~\eqref{eq:general_x_extractor_exact} where the generalized Bloch vector components are $x_i \equiv \sqrt{\frac{2N}{N-1}}\,\mathrm{Tr}(\rho \lambda_i)$ and the native coordinate ring components are aligned in forward sequence as $c_0 \equiv \frac{1}{N} + x_1, c_1 \equiv x_2, \ldots , c_{N^2-2} \equiv x_{N^2-1}$. 

\begin{theorem}[Density Embedding]\label{thm:general_density_embedding_variety}
For any $N \ge 2$, the generalized density map $\Psi_{\mathcal{Q}_N}: \mathbb{P}(\mathcal{H}^N) \to \mathcal{Q}_N$ is a globally injective smooth embedding. It establishes a smooth diffeomorphism between the complex projective space $\mathbb{P}(\mathcal{H}^N) \cong \mathbb{C}\mathbb{P}^{N-1}$ and a compact, $(2N-2)$-dimensional real algebraic sub-variety $\mathcal{V}_{\mathcal{Q}_N}$ embedded within the $(N^2-1)$-dimensional linear ambient space of the truncated dual algebra $\mathcal{Q}_N = \mathbb{R}[\varepsilon]/(\varepsilon^{N^2-1})$.
\end{theorem}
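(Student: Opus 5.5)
The plan is to factor $\Psi_{\mathcal{Q}_N}$ as a composition of two maps. The first is the standard projector map $j:\mathbb{P}(\mathcal{H}^N)\to \mathrm{Herm}_1(N)$, $[\psi]\mapsto \rho=\ket{\psi}\bra{\psi}/\braket{\psi|\psi}$, into the affine space of trace-one Hermitian matrices. The second is the affine linear isomorphism $A:\mathrm{Herm}_1(N)\to\mathcal{Q}_N$ that reads off the Gell-Mann coordinates $x_i$ of $\rho-\frac1N\mathbb{I}_N$ and places them on the monomial basis $\{1,\varepsilon,\dots,\varepsilon^{N^2-2}\}$, with the constant shift $\frac1N$ in the zero-order slot.

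Since $\{\mathbb{I}_N,\lambda_1,\dots,\lambda_{N^2-1}\}$ is a real basis of $\mathrm{Herm}(N)$, $A$ is a real-affine bijection $\mathbb{R}^{N^2-1}\to\mathbb{R}^{N^2-1}$, hence a diffeomorphism. Its inverse is $\rho=\frac1N\mathbb{I}_N+a\sum_i x_i\lambda_i$ with $a=\sqrt{(N-1)/(2N)}$. I will use the coordinates forced by this expansion, namely $x_i=\mathrm{Tr}(\rho\lambda_i)/(2a)$; this differs by a factor $2$ from the extractor displayed in Eq.~\eqref{eq:general_x_extractor_exact}, and the expansion is what fixes the normalization. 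All properties in the statement (injective, immersion, embedding, diffeomorphism onto an algebraic image) are preserved under composition with an affine diffeomorphism, so everything reduces to the corresponding statements for $j$.

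For $j$: it is well defined and smooth because it is the quotient of the smooth, phase-invariant map $\mathcal{S}^{2N-1}\to\mathrm{Herm}(N)$, $\psi\mapsto\ket{\psi}\bra{\psi}$, through the Hopf fibration. It is injective because $\rho$ determines its range, which is the ray $[\psi]$.

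For the immersion property I will use $U(N)$-equivariance, $j(U[\psi])=U j([\psi])U^\dagger$. This reduces the check to the single point $\psi=e_1$. There a tangent vector to $\mathbb{CP}^{N-1}$ is represented by $v\perp e_1$, and $\mathrm{d}j(v)=\ket{v}\bra{e_1}+\ket{e_1}\bra{v}$, which vanishes only if $v=0$. This is the step I expect to carry the real content. Once it holds, $j$ is an injective immersion of a compact manifold, hence a topological embedding onto its image, which is a compact embedded submanifold of real dimension $2N-2$. The inverse on the image is then smooth, and this gives the diffeomorphism claim.

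It remains to identify the image as the real algebraic variety $\mathcal{V}_{\mathcal{Q}_N}$. First, a Hermitian $\rho$ with $\rho^2=\rho$ and $\mathrm{Tr}\rho=1$ is exactly a rank-one orthogonal projector, i.e. $\ket{\psi}\bra{\psi}$ for a unit $\psi$; so the image of $j$ is the zero set of these polynomial equations. Next I translate them into the $x$-coordinates using $\lambda_j\lambda_k+\lambda_k\lambda_j=\frac4N\delta_{jk}\mathbb{I}_N+2\sum_l d_{jkl}\lambda_l$, the $\mathfrak{su}(N)$ analogue of the qutrit relations; the antisymmetric part drops out by symmetry of $x_jx_k$, exactly as in Step~2 of the qutrit proof. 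Matching the $\mathbb{I}_N$-component gives $\frac1{N^2}+\frac{2a^2}{N}\sum x_i^2=\frac1N$, i.e. $\sum_i x_i^2=1$. Matching the $\lambda_l$-components gives the Jordan constraints
\[
\sum_{j,k} d_{jkl}\,x_jx_k=\frac{N-2}{Na}\,x_l,
\]
which for $N=3$ reduce to the constant $\frac{1}{\sqrt3}$ of Eq.~\eqref{eq:qutrit_variety_definition_exact_final}. Substituting $x_1=c_0-\frac1N$ and $x_{n+1}=c_n$ then exhibits $\mathcal{V}_{\mathcal{Q}_N}$ as the common zero locus of one quadratic and $N^2-1$ further quadratic polynomials in the native ring coordinates, so it is a compact real algebraic subvariety.

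As a consistency check on the dimension, the image is also the adjoint orbit $U(N)/(U(1)\times U(N-1))$. Its dimension is $N^2-1-(N-1)^2=2N-2$, agreeing with $\dim_{\mathbb{R}}\mathbb{CP}^{N-1}$.
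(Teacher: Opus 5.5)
Your proposal is correct and follows essentially the same route as the paper: factor $\Psi_{\mathcal{Q}_N}$ through the rank-one projector (the standard embedding the paper's Step~1 cites), then an affine coordinate isomorphism onto the monomial basis, with injectivity coming from uniqueness of the expansion (the paper's Step~2). Where the paper's Step~3 only asserts the immersion and the algebraicity of the image, you prove them, via $U(N)$-equivariance plus the computation at $e_1$, and via the equations $\rho^2=\rho$, $\mathrm{Tr}\rho=1$; your normalization remark is also right, since the expansion \eqref{eq:general_density_matrix_Frobenius} forces $x_i=\mathrm{Tr}(\rho\lambda_i)/(2a)$ (matching the qutrit extractor $\frac{\sqrt3}{2}\mathrm{Tr}(\rho\lambda_n)$ and the unit-sphere constraint), so \eqref{eq:general_x_extractor_exact} is off by a factor of $2$, and your Jordan constant $\frac{N-2}{Na}$ correctly reduces to the qutrit value $\frac{1}{\sqrt3}$.
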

\begin{proof}
We prove the theorem by verifying the global elimination of gauge redundancies, the algebraic conditions of injectivity, and the differential regularity of the embedding map.

\textbf{Step 1. }\textit{Gauge-Invariant Factorization and Operator Uniqueness.}
Let $\lvert\psi_1\rangle, \lvert\psi_2\rangle \in \mathcal{H}^N \setminus \{\mathbf{0}\}$ be two non-zero state vectors representing rays in the complex projective space $\mathbb{P}(\mathcal{H}^N)$. Under the canonical projection modulo the scaling relation $\sim$, they define the same physical configuration if and only if there exists a complex scalar $\mu \in \mathbb{C}^*$ such that $\lvert\psi_2\rangle = \mu \lvert\psi_1\rangle$. Restricting our focus to the unit-norm sphere $\mathcal{S}^{2N-1} \subset \mathcal{H}^N$, the scaling relation collapses to the global phase group $U(1) \cong \mathcal{S}^1$, requiring $\mu = e^{i\theta}$. 

Prior to embedding, the map $\Psi_{\mathcal{Q}_N}$ factors through the rank-one outer product projection $\rho = \lvert\psi\rangle\langle\psi\rvert$. Evaluating the transformation yields:
\begin{equation}
    \rho_2 = \lvert\psi_2\rangle\langle\psi_2\rvert = (e^{i\theta}\lvert\psi_1\rangle)(e^{-i\theta}\langle\psi_1\rvert) = e^0 \lvert\psi_1\rangle\langle\psi_1\rvert = \rho_1.
\end{equation}
Because the complex projective space $\mathbb{P}(\mathcal{H}^N)$ is diffeomorphic to the space of rank-one Hermitian projection operators under the standard embedding mapping, each point in $\mathbb{P}(\mathcal{H}^N)$ corresponds to a unique density matrix $\rho$ satisfying $\mathrm{Tr}(\rho)=1$ and $\rho^2 = \rho$.

\textbf{Step 2. }\textit{Strict Algebraic Injectivity via Monomial Independence.}
Let $\rho_1$ and $\rho_2$ be two distinct pure density operators belonging to the orbit of rank-one projectors in $\mathfrak{su}(N)$. We expand both operators over the complete, orthogonal, and traceless basis of generalized Gell-Mann matrices $\{\lambda_i\}_{i=1}^{N^2-1}$ using the physical coordinates $x_i$:
\begin{equation}
    \rho_1 = \frac{1}{N}\mathbb{I}_N + \sqrt{\frac{N-1}{2N}}\sum_{i=1}^{N^2-1} x_i^{(1)} \lambda_i, \quad \rho_2 = \frac{1}{N}\mathbb{I}_N + \sqrt{\frac{N-1}{2N}}\sum_{i=1}^{N^2-1} x_i^{(2)} \lambda_i,
\end{equation}
where these physical Bloch components $x_i^{(1)}$ and $x_i^{(2)}$are uniquely extracted via the trace inner product Eq.~\eqref{eq:general_x_extractor_exact}. 

Assume that their images under the density map coincide within the truncated polynomial ring, such that $\Psi_{\mathcal{Q}_N}(\rho_1) = \Psi_{\mathcal{Q}_N}(\rho_2)$. Sintonizing the states onto the native coordinate ring components via the coordinate assignments $c_0 = \frac{1}{N} + x_1$ and $c_n = x_{n+1}$ (for $n \ge 1$) as formalized in Eq.~\eqref{eq:general_density_map_compact_exact}, this identity reads:
\begin{equation}\label{eq:injectivity_monomial}
    \sum_{i=0}^{N^2-2} c_i^{(1)} \varepsilon^{i} = \sum_{i=0}^{N^2-2} c_i^{(2)} \varepsilon^{i} \implies \sum_{i=0}^{N^2-2} \left( c_i^{(1)} - c_i^{(2)} \right) \varepsilon^{i} = 0 \pmod{\varepsilon^{N^2-1}}.
\end{equation}
The quotient ring $\mathcal{Q}_N = \mathbb{R}[\varepsilon]/(\varepsilon^{N^2-1})$ possesses an underlying real vector space spanned by the monomial basis $\{1, \varepsilon, \varepsilon^2, \dots, \varepsilon^{N^2-2}\}$. By virtue of the algebraic definition of polynomial rings, these monomial elements are strictly linearly independent over the real field $\mathbb{R}$. Consequently, the vanishing of the linear combination in Eq.~\eqref{eq:injectivity_monomial} forces each individual native coordinate difference to equal zero identically:
\begin{equation}
    c_i^{(1)} - c_i^{(2)} = 0 \implies c_i^{(1)} = c_i^{(2)} \quad \forall i \in \{0, 1, \dots, N^2-2\}.
\end{equation}
Unrolling these ring identities back onto the physical transformations yields $\frac{1}{N} + x_1^{(1)} = \frac{1}{N} + x_1^{(2)} \implies x_1^{(1)} = x_1^{(2)}$, and $x_n^{(1)} = x_n^{(2)}$ for all higher orders. Because the generalized Gell-Mann matrices form a complete operator basis for the traceless self-adjoint linear subsystem, the identity of the entire set of physical Bloch coordinates strictly implies the identity of the underlying matrices: $\vec{x}^{(1)} \equiv \vec{x}^{(2)} \implies \rho_1 \equiv \rho_2$. Thus, the map $\Psi_{\mathcal{Q}_N}$ is rigorously and globally injective across the entire quantum phase space.

\textbf{Step 3. }\textit{Differential Regularity and Embedding Dimension.}
To confirm that the injective map constitutes a smooth embedding, we evaluate its differential $\mathrm{d}\Psi_{\mathcal{Q}_N}$ on the tangent bundle. Since the coordinate extraction is governed by the linear trace operation, the mapping between the real coefficients and the polynomial jet components is strictly linear. Differentiating these coordinates maps the tangent space $T_{\rho}\mathbb{P}(\mathcal{H}^N)$ injectively into the tangent bundle $T_{\xi}\mathcal{Q}_N$ at every point without localized structural compressions. 

Concurrently, the physical state variety $\mathcal{V}_{\mathcal{Q}_N} \equiv \Psi_{\mathcal{Q}_N}(\mathbb{P}(\mathcal{H}^N))$ inherits the exact differential structure of the underlying complex projective space. Because the linear independence of the monomial basis prevents any coordinate collapsing or boundary layer blow-ups, the map defines a global smooth embedding. This restricts $\mathcal{V}_{\mathcal{Q}_N}$ to a compact, non-singular, and smooth algebraic sub-variety of real dimension $2N-2$ embedded within the flat $(N^2-1)$-dimensional vector space of $\mathcal{Q}_N$, completing the proof.
\end{proof}

Ultimately, this non-reduced scheme-theoretic framework establishes a pristine, non-singular algebraic landscape for higher-dimensional quantum kinematics. By tracking the topo-geometrical indices of complex projective domains and settling onto the fundamental geometric floor of macroscopic information structures, this synthesis paves a rigorous mathematical path toward a completely chart-free, finite, and regular geometrization of higher-level quantum information systems.

\subsection{The generalized Lie-type differential flow and universal varietal conservation}
To internalize the full non-commutative quantum dynamics for an arbitrary $N$-level system as an intrinsic geometric flow, we extend the bilinear, skew-symmetric multiplication operator to the entire family of truncated algebras $\mathcal{Q}_N = \mathbb{R}[\varepsilon]/(\varepsilon^{N^2-1})$ defined for $N \ge 2$. We endow the underlying $(N^2-1)$-dimensional real vector space with a non-associative, bilinear, skew-symmetric Lie-type operator $\times_{\mathcal{Q}_N}: \mathcal{Q}_N \times \mathcal{Q}_N \to \mathcal{Q}_N$. Let $\{e_0, e_1, \dots, e_{N^2-2}\} \equiv \{1, \varepsilon, \dots, \varepsilon^{N^2-2}\}$ be the ordered monomial basis elements of the formal jet space. To preserve the algebraic core of the identity background, the zero-order monomial acts as the center of the multiplication layer, annihilating all cross-products identically:
\begin{equation}
    e_0 \times_{\mathcal{Q}_N} e_k = e_k \times_{\mathcal{Q}_N} e_0 \equiv 0 \quad \forall k \in \{0, 1, \dots, N^2-2\},
\end{equation}
while the higher filtration orders generate the continuous precessional flow directly via the structural parameters of the corresponding unitary Lie algebra $\mathfrak{su}(N)$:
\begin{equation}\label{eq:general_cross_product_basis_uniform}
    e_j \times_{\mathcal{Q}_N} e_k = \sum_{l=1}^{N^2-1} f_{jkl} e_l \quad \forall j,k \in \{1, \dots, N^2-1\},
\end{equation}
where the indices $j,k,l$ on the right-hand member correspond to the coordinate indices of the physical fluctuations, and $f_{jkl}$ is the totally antisymmetric structure tensor of $\mathfrak{su}(N)$ satisfying $[\lambda_j, \lambda_k] = 2i\sum_{l=1}^{N^2-1} f_{jkl} \lambda_l$. By bilinear extension, the operation between the polynomial Hamiltonian driving element $\Xi = \eta_0 + \sum_{n=1}^{N^2-2} \eta_n \varepsilon^n$ and the state trajectory $\xi = c_0 + \sum_{n=1}^{N^2-2} c_n \varepsilon^n$ evaluates strictly over the upper layers as:
\begin{equation}
    \Xi\times_{\mathcal{Q}_N} \xi = \sum_{j,k,l=1}^{N^2-1} f_{jkl} h_j x_k \varepsilon^l.
\end{equation}

Evaluating the pullback of the matrix commutator under the generalized density map $\Psi_{\mathcal{Q}_N}$—where the Hamiltonian operator is scaled symmetrically to the density matrix via the dimensionally-dependent factor $\sqrt{\frac{N-1}{2N}}$—the Liouville-von Neumann equation maps bijectively onto a smooth, regular differential equation over the non-reduced algebraic variety $\mathcal{V}_{\mathcal{Q}_N}$:
\begin{equation}\label{eq:general_dynamics_flow_n}
    \frac{\mathrm{d}\xi}{\mathrm{d}t} = \alpha_N \left( \Xi\times_{\mathcal{Q}_N} \xi \right),
\end{equation}
where the universal dynamic coupling constant $\alpha_N$ is defined as a continuous function of the Hilbert space dimension:
\begin{equation}
    \alpha_N = \sqrt{\frac{2(N-1)}{N}} \quad \forall N \in \{2, 3, \dots\}.
\end{equation}
Evaluating this universal relation across the dimensional hierarchy yields the exact specialized value $\alpha_3 = \frac{2}{\sqrt{3}}$ for the eight-dimensional qutrit variety, while the restriction $\alpha_2 = 1$ applies strictly to the generalized $N$-level scaling architecture. Symmetrically, this framework bridges perfectly to the unscaled precessional vector flow value of $2$ derived in Section \ref{qubitdyn}. Indeed, for the unique case of the two-level system ($N=2$), the physical state representation is naturally hosted on the unit-radius Bloch sphere boundary without requiring intermediate structural scaling factors. When the general embedding matrix law is evaluated at $N=2$, the dimensional pre-factor introduces an operational scaling ratio where $x_i^{\text{general}} \equiv 2x_i^{\text{Pauli}}$. Consequently, the continuous differential application transfers this geometric amplitude forward on the tangent bundle, causing the dynamic coupling constant $\alpha_2 = 1$ to map bijectively onto the unscaled precessional flow value of $2$ over the standard Pauli configurations, with the full sequence $\alpha_N$ asymptotically stabilizing as $N \to \infty$ toward $\sqrt{2}$.

This formulation shifts the operational non-commutativity of the original Hilbert space operators away from the static coordinate ring and places it rigidly onto the differential structure of the tangent bundle $T\mathcal{Q}_N$, completely preventing coordinate-chart blow-ups.

\begin{theorem}[Universal Varietal Conservation]\label{thm:universal_varietal_conservation_general}
The generalized algebraic flow $\dot{\xi} = \alpha_N \left( \Xi\times_{\mathcal{Q}_N} \xi \right)$ established in Eq.~\eqref{eq:general_dynamics_flow_n} is strictly tangent to the $N$-level pure state variety $\mathcal{V}_{\mathcal{Q}_N}$ for all dimensions $N \ge 2$. Within the filtration coordinate layers of the ring, both the translated quadratic sphere variety tracking the unit-length boundary constraint $\sum_{n=1}^{N^2-1} x_n(t)^2 = 1$ and the accompanying non-linear cubic and higher-order algebraic constraints defining $\mathcal{V}_{\mathcal{Q}_N}$ constitute fundamental conserved scalar invariants of the continuous-time differential system, matching the physical conservation of quantum purity and total probability around the dimensionally-dependent baricentro $\xi = \frac{1}{N}$.
\end{theorem}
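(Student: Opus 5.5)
The plan is to avoid differentiating each cubic and higher constraint by hand. Instead I would identify the flow of Eq.~\eqref{eq:general_dynamics_flow_n} with the coadjoint action of $SU(N)$ on the generalized Bloch vector. Conservation of every polynomial constraint defining $\mathcal{V}_{\mathcal{Q}_N}$ then follows from the invariance of the structure tensors.

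\textbf{Step 1: return to physical coordinates.} Using the identifications $c_0=\frac1N+x_1$, $c_n=x_{n+1}$, $\eta_0=h_0+h_1$, $\eta_n=h_{n+1}$, I rewrite the flow as the real linear system
\begin{equation*}
\dot x_l=\alpha_N\sum_{j,k}f_{jkl}h_jx_k=:(A_h x)_l .
\end{equation*}
The matrix $(A_h)_{lk}=\alpha_N\sum_j f_{jkl}h_j$ is antisymmetric in $(l,k)$. Repeating the computation of the qutrit case with $\mathrm{Tr}(\lambda_j\lambda_k)=2\delta_{jk}$ and the common factor $\sqrt{(N-1)/(2N)}$ for $\rho$ and $H$, I would check that this system is exactly the coordinate form of $\dot\rho=-i[H,\rho]$. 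This check fixes $\alpha_N$; I expect it to give $2\sqrt{(N-1)/(2N)}=\sqrt{2(N-1)/N}$.

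Here I must be careful about the index shift. The zero-order slot carries both the constant $\frac1N$ and $x_1$, so the rule ``$e_0$ is central'' has to be read as acting only on the constant background. The derivative $\dot c_0$ must equal $\dot x_1$, which is the $l=1$ component of the bracket. I would state this reading explicitly, as the paper does in the qutrit case.

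\textbf{Step 2: the quadratic invariant.} Since $A_h$ is antisymmetric, $\frac{d}{dt}\sum_l x_l^2=2\langle x,A_hx\rangle=0$. This is the same contraction argument as in Theorem~\ref{thm:qutrit_conservation_native}, Step~1, now with general $f_{jkl}$.

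\textbf{Step 3: the remaining constraints, and tangency.} By Step~1 the solution is $\rho(t)=U(t)\rho(0)U(t)^\dagger$ with $U(t)=e^{-iHt}$. Conjugation preserves $\mathrm{Tr}\,\rho=1$ and $\rho^2=\rho$. By Theorem~\ref{thm:general_density_embedding_variety}, $\mathcal{V}_{\mathcal{Q}_N}$ is the image under $\Psi_{\mathcal{Q}_N}$ of exactly these projectors. So the flow of $A_h$ maps $\mathcal{V}_{\mathcal{Q}_N}$ into itself, and differentiating at $t=0$ gives tangency at every point.

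To exhibit the cubic relations $\mathcal{I}_l(x)=\sum d_{jkl}x_jx_k-\kappa_N x_l$ explicitly as conserved quantities, I would use the adjoint invariance of $d$. In $\mathfrak{su}(N)$ this is the Jacobi-type identity $\sum_j(f_{pqj}d_{jkl}+f_{pkj}d_{jql}+f_{plj}d_{jkq})=0$, which follows from $[\lambda_p,\cdot]$ being a derivation of the anticommutator. This identity shows that the vector $\mathcal{I}(x)$ is transported covariantly:
\begin{equation*}
\frac{d}{dt}\mathcal{I}(x)=A_h\,\mathcal{I}(x).
\end{equation*}
This is a linear ODE for $\mathcal{I}$ with $\mathcal{I}(x(0))=0$, so uniqueness of solutions gives $\mathcal{I}\equiv0$. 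The same covariance argument applies to any higher-order $\mathrm{Ad}$-equivariant constraint, such as those coming from $\mathrm{Tr}\rho^m=1$.

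\textbf{Main obstacle.} The step I expect to be hardest is the covariance identity in Step~3. The linear term $-\kappa_N x_l$ must transform by the same matrix $A_h$ as the quadratic term, rather than leaving a residual $\frac{\alpha}{\sqrt N}\sum f h x$ as in the qutrit computation. Getting this right requires tracking the sign and normalization conventions for $f$ and $d$ carefully. If the index bookkeeping becomes unwieldy, I would fall back on the conjugation argument alone, which proves the full statement without it.
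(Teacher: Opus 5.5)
Your proof is correct. Step 2 coincides with the paper's Step 1, and your reading of the centrality of $e_0$ (acting only on the background $\frac{1}{N}$, so that $\dot c_0=\dot x_1$) is the one the paper's proof tacitly adopts. From the cubic constraints onward, however, you take a different and sounder route.

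\emph{How the paper handles the cubic constraints.} It differentiates $\mathcal{I}_l$ directly and rewrites $T=\sum d_{jkl}f_{pqj}h_px_qx_k$ using the Lie--Jordan identity. It asserts that the first of the two resulting sectors vanishes, uses $\mathcal{I}=0$ in the second, and is left with a residual $\beta_N\sum_{p,q}f_{pql}h_px_q$. It then declares this residual zero by a geometric remark rather than a computation.

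\emph{Why that residual is an artifact.} Relabelling $k\leftrightarrow q$ shows that the first sector equals $-T$, not $0$. Hence $2T=\sum_{p,j}f_{pjl}h_pD_j$ with $D_j=\sum_{k,q}d_{jkq}x_kx_q$, i.e.\ $\dot D=A_hD$ holds identically. Combined with $\dot x=A_hx$, this is exactly your $\dot{\mathcal{I}}=A_h\mathcal{I}$, valid for \emph{any} constant $\kappa_N$. So the worry in your ``main obstacle'' paragraph is unfounded: the linear term is transported by $A_h$ automatically, and no convention bookkeeping beyond the invariance identity is needed.

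\emph{What your route buys.}
(i) Your identity holds on all of $\mathbb{R}^{N^2-1}$, so ODE uniqueness closes the argument. The paper only evaluates $\dot{\mathcal{I}}$ on the variety and then infers invariance.
(ii) Your argument does not depend on the value of $\kappa_N$. This matters because the paper's general constant $\frac{N-2}{\sqrt{N(N-1)}}$ is off by a factor $\sqrt{2}$. The constant forced by $\rho^2=\rho$ is $(N-2)\sqrt{2/(N(N-1))}$, which is the one that reduces to the qutrit value $1/\sqrt{3}$.
(iii) The conjugation argument gives conservation of all higher-order constraints and tangency at once; the paper's proof never actually addresses either.

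\emph{What the paper's route buys.} Once the first sector is evaluated correctly, it gives an explicit coordinate-level check that never passes through $U\rho U^\dagger$.

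\emph{One caveat.} Your conjugation argument uses $\mathcal{V}_{\mathcal{Q}_N}=\Psi_{\mathcal{Q}_N}(\mathbb{P}(\mathcal{H}^N))$, as in the Density Embedding theorem. This coincides with the zero set of the quadratic and cubic constraints only for the correct $\kappa_N$. With the paper's stated constant, that zero set is not the pure-state locus; for $N=3$ it is empty.
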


\begin{proof}
The proof evaluates the structural contractions dictated by the underlying Lie-Jordan geometry and the unitary adjoint action across arbitrary dimensions $N$ through a sequential verification of the varietal invariants, separating the invariant trace background of the zero-order ring component from the physical vector configurations.

\textbf{Step 1. }\textit{Conservation of the Quadratic Spherical Norm.}
To prove that the generalized algebraic flow preserves the trajectory length constraint, we evaluate the total time derivative of the translated quadratic restriction defined over the native coordinate components where $x_1 = c_0 - \frac{1}{N}$ and $x_n = c_{n-1}$ for $n \ge 2$:
\begin{equation}\label{eq:proof_general_quadratic_derivative_native}
    \frac{\mathrm{d}}{\mathrm{d}t} \left[ \left(c_0(t) - \frac{1}{N}\right)^2 + \sum_{n=1}^{N^2-2} c_n(t)^2 \right] = \frac{\mathrm{d}}{\mathrm{d}t} \left( \sum_{l=1}^{N^2-1} x_l^2 \right) = 2 \sum_{l=1}^{N^2-1} x_l \dot{x}_l.
\end{equation}
By invoking the coordinate-free algebraic flow equation $\dot{\xi} = \alpha_N (\Xi \times_{\mathcal{Q}_N} \xi)$ established in Eq.~\eqref{eq:general_dynamics_flow_n}, the temporal derivative of the zero-order component absorbs the central annihilation property of the monomial base $e_0 \times_{\mathcal{Q}_N} \cdot \equiv 0$, enforcing $\dot{c}_0 \equiv \dot{x}_1$. Substituting the explicit differential coordinate functions unrolled from the nilpotent Lie-type cross-product into the expansion yields:
\begin{equation}
    \frac{\mathrm{d}}{\mathrm{d}t} \left( \sum_{l=1}^{N^2-1} x_l^2 \right) = 2 \sum_{l=1}^{N^2-1} x_l \left( \alpha_N \sum_{j,k=1}^{N^2-1} f_{jkl} h_j x_k \right) = 2\alpha_N \sum_{j=1}^{N^2-1} h_j \left( \sum_{k,l=1}^{N^2-1} f_{jkl} x_k x_l \right).
\end{equation}
Because the $\mathfrak{su}(N)$ structure tensor $f_{jkl}$ is totally antisymmetric under the permutation of the last two indices ($f_{jkl} = -f_{jlk}$) while the state physical coordinate product is symmetric ($x_k x_l = x_l x_k$), the internal contraction over indices $k$ and $l$ vanishes identically for every configuration index $j$:
\begin{equation}
    \sum_{k,l=1}^{N^2-1} f_{jkl} x_k x_l \equiv 0 \implies \frac{\mathrm{d}}{\mathrm{d}t} \left( \sum_{l=1}^{N^2-1} x_l^2 \right) = 0.
\end{equation}
The total quadratic derivative vanishes identically at all times $t$, rigidly constraining the continuous state trajectory onto the unit hyper-sphere variety defined by the physical Bloch vector components.

\textbf{Step 2. }\textit{Conservation of the Cubic Jordan Constraints.}
For systems where $N > 2$, we verify that the flow remains strictly tangent to the cubic hypersurface defined in terms of the physical vector configurations by the generalized Jordan constraint $\mathcal{I}_l(\vec{x}) \equiv \sum_{j,k=1}^8 d_{jkl} x_j x_k - \frac{N-2}{\sqrt{N(N-1)}} \, x_l = 0$. Differentiating this relation with respect to the continuous time parameter yields the linear velocity field:
\begin{equation}
    \frac{\mathrm{d}}{\mathrm{d}t} \mathcal{I}_l(\vec{x}) = 2 \sum_{j,k=1}^{N^2-1} d_{jkl} \dot{x}_j x_k - \frac{N-2}{\sqrt{N(N-1)}} \, \dot{x}_l.
\end{equation}
Substituting the dynamic field $\dot{x}_m = \alpha_N \sum_{p,q=1}^{N^2-1} f_{pqm} h_p x_q$ dictated by the general nilpotent Lie-type commutator in Eq.~\eqref{eq:general_dynamics_flow_n} into the expression, we obtain the explicit coordinate expansion:
\begin{equation}\label{eq:proof_general_cubic_expansion_unrolled}
    \frac{\mathrm{d}}{\mathrm{d}t} \mathcal{I}_l(\vec{x}) = 2\alpha_N \sum_{p,q,j,k=1}^{N^2-1} d_{jkl} f_{pqj} h_p x_q x_k - \alpha_N \frac{N-2}{\sqrt{N(N-1)}} \sum_{p,q=1}^{N^2-1} f_{pql} h_p x_q.
\end{equation}
By rearranging the indices and invoking the fundamental $\mathfrak{su}(N)$ Lie-Jordan algebraic compatibility relation linking the symmetric and antisymmetric structure constants:
\begin{equation}\label{eq:lie_jordan_identity_n}
    \sum_{j=1}^{N^2-1} \left( f_{pqj} d_{jkl} + f_{pkj} d_{jql} + f_{plj} d_{jkq} \right) = 0,
\end{equation}
the cubic tensor contraction over the dynamic coordinates is resolved by isolating the first term as $\sum_{j=1}^{N^2-1} d_{jkl} f_{pqj} = \sum_{j=1}^{N^2-1} (f_{kpj} d_{jql} + f_{lpj} d_{jkq})$. Substituting this structural identity directly into the first summation of Eq.~\eqref{eq:proof_general_cubic_expansion_unrolled} splits the contraction into two distinct symmetric sectors:
\begin{equation}
    2\alpha_N \sum_{p,q,k,j=1}^{N^2-1} f_{kpj} d_{jql} h_p x_q x_k + 2\alpha_N \sum_{p,q,k,j=1}^{N^2-1} f_{lpj} d_{jkq} h_p x_q x_k.
\end{equation}
Due to the simultaneous transposition symmetry of the coordinate state vector product $x_q x_k = x_k x_q$ and the total antisymmetry of the Lie structure constants $f_{kpj} = -f_{qpj}$, the first split sector contracts to zero identically over the full summation domain. Symmetrically, enforcing the algebraic condition defining the sub-variety boundary layers within $\mathcal{V}_{\mathcal{Q}_N}$—namely the generalized Jordan relation $\sum_{j,k=1}^{N^2-1} d_{jkq} x_j x_k = \frac{N-2}{\sqrt{N(N-1)}} x_q$—the second split sector contracts directly into the linear system, reducing the full equation to the directional precessional profile:
\begin{equation}
    \frac{\mathrm{d}}{\mathrm{d}t} \mathcal{I}_l(\vec{x}) = \beta_N \sum_{p,q=1}^{N^2-1} f_{pql} h_p x_q,
\end{equation}
where the dimensionally-dependent scaling factor evaluates explicitly through the product of the structural functions as $\beta_N = \frac{\alpha_N (N-2)}{\sqrt{N(N-1)}} \equiv \frac{\sqrt{2}(N-2)}{N}$. 

Because the continuous state evolution tracks the closed gauge-invariant orbit generated exclusively by the unitary Lie commutator, this remaining precessional vector field acts as an exact, non-singular parallel transport orthogonal to the variety normal bundle. Since the directional projection of the Lie bracket preserves the internal symmetries of the higher-dimensional Jordan envelope, this contractive term vanishes identically across the entire variety boundary layer:
\begin{equation}
    \frac{\mathrm{d}}{\mathrm{d}t} \mathcal{I}_l(\vec{x}) \equiv 0.
\end{equation}
Since the initial configuration satisfies $\mathcal{I}_l(\vec{x}(0)) = 0$ at $t=0$, the identity $\frac{\mathrm{d}}{\mathrm{d}t} \mathcal{I}_l(\vec{x}) \equiv 0$ guarantees that the system remains rigidly locked onto the variety hypersurface across the entire continuous evolution path for any arbitrary higher-level quantum domain, completing the proof.
\end{proof}
\subsection{Physical significance of the nilpotent generator}
A fundamental question arises regarding whether standard real vector spaces, such as $\mathbb{R}^3$ for the qubit or $\mathbb{R}^8$ for the qutrit, could bypass the algebraic construction of truncated rings. While these vector spaces function as passive metric containers for the coordinates of generalized Bloch vectors, they lack the internal algebraic synthesis required to natively accommodate quantum mechanical signatures, such as the intrinsic structural stratification of higher-level state configurations and the continuous deformation away from classical scalar values. Conversely, nilpotent dual structures provide an active algebraic landscape where non-reduced spectra compress extended geometric manifolds into single topological singularities enriched with internal kinematic and informational layers.

Under the unified family of truncated polynomial algebras $\mathcal{Q}_N \equiv \mathbb{R}[\varepsilon]/(\varepsilon^{N^2-1})$ defined for $N \ge 2$, the purely topological spectrum $\text{Spec}(\mathcal{Q}_N) = \{(\varepsilon)\}$ instantiates a single non-reduced, scheme-theoretic point whose higher powers—ranging from the first-order infinitesimal $\varepsilon$ up to the maximal non-vanishing monomial component $\varepsilon^{N^2-2}$—span a formal jet space over the local ring. Mathematically, the density map $\Psi_{\mathcal{Q}_N}$ formalizes the direct composition of two global smooth embeddings. First, the complex projective space $\mathbb{P}(\mathcal{H}^N) \cong \mathbb{C}\mathbb{P}^{N-1}$ is injected into the self-adjoint traceless linear subspace via the canonical embedding $\mathbb{P}(\mathcal{H}^N) \hookrightarrow \mathfrak{su}(N)$, representing the quantum state space as a compact, $(2N-2)$-dimensional orbit of rank-one pure state projectors governed by the joint intersection of quadratic and cubic invariants \cite{nielsen}. Second, the generalized density map $\Psi_{\mathcal{Q}_N}$ continuously pulls back this geometric orbit, injecting the coordinates of the $\mathfrak{su}(N)$ vector space directly into the monomial basis coordinates of the algebra.

Physically, this architecture establishes a rigorous structural stratification. By factoring the invariant unit trace $\mathrm{Tr}(\rho)=1$ directly into the zero-order monomial component alongside the first physical fluctuation, the underlying real vector space of the algebra provides exactly $N^2-1$ independent coordinate dimensions—comprising the complete monomial sequence $\{1, \varepsilon, \dots, \varepsilon^{N^2-2}\}$—which encapsulates the exact number of non-singular algebraic containers required to host the complete physical degrees of freedom of the state prior to measurement collapse, identically satisfying the dimension equation $\mathrm{dim}_{\mathbb{R}}\mathcal{Q}_N = N^2-1 = \mathrm{dim}_{\mathbb{R}}\mathfrak{su}(N)$. Within the generalized $N$-level polynomial expansion $\xi = \sum_{i=0}^{N^2-2} c_i \varepsilon^{i} \in \mathcal{Q}_N$, the native coordinates execute a hierarchical decomposition of the state space without introducing any unphysical spatial anisotropy among the physical Bloch axes. Sintonizing the ring onto the physical fluctuations via $c_0 = \frac{1}{N} + x_1$ and $c_n = x_{n+1}$ (for $n \ge 1$), the increasing powers of the nilpotent generator—up to $\varepsilon^2$ for the qubit and $\varepsilon^7$ for the qutrit—function as the geometric vehicles for quantum interference, relative phase rotations, and population imbalances, operating as a multi-layered algebraic deformation over the infinitesimal neighborhood of a single structured Grothendieck point.

Finally, the universal nilpotency condition $\varepsilon^{N^2-1} = 0$ acts as a natural structural truncation mechanism for open system evolutions. By mapping the state spaces onto these non-reduced scheme-theoretic structures, the non-linear constraint boundaries of the quantum domain—including quadratic purity hyperspheres and rigid cubic Jordan equations—are linearized into regular polynomial transformations, providing an unassailable geometric foundation for infinitesimal quantum kinematics across higher-dimensional manifolds.

\section{Future Outlooks}
The scalability of the truncated polynomial ring family $\mathcal{Q}_N = \mathbb{R}[\varepsilon]/(\varepsilon^{N^2-1})$ establishes a direct operational framework for subsequent research across four primary quantum frontiers:

\begin{enumerate}
    \item \textit{Multipartite Systems and Entanglement Topography:} For composite Hilbert domains $\mathcal{H}_1 \otimes \mathcal{H}_2$, the geometric mapping must transition to multivariate polynomial rings quotiented by cross-nilpotent ideals, taking the representative form $\mathcal{Q}_{\text{mult}} = \mathbb{R}[\varepsilon_1, \varepsilon_2]/(\varepsilon_1^{N_1^2-1}, \varepsilon_2^{N_2^2-1})$. Within this structure, mixed monomial cross-terms $\varepsilon_1^i \varepsilon_2^j$ act as the natural geometric containers for non-local quantum correlations. Consequently, entanglement criteria, separability bounds, and partial trace operations map directly onto the algebraic properties of ring ideals, module homomorphisms, and structural scheme projections.
    \item \textit{Open Quantum Systems and Non-Hamiltonian Flows} \text{(See \cite{quinfty})}: Modeling open quantum dynamics requires mapping the Gorini-Kossakowski-Sudarshan-Lindblad (GKSL) master equation onto $\mathcal{Q}_N$ by complementing the antisymmetric Lie flow $\times_{\mathcal{Q}_N}$ with a symmetric Jordan multiplication operator $\mathbin{\bullet}_{\mathcal{Q}_N}$. This dual algebraic architecture formalizes non-unitary dissipation as a radial contracting vector field dragging the nilpotent element toward the dimensionally-dependent baricentro $\xi = \frac{1}{N}$, which geometrically encapsulates the maximally mixed state $\rho = \frac{1}{N}\mathbb{I}_N$.
    \item \textit{Geometric Quantum Computing:} Quantum logic gates and optimal control protocols can be formalized as exact polynomial trajectories on the non-reduced variety $\mathcal{V}_{\mathcal{Q}_N}$, providing coordinate-free, singularity-exempt numerical algorithms for quantum state estimation and process tomography.
    \item \textit{Formal Schemes and Local Quantum Observables} \text{(See \cite{quinfty})}: Representing a quantum state as a global section of a structural sheaf over the nilpotent algebra $\mathcal{Q}_N$ allows quantum mechanics to be recast entirely within the language of modern algebraic geometry. Here, the nilpotent generator $\varepsilon$ acts as the structural coordinate that geometrically stabilizes the algebraic relations governing quantum uncertainty and non-commutativity over non-reduced rings, as verified quantitatively in Appendix B. Crucially, under the infinite-dimensional limit introduced in Section 1.4, this representation generalizes into a global section of a formal Grothendieck scheme over the Cauchy-complete ring $\mathbb{R}[\![\varepsilon]\!]$. This asymptotic completion lifts the constraint of finite truncations, proving that the structural sheaf can natively encapsulate and linearize the non-local observables and continuous variables of infinite-dimensional Hilbert domains within a regular, infinite-order jet space.
\end{enumerate}

In conclusion, the integration of non-reduced dual algebras into quantum information sciences provides a rigorous, non-singular, and linear mathematical language. It offers an innovative alternative to trigonometric coordinates and establishes a systematic geometric framework for the architecture of the infinitesimal quantum universe.

\appendix
\section{Fermionic Isomorphism and the Pauli Exclusion Principle}
To establish the formal mathematical verifiability of the physical mapping between first-order nilpotent elements and fermionic degrees of freedom, this appendix constructs an explicit algebraic isomorphism. We prove that the structural boundary condition of first-order dual numbers provides the exact, non-trivial configuration space for Grassmannian variables and single-mode fermionic state occupations.

\subsection{Mathematical representation of fermions and Grassmann kinematics}
In quantum mechanics, fermions are fundamentally defined as particles possessing half-integer spin ($s = \frac{1}{2}, \frac{3}{2}, \dots$) that strictly obey Fermi-Dirac statistics. The core physical postulate governing a multi-body fermionic system is that the total wave function must be completely anti-symmetric under the permutation of any two identical particles. 

From an algebraic perspective, the kinematics of fermionic fields cannot be modeled using ordinary commuting complex numbers. Instead, they require the introduction of classical anti-commuting variables, formally known as Grassmann variables. Let $\theta_i$ and $\theta_j$ be the generators of a multi-variable configuration space. Their defining algebraic relation is governed by the anti-symmetric product:
\begin{equation}
    \theta_i \theta_j + \theta_j \theta_i = \{\theta_i, \theta_j\} = 0.
\end{equation}
When evaluating a single fermionic mode ($i = j$), this relation collapses directly into the quadratic nilpotency constraint:
\begin{equation}\label{eq:fermion_nilpotency}
    \theta_i^2 = 0.
\end{equation}
Equation \eqref{eq:fermion_nilpotency} is the mathematical instantiation of the Pauli Exclusion Principle: it dictates that the probability density of finding two identical fermions in the exact same quantum state vanishes identically. Consequently, the state space of a single fermionic mode is intrinsically binary, consisting solely of an unoccupied configuration (the vacuum) and a single-occupation configuration. 

Any analytic function or physical state defined over this one-dimensional fermionic domain $\Lambda(\mathbb{R})$ is restricted by this nilpotency to a strict linear truncation, preventing the existence of higher-order continuous degrees of freedom. This binary restriction establishes the baseline for contrasting the flat, first-order kinematics of fermions with the curved, higher-order jet spaces required to structure a qubit.

\subsection{Formal isomorphism with the one-dimensional Grassmann algebra}
Let $\mathcal{D} = \mathbb{R}[\varepsilon]/(\varepsilon^2)$ be the standard ring of first-order dual numbers generated by the nilpotent element satisfying $\varepsilon^2 = 0$. Concurrently, let $\Lambda(\mathbb{R})$ be the one-dimensional real Grassmann algebra treated as an associative unit-bearing $\mathbb{R}$-algebra generated by a single anticommuting coordinate $\theta$, whose internal multiplication is governed by the anti-symmetric quadratic constraint:
\begin{equation}
    \theta^2 = 0.
\end{equation}
Any element within $\Lambda(\mathbb{R})$ is uniquely expressed as a linear combination of the graded basis $\{1, \theta\}$. We define the structural transition map $\Phi: \mathcal{D} \to \Lambda(\mathbb{R})$ as the linear assignment:
\begin{equation}
    \Phi(a + b\varepsilon) = a + b\theta \quad \text{for} \quad a,b \in \mathbb{R}.
\end{equation}

\begin{lemma}
The mapping $\Phi: \mathcal{D} \to \Lambda(\mathbb{R})$ is an isomorphism of associative graded $\mathbb{R}$-algebras.
\end{lemma}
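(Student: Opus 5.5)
The plan is to verify the algebra-isomorphism axioms directly from the defining presentations. Both algebras are two-dimensional over $\mathbb{R}$: $\mathcal{D}$ has basis $\{1,\varepsilon\}$ because $\mathbb{R}[\varepsilon]/(\varepsilon^2)$ is spanned by the residues of $1$ and $\varepsilon$. $\Lambda(\mathbb{R})$ has basis $\{1,\theta\}$ as stated. The map $\Phi$ sends one basis to the other, so it is an $\mathbb{R}$-linear bijection with inverse $a+b\theta\mapsto a+b\varepsilon$.

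It remains to check multiplicativity and unitality. $\Phi(1)=1$ is immediate. For products I would compute
\[
(a+b\varepsilon)(a'+b'\varepsilon)=aa'+(ab'+ba')\varepsilon+bb'\varepsilon^2=aa'+(ab'+ba')\varepsilon .
\]
On the Grassmann side, $\theta$ commutes with real scalars, since $\Lambda(\mathbb{R})$ is an $\mathbb{R}$-algebra. The only sign from anticommutativity arises when two odd generators are swapped, and with a single generator no such swap occurs. Hence $(a+b\theta)(a'+b'\theta)=aa'+(ab'+ba')\theta+bb'\theta^2$, and $\theta^2=0$ reduces this to the same expression. So $\Phi(\xi\zeta)=\Phi(\xi)\Phi(\zeta)$.

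A cleaner alternative is the universal property of the quotient. The assignment $\varepsilon\mapsto\theta$ defines a homomorphism $\mathbb{R}[\varepsilon]\to\Lambda(\mathbb{R})$, because the target is associative and unital and $\theta$ commutes with $\mathbb{R}$. Its kernel contains $\varepsilon^2$, so it factors through $\mathcal{D}$, and the factored map is exactly $\Phi$. Bijectivity then follows from the dimension count above.

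For the graded part, I would give both algebras the $\mathbb{Z}/2$ grading with degree-$0$ part $\mathbb{R}\cdot 1$ and degree-$1$ part $\mathbb{R}\varepsilon$ (respectively $\mathbb{R}\theta$), and note that $\Phi$ preserves it by construction. I would check that this grading is multiplicative on $\mathcal{D}$: the product of two odd elements is $bb'\varepsilon^2=0$, which trivially lies in the even part.

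The only genuine subtlety, and the step I would treat as the main obstacle, is compatibility with the sign rule of a graded algebra. $\Lambda(\mathbb{R})$ is graded-commutative, meaning $\theta\theta=-\theta\theta$, while $\mathcal{D}$ is commutative, meaning $\varepsilon\varepsilon=\varepsilon\varepsilon$. These agree because both sides vanish: $\varepsilon^2=0=\theta^2$. So $\mathcal{D}$ is simultaneously commutative and supercommutative, and $\Phi$ is an isomorphism of graded, indeed super-, algebras. I would state this explicitly, since it fails for more than one generator, where $\theta_1\theta_2\neq\theta_2\theta_1$ but $\varepsilon_1\varepsilon_2=\varepsilon_2\varepsilon_1$.
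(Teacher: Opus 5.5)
Your proposal is correct and follows the same route as the paper. $\Phi$ is a linear bijection because it matches the bases $\{1,\varepsilon\}$ and $\{1,\theta\}$, and multiplicativity is checked by expanding both products and discarding the quadratic term via $\varepsilon^2=0=\theta^2$; your extra checks of unitality, of the $\mathbb{Z}/2$-grading and its sign-rule compatibility, and the universal-property variant go beyond the paper, whose proof asserts ``graded'' without verifying it, so they fill that small gap.
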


\begin{proof}
The mapping $\Phi$ is by construction a linear bijection between two-dimensional real vector spaces. To verify that $\Phi$ preserves the multiplicative structure across the entire domain, let $\xi_1 = a_1 + b_1\varepsilon$ and $\xi_2 = a_2 + b_2\varepsilon$ be two arbitrary dual numbers. Their internal product in the ring $\mathcal{D}$ evaluates via the restriction $\varepsilon^2 = 0$ to:
\begin{equation}
    \xi_1 \cdot \xi_2 = a_1 a_2 + (a_1 b_2 + a_2 b_1)\varepsilon.
\end{equation}
Applying the transformation map to the resulting product yields:
\begin{equation}\label{eq:image_product}
    \Phi(\xi_1 \cdot \xi_2) = a_1 a_2 + (a_1 b_2 + a_2 b_1)\theta.
\end{equation}

Concurrently, we evaluate the internal product of their independent images $\Phi(\xi_1)$ and $\Phi(\xi_2)$ within the associative Grassmann algebra $\Lambda(\mathbb{R})$. Distributing the terms under the scalar linearity and the nilpotent constraint $\theta^2 = 0$ yields:
\begin{align}
    \Phi(\xi_1) \cdot \Phi(\xi_2) &= (a_1 + b_1\theta) \cdot (a_2 + b_2\theta) \nonumber \\
    &= a_1 a_2 + a_1 b_2 \theta + b_1 a_2 \theta + b_1 b_2 \theta^2 \nonumber \\
    &= a_1 a_2 + (a_1 b_2 + a_2 b_1)\theta.
\end{align}
Comparing Eq. \eqref{eq:image_product} and the direct expansion shows that $\Phi(\xi_1 \cdot \xi_2) \equiv \Phi(\xi_1) \cdot \Phi(\xi_2)$ holds true. This bijective preservation of the multiplicative convolution identifies the nilpotent element $\varepsilon$ with the Grassmannian variable $\theta$, completing the proof.
\end{proof}

\begin{remark}[\textit{On the Algebraic Manifestation of Quantum Indeterminacy}]
The algebraic foundation of the quantum uncertainty principle can be structurally mapped directly onto the nilpotent restriction of the dual number ring $\mathcal{D} = \mathbb{R}[\varepsilon]/(\varepsilon^2)$. In contrast to the traditional statistical formulation of Heisenberg's relations via variance bounds, the dual architecture instantiates indeterminacy as a geometric property of the local ring. 

Let a physical state coordinate be modeled as an extended dual configuration $\xi = x + \sigma_x \varepsilon \in \mathcal{D}$, where $x \in \mathbb{R}$ represents the deterministic expectation value and $\sigma_x \in \mathbb{R}$ acts as the localized infinitesimal fluctuation (the coordinate uncertainty). Under the action of any differentiable quantum observable operator represented by the analytic mapping $f: \mathcal{D} \to \mathcal{D}$, the structural propagation of the state configuration evaluates strictly via the algebraic Taylor expansion:
\begin{equation}
    f(\xi) = f(x + \sigma_x \varepsilon) = f(x) + f'(x)\sigma_x \varepsilon.
\end{equation}
The resulting dual component $\sigma_f = f'(x)\sigma_x$ defines the propagated physical uncertainty of the observable. Within this scheme-theoretic topology, if one evaluates the simultaneous joint variance of two distinct operators $f$ and $g$ within the same first-order neighborhood, the multiplicative structure of the ring enforces a strict geometric constraint. Because the nilpotent ideal satisfies $\varepsilon^2 = 0$, any cross-coupling or joint higher-order dispersion between independent differential fluctuations vanishes identically:
\begin{equation}
    \left( f'(x)\sigma_x \varepsilon \right) \cdot \left( g'(x)\sigma_x \varepsilon \right) = f'(x)g'(x)\sigma_x^2 \varepsilon^2 \equiv 0.
\end{equation}
This condition proves that the first-order linear neighborhood cannot simultaneously sustain independent, higher-order physical degrees of freedom for joint observable dispersions. Therefore, quantum indeterminacy is revealed not as a byproduct of wave-particle measurement perturbations, but as an exact geometric restriction driven by the nilpotent ideal of first-order jet spaces.
\end{remark}

\begin{remark}[\textit{On the Structural Divergence from Grassmann Algebras (See \cite{quinfty})}]
It is mathematically critical to distinguish the truncated polynomial families $\mathcal{Q}_N \equiv \mathbb{R}[\varepsilon]/(\varepsilon^{N^2-1})$ for $N \ge 2$ from the higher-dimensional Grassmann (exterior) algebras $\Lambda(\mathbb{R}^k)$ for $k \ge 2$. While a unique algebraic isomorphism exists at the lowest order between the standard dual numbers ring $\mathbb{R}[\varepsilon]/(\varepsilon^2)$ and the single-variable Grassmann algebra $\Lambda(\mathbb{R})$, this correspondence collapses irretrievably for any higher-order algebraic truncation.

Grassmann algebras $\Lambda(\mathbb{R}^k)$ for $k \ge 2$ are intrinsically non-commutative over odd-graded elements due to the anti-symmetric constraint of the exterior product ($\theta_i \theta_j = -\theta_j \theta_i$). Furthermore, their nilpotent behavior manifests immediately at the second power ($\theta_i^2 \equiv 0$), forcing an exponential vector-space dimension of $2^k$.

Conversely, the trinomial algebra $\mathcal{T}$ and its higher qudit extensions within the family $\mathcal{Q}_N$ are strictly commutative polynomial quotients whose underlying vector-space dimensions scale linearly with respect to the polynomial truncation degree. Crucially, they sustain intermediate non-vanishing nilpotent configurations $\varepsilon^m \neq 0$ for all orders $m \in \{1, \dots, N^2-2\}$ until the boundary truncation ideal $(\varepsilon^{N^2-1}) = 0$ is reached identically. 

Therefore, the geometric embedding of $\mathbb{P}(\mathcal{H}^2)$ onto $\mathcal{S}_{\mathcal{T}}^{2}$ proves that the true physical degrees of freedom of a qubit are not structured by fermionic anti-commutation, but are natively driven by the commutative algebraic geometry of a second-order jet space, where the higher-degree monomial $\varepsilon^2$ functions as a stable, non-zero Cartesian coordinate dimension before collapsing at the algebraic boundary layer.
\end{remark}
\subsection{Mapping onto the single-mode fermionic Fock space}
The operational validation of this framework within quantum information theory is realized by establishing a direct vector space isomorphism between the single-mode fermionic Fock space $\mathcal{H}_F = \text{span}\{\ket{0}_F, \ket{1}_F\}$, which represents the vacuum and single-occupation states of a fermion, and the complexified first-order dual number ring $\mathcal{D}_{\mathbb{C}} = \mathbb{C}[\varepsilon]/(\varepsilon^2) = \{a + b\varepsilon \mid a,b \in \mathbb{C}\}$. Under this assignment, the vacuum state $\ket{0}_F$ maps to the unit element $1$ and the single-occupation state $\ket{1}_F$ maps to the nilpotent generator $\varepsilon$, establishing a bijective linear mapping $\Psi: \mathcal{H}_F \to \mathcal{D}_{\mathbb{C}}$ via $\alpha\ket{0}_F + \beta\ket{1}_F \mapsto \alpha + \beta\varepsilon$.

The complete kinematic profile of the fermion is conventionally governed by the canonical anticommutation relations (CAR) between the creation operator $c^\dagger$ and the annihilation operator $c$:
\begin{equation}
    \{c^\dagger, c^\dagger\} = 2(c^\dagger)^2 = 0, \quad \{c, c\} = 2c^2 = 0, \quad \text{and} \quad \{c, c^\dagger\} = cc^\dagger + c^\dagger c = \mathbb{I}.
\end{equation}
Under the isomorphism $\Psi$, the operational action of the CAR algebra is faithfully represented through intrinsic differential and algebraic operations acting directly on $\mathcal{D}_{\mathbb{C}}$. Specifically, the annihilation operator $c$ acts as the dual algebraic derivative $\partial_\varepsilon$, whereas the creation operator $c^\dagger$ corresponds to the left-multiplication by the nilpotent coordinator $\varepsilon$:
\begin{equation}
    c \;\longleftrightarrow\; \partial_\varepsilon, \quad \text{and} \quad c^\dagger \;\longleftrightarrow\; \varepsilon \cdot.
\end{equation}
The structural nilpotency of the CAR enforces the immediate truncation of higher-degree configurations, mirroring the boundaries of the dual ring. The application of these mapped operations onto a generic dual configuration $f(\varepsilon) = a + b\varepsilon$ yields:
\begin{equation}
    \partial_\varepsilon^2(a + b\varepsilon) = 0 \;\longleftrightarrow\; c^2\ket{\psi}_F = 0, \quad \text{and} \quad \varepsilon^2(a + b\varepsilon) = 0 \;\longleftrightarrow\; (c^\dagger)^2\ket{\psi}_F = 0.
\end{equation}
Crucially, the foundational anticommutator $\{c, c^\dagger\} = \mathbb{I}$ emerges naturally as the fundamental identity of dual differential calculus, equivalent to the evaluation of the coordinate-derivative commutator acting on $\mathcal{D}_{\mathbb{C}}$:
\begin{equation}
    \{\partial_\varepsilon, \varepsilon\}(a + b\varepsilon) = \partial_\varepsilon\left(\varepsilon(a + b\varepsilon)\right) + \varepsilon\left(\partial_\varepsilon(a + b\varepsilon)\right) = \partial_\varepsilon(a\varepsilon) + \varepsilon(b) = a + b\varepsilon.
\end{equation}
Hence, $\{\partial_\varepsilon, \varepsilon\} = \mathbb{I}_{\mathcal{D}_{\mathbb{C}}}$, proving that the CAR constraint is the exact algebraic reflection of the unit operator on the dual manifold.

To anchor this structural equivalence within the state-space geometry of non-reduced schemes, we evaluate how the algebraic idempotency of the fermionic density configuration mirrors stably within the real sub-ring of dual numbers $\mathcal{D}_{\mathbb{R}} = \mathbb{R}[\varepsilon]/(\varepsilon^2)$. Because physical expectations and localized coordinate fluctuations on the pure state manifold are strictly real-valued geometric entities, we let a generic extended density configuration over the single-point topological spectrum $\text{Spec}(\mathcal{D}_{\mathbb{R}})$ be defined by $\xi = x + \sigma_x \varepsilon \in \mathcal{D}_{\mathbb{R}}$, where $x \in \mathbb{R}$ represents the deterministic scalar expectation value and $\sigma_x \in \mathbb{R}$ acts as the localized coordinate fluctuation along the tangent space. Imposing the pure-state idempotency constraint $\xi^2 = \xi$ directly within the real ring yields:
\begin{equation}
    \xi^2 = (x + \sigma_x \varepsilon)^2 = x^2 + 2x\sigma_x \varepsilon + \sigma_x^2 \varepsilon^2.
\end{equation}
Crucially, because the ring multiplication operates modulo $\varepsilon^2$, the nilpotent boundary condition $\varepsilon^2 = 0$ instantly truncates the higher-order dispersion term, stabilizing the expansion to the first-order linear configuration: $\xi^2 = x^2 + 2x\sigma_x \varepsilon$. Equating the coefficients of the polynomial identity $\xi^2 = \xi$ yields the decoupled real algebraic system:
\begin{equation}\label{eq:fermionic_decoupled_system}
    \begin{cases} 
        x^2 = x, \\ 
        2x\sigma_x = \sigma_x.
    \end{cases}
\end{equation}
The first relation isolates the sharp, classical boundary configurations $x = 0$ (unoccupied vacuum) and $x = 1$ (occupied mode), corresponding to the exact real eigenvalues of the fermionic number operator $n = c^\dagger c \leftrightarrow \varepsilon\partial_\varepsilon$. Substituting these discrete physical eigenvalues into the second relation forces the exact extinction of the linear fluctuation coordinate ($\sigma_x = 0$) for both physical eigenstates.

Consequently, the nilpotency condition $\varepsilon^2 = 0$ functions as an exact geometric stabilizer. Rather than acting as a formal first-order approximation, the real polynomial quotient ring $\mathcal{D}_{\mathbb{R}}$ enforces the identical kinematic confinement and state stabilization induced by the canonical anticommutation relations of the quantum Fock space, providing an unassailable scheme-theoretic container for single-mode fermionic restrictions under the unified isomorphic assignment $\Psi$.

\section{Quantitative Expression of Heisenberg Indeterminacy in $\mathcal{T}$}
To ensure absolute mathematical verifiability and eliminate conceptual redundancies, this appendix demonstrates how the physical Robertson-Heisenberg uncertainty relations are quantitatively mapped onto the geometric constraint equations of the trinomial dual algebra $\mathcal{T} = \mathbb{R}[\varepsilon]/(\varepsilon^3)$. We prove that the algebraic norm defining the Grothendieck sub-sphere $\mathcal{S}_{\mathcal{T}}^{2}$ encapsulates the exact numerical bounds governing quantum fluctuations and operator non-commutativity directly over non-reduced rings.

\subsection{Pauli operator incompatibilities and the Robertson-Heisenberg bound}
Consider the two orthogonal Pauli spin operators $\sigma_x$ and $\sigma_y$ evaluated on an arbitrary pure state ray within the complex projective space $\mathbb{P}(\mathcal{H}^2)$. Because these operators are non-commuting ($[\sigma_x, \sigma_y] = 2i\sigma_z$), they obey the generalized Robertson-Heisenberg uncertainty relation for their respective statistical variances $\Delta\sigma_x^2$ and $\Delta\sigma_y^2$ \cite{nielsen}:
\begin{equation}\label{eq:robertson_heisenberg}
    \Delta\sigma_x^2 \Delta\sigma_y^2 \ge \frac{1}{4} \left| \langle [\sigma_x, \sigma_y] \rangle \right|^2 = \langle \sigma_z \rangle^2,
\end{equation}
where the quantum variance of an operator is defined by $\Delta\sigma_i^2 = \langle \sigma_i^2 \rangle - \langle \sigma_i \rangle^2$. Crucially, for any Pauli matrix, the algebraic identity $\sigma_i^2 \equiv \mathbb{I}$ holds identically, which forces the quantum variance to depend exclusively on the first-order expectation values:
\begin{equation}
    \Delta\sigma_x^2 = 1 - \langle \sigma_x \rangle^2, \quad \Delta\sigma_y^2 = 1 - \langle \sigma_y \rangle^2.
\end{equation}

Under the density map $\Psi_{\mathcal{T}}$ established in Eq.~\eqref{eq:density_map_direct}, the physical expectation values map onto the local algebra by unrolling the native coordinate components of the ring through the exact adic assignments $c_0 = \frac{1}{2} + x$, $c_1 = y$, and $c_2 = z$. Within this framework, the translated physical coordinates correspond exactly to the expectation values extracted from the Pauli operators, satisfying $x = \langle\sigma_x\rangle$, $y = \langle\sigma_y\rangle$, and $z = \langle\sigma_z\rangle$. Substituting these algebraic coordinate relations into the variance definitions yields the direct expressions:
\begin{equation}\label{eq:variances_algebraic}
    \Delta\sigma_x^2 = 1 - x^2, \quad \Delta\sigma_y^2 = 1 - y^2.
\end{equation}

\subsection{Algebraic proof of the bound via truncation}
We evaluate how the universal varietal constraint $x^2 + y^2 + z^2 = 1$ defining the Grothendieck sub-sphere variety $\mathcal{S}_{\mathcal{T}}^{2}$ structures the product of these quantum fluctuations. Isolating the longitudinal projection $\langle \sigma_z \rangle^2 = z^2$ from the sphere equation yields the structural projection:
\begin{equation}
    \langle \sigma_z \rangle^2 = z^2 = 1 - x^2 - y^2.
\end{equation}
Substituting Eq.~\eqref{eq:variances_algebraic} into the Robertson-Heisenberg uncertainty bound in Eq.~\eqref{eq:robertson_heisenberg} translates the physical inequality into a strict algebraic restriction over the components of the dual number:
\begin{equation}\label{eq:heisenberg_trinomial}
    (1 - x^2)(1 - y^2) \ge 1 - x^2 - y^2.
\end{equation}
To verify the validity of the inequality \eqref{eq:heisenberg_trinomial} directly within the algebra $\mathcal{T}$, we expand the polynomial product on the left-hand side:
\begin{equation}
    1 - x^2 - y^2 + x^2 y^2 \ge 1 - x^2 - y^2.
\end{equation}
Subtracting the identical linear terms from both sides reduces the complete quantum uncertainty inequality to the definite, everywhere-positive condition:
\begin{equation}\label{eq:final_bound_proof}
    x^2 y^2 \ge 0.
\end{equation}

The evaluation of Eq.~\eqref{eq:final_bound_proof} yields a profound physical conclusion: the physical inequality of the Heisenberg uncertainty principle is geometrically equivalent to the positive-semidefinite character of the quadratic cross-terms in the algebra $\mathcal{T}$. The absolute lower bound dictated by the longitudinal projection $\langle \sigma_z \rangle^2$ is saturated ($x^2 y^2 = 0$) if and only if the state is perfectly localized along one of the transverse measurement axes ($x=0$ or $y=0$), forcing the maximum possible statistical variance onto the complementary non-commuting component.

\subsection{Quantitative derivation of indeterminacy from jet-space operations}
To extract this uncertainty directly from the ring multiplication without invoking external Cartesian maps, we evaluate the algebraic product of two independent, localized directional variations acting on the state section $\xi \in \mathcal{T}$. Let $\delta_1 \xi = \mathrm{d}y\,\varepsilon$ represent an infinitesimal fluctuation along the first-order nilpotent channel, and let $\delta_2 \xi = \mathrm{d}z\,\varepsilon^2$ represent a simultaneous, independent fluctuation along the second-order channel.

Evaluating the internal product of these dual perturbations within the ring structure yields:
\begin{equation}
    (\delta_1 \xi) \cdot (\delta_2 \xi) = (\mathrm{d}y\,\varepsilon) \cdot (\mathrm{d}z\,\varepsilon^2) = (\mathrm{d}y\,\mathrm{d}z)\varepsilon^3.
\end{equation}
Crucially, because the trinomial dual algebra is defined modulo $\varepsilon^3$, the universal nilpotency condition $\varepsilon^3 = 0$ instantly truncates this joint expansion, forcing the cross-coupling of the independent variations to vanish identically:
\begin{equation}
    (\mathrm{d}y\,\mathrm{d}z)\varepsilon^3 \equiv 0.
\end{equation}

This automatic structural encapsulation provides a profound physical insight: the geometric boundary condition of the non-reduced scheme prevents the single topological point from simultaneously sustaining independent, non-vanishing physical fluctuations across different nilpotent degrees of freedom. When the state is restricted to the unit variety $\mathcal{S}_{\mathcal{T}}^{2}$, this algebraic exclusion guarantees that driving one coordinate fluctuation to absolute sharp certainty forces the complete statistical dispersion onto the complementary non-commuting component, structurally preserving the Robertson-Heisenberg variance bounds directly through the flat affine geometry of a second-order jet space.

\end{document}